\def\BibTeX{{\rm B\kern-.05em{\sc i\kern-.025em b}\kern-.08em
    T\kern-.1667em\lower.7ex\hbox{E}\kern-.125emX}}
\declaretheorem{theorem}
\declaretheorem[name=Lemma, style=theorem]{lemma}
\declaretheorem[name=Proposition, style=theorem]{proposition}
\declaretheorem[name=Corollary, style=theorem]{corollary}
\declaretheorem[name=Definition, style=definition, qed=$\blacksquare$]{definition}
\declaretheorem[name=Example, style=definition, qed=$\blacksquare$]{example}
\declaretheorem[name=Conjecture, style=theorem]{conjecture}
\begin{document}

\newcommand{\Nat}{\mathbb{N}}
\newcommand{\ME}{\text{ME}}
\newcommand{\MEF}{\text{MEF}}
\newcommand{\project}{\operatorname{\downarrow}}
\renewcommand{\L}{\mathcal{L}}
\newcommand{\lequiv}[1]{\approx_{#1}}
\newcommand{\Pow}[1]{\mathcal P(#1)}
\newcommand{\Bool}{\texttt{Bool}}
\newcommand{\rel}[1]{\llbracket #1 \rrbracket}
\newcommand{\sem}[1]{{#1}}
\newcommand{\many}[1]{\overline{#1}}
\renewcommand{\iff}{\Leftrightarrow}
\renewcommand{\implies}{\Rightarrow}
\newcommand{\subst}[3]{[{#2}/{#1}]{#3}}
\newcommand{\closed}[1]{\texttt{closed}({#1})}
\newcommand{\return}{\texttt{return}}
\newcommand{\Sec}{\texttt{Sec}}
\newcommand{\DCC}{\texttt{DCC}}
\newcommand{\up}{\texttt{up}}
\newcommand{\True}{\texttt{true}}
\newcommand{\False}{\texttt{false}}
\newcommand{\If}{\texttt{if}}
\newcommand{\bind}{\texttt{bind}}
\newcommand{\pair}{\texttt{pair}}
\newcommand{\either}{\texttt{either}}
\newcommand{\case}{\texttt{case}}
\newcommand{\canFlowTo}{\sqsubseteq}
\renewcommand{\S}{S}
\newcommand{\D}{D}
\newcommand{\below}{\preceq}
\newcommand{\safe}{\texttt{safe}}
\newcommand{\map}{\texttt{map}}
\newcommand{\dyn}{\texttt{toDynamic}}
\newcommand{\stat}{\texttt{toStatic}}
\newcommand{\red}{\longrightarrow}
\newcommand{\lub}{\sqcup}
\newcommand{\exFalso}{\texttt{ex-falso}}
\newcommand{\cs}{{\rel{S}}}
\newcommand{\CC}{\texttt{CC}}
\newcommand{\Obj}{\texttt{F}_{\omega o}}
\newcommand{\ObjBase}{\texttt{F}_{\omega}}
\newcommand{\elim}{\texttt{-elim}}
\newcommand{\prot}{\texttt{protect}}
\newcommand{\public}{\texttt{public}}
\newcommand{\unit}{\texttt{<>}}
\newcommand{\trans}{\texttt{trans}}
\newcommand{\join}{\texttt{join}}
\newcommand{\of}{\texttt{of}}
\newcommand{\translates}{\hookrightarrow}
\newcommand{\com}{\texttt{com}}
\newcommand{\defined}[2]{#1(#2)^\checkmark}
\newcommand{\notdefined}[2]{#1(#2)^\times}
\newcommand{\refinedBy}{\sqsubseteq}
\newcommand{\TX}{\ensuremath{\tau}}
\newcommand{\pto}{\to_p}
\newcommand{\inl}[1]{\texttt{inl}(#1)}
\newcommand{\inr}[1]{\texttt{inr}(#1)}
\newcommand{\MT}{\text{MT}}
\newcommand{\coMT}{\text{coMT}}
\newcommand{\TI}{\text{TI}}
\newcommand{\TS}{\text{TS}}
\newcommand{\Total}{\text{Total}}
\newcommand{\METI}{\text{MEST}}
\newcommand{\udef}{\text{diverge}}
\newcommand{\Alice}{\text{Alice}}
\newcommand{\Bob}{\text{Bob}}
\newcommand{\Charlie}{\text{Charlie}}
\newcommand{\LOW}{\ensuremath{{L}}}
\newcommand{\MED}{\ensuremath{{M}}}
\newcommand{\HIGH}{\ensuremath{{H}}}
\newcommand{\inputify}{\text{inputify}}
\newcommand{\outputify}{\text{outputify}}
\newcommand{\nameify}[1]{\textit{#1}}
\newcommand{\divIfPre}{\nameify{divergeIfHPresent}}
\newcommand{\divIfAbs}{\nameify{divergeIfHAbsent}}
\newcommand{\id}{\nameify{id}}
\newcommand{\leakAll}{\nameify{leakAll}}
\newcommand{\leakBit}{\nameify{leakBit}}
\newcommand{\terminationLeak}{\nameify{termLeak}}
\newcommand{\divOnLow}{\nameify{divergeIfLPresent}}
\newcommand{\salaryAverages}{\nameify{salaryAverages}}
\newcommand{\combine}{\nameify{combine}}
\newcommand{\leakLevel}{\nameify{leakLevel}}
\newcommand{\combineAll}{\nameify{combineAll}}
\newcommand{\vlinesep}[2]{\smash{\vrule width .5pt depth #1 height #2}}
\newcommand{\defAs}{\ensuremath{\triangleq}}
\newcommand{\LA}{{\mathbb{L}}}
\newcommand{\divIfEmpty}{\nameify{divergeIfEmpty}}
\newcommand{\nums}{\#}
\newcommand{\toInput}{\nameify{toInput}}

\title{Transparent IFC Enforcement:\\Possibility and (In)Efficiency Results
\thanks{The first author was partially supported by the Wallenberg AI,
Autonomous Systems and Software Program (WASP) funded by the Knut and Alice
Wallenberg Foundation and The Osher Endowment's through project Ghost:
Exploring the Limits of Invisible Security. The second author was partially
supported by NSF grant CCF-1813133
}
}

\author{\IEEEauthorblockN{Maximilian Algehed}
\IEEEauthorblockA{\textit{Computer Science and Engineering} \\
\textit{Chalmers}\\
G\"oteborg, Sweden \\
\texttt{algehed@chalmers.se}}
\and
\IEEEauthorblockN{Cormac Flanagan}
\IEEEauthorblockA{\textit{Computer Science and Engineering} \\
\textit{University of California Santa Cruz}\\
Santa Cruz, USA \\
\texttt{cormac@ucsc.edu}}
}

\maketitle

\begin{abstract}
  Information Flow Control (IFC) is a collection of techniques for ensuring
a no-write-down no-read-up style security policy known as \emph{noninterference}.
Traditional methods for both static (e.g. type systems) and dynamic (e.g.
runtime monitors) IFC suffer from untenable numbers of false alarms on
real-world programs.
Secure Multi-Execution (SME) promises to provide secure information flow
control without modifying the behaviour of already secure programs, a property
commonly referred to as \emph{transparency}.
Implementations of SME exist for the web in the form of the FlowFox browser and
as plug-ins to several programming languages.
Furthermore, SME can in theory work in a black-box manner, meaning that it can
be programming language agnostic, making it perfect for securing legacy or
third-party systems.
As such SME, and its variants like Multiple Facets (MF) and Faceted Secure
Multi-Execution (FSME), appear to be a family of panaceas for the security engineer.
The question is, how come, given all these advantages, that these techniques
are not ubiquitous in practice?

The answer lies, partially, in the issue of runtime and memory overhead.
SME and its variants are prohibitively expensive to deploy in many non-trivial
situations.
The natural question is \emph{why} is this the case?
On the surface, the reason is simple.
The techniques in the SME family all rely on the idea of
\emph{multi-execution}, running all or parts of a program multiple times to
achieve noninterference.
Naturally, this causes some overhead.
However, the predominant thinking in the IFC community has been that these
overheads can be overcome.
In this paper we argue that there are fundamental reasons to expect this
not to be the case and prove two key theorems:
\begin{itemize}
  \item All transparent enforcement is polynomial time equivalent to multi-execution.
  \item All black-box enforcement takes time exponential in the number of principals
    in the security lattice.
\end{itemize}
Our methods also allow us to answer, in the affirmative, an open question about
the possibility of secure and transparent enforcement of a security condition
known as Termination Insensitive Noninterference.

\end{abstract}

\begin{IEEEkeywords}
  Secure Multi-Execution, Information Flow Control, Noninterference,
  Transparency, Black-Box, White-Box, Efficiency
\end{IEEEkeywords}

\section{\label{sec:introduction} Introduction}
Language-Based Information Flow Control (IFC) \cite{sabelfeld2003language,
Denning,FlowCaml,Jif} is a promising technology for securing systems
against malicious third-party code.
Approaches to IFC typically appear in the form of either a type system
\cite{Jif, DCC, FlowCaml} or a custom programming language semantics \cite{LIO,
MF}.
Traditionally, these approaches either statically or dynamically detect
behaviour that violates the security criteria of \emph{Noninterference}
\cite{goguen1982security,mclean1992proving,zdancewic2003observational,sabelfeld2003language}
(secrets cannot influence attacker-observable
behaviour) and either raise a type or runtime error.
However, in order to be sound, even in the state-of-the-art implementations,
these techniques need to be conservative and therefore suffer from unmanageable
numbers of false alarms \cite{King, staicu2019empirical}.

To remedy this issue, techniques have recently been developed for ensuring
so called \emph{transparent} IFC
\cite{SME, MF, FSME, OGMF, OptimisingFSME, zanarini2013precise}.
This line of work promises to provide security without raising false alarms.
Because precisely detecting violations of noninterference is impossible
\cite{Denning} to do both statically and dynamically, methods for transparent
IFC instead silently modify programs to ensure noninterference by construction.

All known techniques for providing transparent IFC share one feature, they
are all based on multi-execution.
This technique was pioneered by Devriese and Piessens \cite{SME} in their seminal
paper on Secure Multi-Execution (SME).
Under SME, the program being made secure is run once for each security level 
with carefully adapted inputs to ensure noninterference while removing false
alarms.

To see an example of how this works, consider the two security levels $\LOW$
(for Low or Public) and $\HIGH$ (for High or Secret).
Noninterference stipulates that low input is allowed to influence high output,
but not the other way around.
Under SME, a program $p$ that takes both high and low inputs and produces both
high and low output is run twice; one run of $p$ is given only the low input
and produces the low part of the output, and the other run of $p$ is given both
low and high input and contributes only the high output: see Figure \ref{fig:ME}.

SME is a black-box enforcement mechanism, it does not need access to the
source code of $p$ to work.
The ability to secure any program in a black-box manner is powerful and has
potential applications in many areas, including databases
\cite{yang2016precise}, legacy code \cite{pfeffer2019efficient}, and browsers
\cite{de2012flowfox}.

The black-box property of SME is shared by some \cite{rafnsson2016secure,
zanarini2013precise}, but not all \cite{OGMF, FSME, MF} transparent enforcement
mechanisms in the literature.
However, the idea of multi-execution, running the same code multiple times with
slightly different inputs, is shared among all known mechanisms in one form or
another.
Consequently, these mechanisms, no matter where they lie on the scale from
black-box to white-box, suffer severe performance penalties as the number of
multi-executions grows \cite{MF, FSME, OptimisingFSME}.

In this paper, we provide a unifying formal, extensional, framework for
studying multi-execution that is sufficiently expressive to formulate and prove
a number of theorems that explain:
\begin{enumerate}
  \item Why all transparent enforcement mechanisms rely on multi-execution,
  \item Why all transparent black-box enforcement mechanisms are inefficient
    (as seen empirically in previous work, e.g. \cite{MF, FSME, OptimisingFSME, wong2018faster})
\end{enumerate}

Along the way, we use our framework to answer the open question of whether
or not Termination Insensitive Noninterference can be transparently enforced.
It turns out that it is possible, but inefficient.

Concretely, we provide the following contributions:
\begin{itemize}
  \item A novel yet simple framework for reasoning about secure programs
    and enforcement mechanisms (Section \ref{sec:def-enforcement-mechanisms}).
  \item A precise characterisation of the transparency guarantees provided
    by multi-execution (Section \ref{sec:multi-execution}).
  \item A black-box version of secure multi-execution for infininte
    lattices (Section \ref{sec:multi-execution}).
  \item A novel enforcement mechanism that is sound and transparent
    for Termination Insensitive Noninterference (Section \ref{sec:TITI}).
  \item A general framework for optimising multi-execution (Section
    \ref{sec:level-assignments}) in which we prove a number of results
    including conditions for safe and transparent optimisation.
  \item A proof that any secure and transparent enforcement is poly-time
    equivalent to a multi-execution based enforcement mechanism (Section
    \ref{sec:level-assignments}).
  \item A proof that all secure and transparent black-box enforcement mechanisms
    are inefficient (Section \ref{sec:efficiency}).
\end{itemize}

\begin{figure}[t]
  \centering
\begin{tikzpicture}
  \draw (-3, -2) rectangle (3, 2);      

  \draw [line width = 0.25mm, red,  ->] (-4,  1.25) -- node [above] {\HIGH} (-3,  1.25); 
  \draw [line width = 0.25mm, red,  ->] (3,   1.25) -- node [above] {\HIGH} (4,   1.25); 

  \draw [line width = 0.25mm, blue, ->] (-4, -1.25) -- node [below] {\LOW}  (-3, -1.25); 
  \draw [line width = 0.25mm, blue, ->] (3,  -1.25) -- node [below] {\LOW}  (4,  -1.25); 

  \draw (-0.75,  0.5) rectangle (0.75,  1.5) node[pos=0.5]{$p$}; 
  \draw (-0.75, -0.5) rectangle (0.75, -1.5) node[pos=0.5]{$p$}; 

  \draw [line width = 0.25mm, red,  ->] (-3,  1.25) -- (-0.75,  1.25); 
  \draw [line width = 0.25mm, blue, ->] (-3, -1.25) -- (-0.75, -1.25); 

  \draw [line width = 0.25mm, blue, ->] (-3, -1.25) .. controls (-0.75, -1.25) and (-3,  0.75) .. (-0.75, 0.75); 
  \draw [line width = 0.25mm, black!45, ->] (-1.5, -0.75) -- (-0.75, -0.75);

  \draw [line width = 0.25mm, red,  ->] (0.75,  1.25) -- (3,  1.25); 
  \draw [line width = 0.25mm, blue, ->] (0.75, -1.25) -- (3, -1.25); 

  \draw [line width = 0.25mm, black!45, >=angle 90, ->] (0.75, -0.75) -- (1.40, -0.75); 
  \draw [line width = 0.25mm, black!45, >=angle 90, -<] (0.75, -0.75) -- (1.5, -0.75); 
  \draw [line width = 0.25mm, black!45, >=angle 90, ->] (0.75, 0.75)  -- (1.40,  0.75); 
  \draw [line width = 0.25mm, black!45, >=angle 90, -<] (0.75, 0.75)  -- (1.5,  0.75); 
\end{tikzpicture}
  \caption{\normalsize \label{fig:ME} Multi-Execution of the program $p$ for the two-point lattice.}
\end{figure}

\section{\label{sec:framework} An Extensional Framework for Secure Information Flow}
In this section we develop an extensional framework for reasoning about secure
information flow.
The goal is to create a simple, yet flexible, context in which we can reason both
about \emph{possibility} of secure and transparent enforcement as well as
\emph{efficiency}.

\subsection*{\textbf{Programs with Labeled Inputs and Outputs}}

As is usual in IFC research, we assume a join semi-lattice $\langle \L,
\canFlowTo, \bot, \lub \rangle$ that encodes security labels in $\L$ (we use
the words label and level interchangeably), the permitted flows between them by
the order $\_\canFlowTo\_ \subseteq \L \times \L$, the least privileged label
with $\bot \in \L$, and a least-upper-bound operation $\_\lub\_ : \L \times \L \to \L$.
We let the variables $\ell$, $\ell_i$, $\ell'$, $\jmath$, $\jmath_i$, $\jmath'$
etc. range over elements of $\L$ and we write $a^\ell$ for the pair $(a, \ell)$.

We work in a non-interactive setting, so a program takes input and eventually
produces output (or diverges).
The input (and output) is typically a compound data-structure consisting of
various ``pieces'', where each piece can be annotated with its own security label.
For example, the input could be a set of files, where each file has a name, contents,
and a security label.
To model this kind of setting in a general way we assume that both the
input and the output is a set of labeled data, and so we formalise the semantics
of programs as partial recursive functions $p : \Pow{A \times \L} \pto \Pow{B \times \L}$.

Here, $\Pow{\bullet}$ denotes the power-set constructor and $X \pto Y$ denotes
\emph{partial} functions from $X$ to $Y$, while $X \to Y$ denotes total
functions.
One benefit of working with this ``sets of labeled data'' formalism is that
it is simple to remove high-security information from an input or
output $x \in \Pow{A \times \L}$ via the following so-called \emph{$\ell$-projection}
operation:
\begin{align*}
  x \project \ell &= \{\ a^{\jmath}\ |\ a^{\jmath} \in x, \jmath \canFlowTo \ell \}\\
  \intertext{Similarly, we can collect the \emph{labels of $x$} as:}
  \L(x) &= \{\ \ell\ |\ a^\ell \in x \}\\
  \intertext{And finally, we define the \emph{$\ell$-selection of $x$}, i.e.
  all the elements of $x$ at level $\ell$, as:}
  x@\ell &= \{\ a^\ell\ |\ a^\jmath \in x,\ \jmath = \ell \}
\end{align*}

We refer to the \emph{syntax} of $p$ simply as $p$
and the \emph{semantics} of $p$ as $p(x)$ for some input $x$.
Because the choice of programming language is orthogonal to our purposes, we
keep it abstract.
We write $\defined{p}{x}$ when $p(x)$ is defined and write $\notdefined{p}{x}$ to mean
that $p(x)$ is undefined (i.e. that $p$ diverges on input $x$).

Next we formalise what we mean by noninterfering programs.
Intuitively, a program $p$ is noninterfering if, given inputs $x$ and $y$ that the attacker
considers observably equivalent, $p$ produces outputs $p(x)$ and $p(y)$ that the
attacker also considers observably equivalent.
\begin{definition}
  We call $x, y \in \Pow{A \times \L}$ \emph{$\ell$-equivalent}, written $x \sim_\ell y$, if and
  only if the $\ell$-projection of $x$ and $y$ are the same.
  $$
  x \sim_\ell y\ \defAs\ x \project \ell = y \project \ell
  $$
  A program $p$ is \emph{noninterfering} if, given any $\ell$ and two $x$ and $y$
  that are $\ell$-equivalent such that both $p(x)$ and $p(y)$ are defined, we have that
  $p(x)$ is $\ell$-equivalent to $p(y)$.
  $$
  p\ \text{is noninterfering} \defAs
  $$
  $$
  \forall \ell.\forall x, y.\ 
  x \sim_\ell y \wedge \defined{p}{x} \wedge \defined{p}{y} \implies p(x) \sim_\ell p(y)
  $$
\end{definition}

\begin{example}
  \label{ex:running}
  We present a number of example programs that serve as running
  examples throughout the paper.
  First is the program $\id$, the identity function, which is noninterfering.
  \begin{align*}
    \id(x) &\defAs x\\
    \intertext{The next program combines information from multiple
    security labels, but is still noninterfering since the label on
    the output appropriately reflects the dependencies on the labeled
    input. We use $|x|$ to denote the size of the set $x$.}
    \combine(x) &\defAs \{ (|x@\Alice| + |x@\Bob|)^{\Alice \lub \Bob \lub \Charlie} \}\\
    \intertext{Labels $\Alice$, $\Bob$, and $\Charlie$ are notational shorthand
    for labels $\{\Alice\}$, $\{\Bob\}$ and $\{\Charlie\}$ in the power-set lattice $\Pow{P}$
    for some set of \emph{principals} $P$ which contains $\Alice$, $\Bob$, and $\Charlie$.
    \linebreak
    \indent The program \combineAll~below is similar to \combine, but is interfering because
    it combines everything, rather than a specific subset of the inputs.}
    \combineAll(x) &\defAs \{|x|^{\bigsqcup\L(x)}\}\\
    \intertext{To see that \combineAll~is interfering, consider that $\emptyset \sim_\LOW \{0^\HIGH\}$ but:}
    \combineAll(\emptyset) &= \{0^\LOW\} \not\sim_\LOW \{1^\HIGH\} = \combineAll(\{0^\HIGH\})\\
    \intertext{Another example of an interfering program is $\leakBit$, which
    leaks one bit in the $\HIGH$ input to $\LOW$.}
    \leakBit(x) &\defAs \text{if}\ 1^\HIGH \in x\ \text{then}\ \{1^\LOW\}\ \text{else}\ \{0^\LOW\}\\
    \intertext{The program $\leakAll$ below takes $x$ and re-labels
    everything in the input to $\bot$; it is also interfering.}
    \leakAll(x) &\defAs \{\ a^\bot\ |\ a^\ell \in x\ \}\\
    \intertext{Finally, the program $\terminationLeak$ below leaks information via termination,
    but is still noninterfering because our definition of noninterference does not consider the
    termination channel.}
    \terminationLeak(x) &\defAs\ \text{if}\ 1^\HIGH \in x\ \text{then}\ \emptyset\ \text{else}\ \udef
  \end{align*}
  To summarise, programs $\id$, $\combine$ and $\terminationLeak$ are
  noninterfering, while $\combineAll$, $\leakBit$ and $\leakAll$ are not.
\end{example}


\subsection*{\label{sec:def-secure-programs} \textbf{Secure Programs and Termination Criteria}}

The notion of noninterference introduced above does not prevent leaks through
the termination channel, as illustrated by the $\terminationLeak$ example,
and so captures what is normally called ``Termination Insensitive''
noninterference \cite{hedin2012perspective}.
The prevention of termination leaks is a challenging topic for IFC enforcement, so
to study it we introduce the following four \emph{termination criteria}.
\begin{definition}[Termination Criteria]
  \label{def:termination}
  ~
  \begin{itemize}
    \item All programs are \emph{Termination Insensitive} ($\TI$).
    \item Program $p$ is \emph{Monotonically Terminating} ($\MT$) if and
      only if whenever $p(x)$ is defined, so is $p(x\project\ell)$ for all $\ell$.
      $$
      p\ \text{is}\ \MT\ \defAs\ \forall x.\ \defined{p}{x}\ \implies \forall \ell. \defined{p}{x\project\ell}
      $$
    \item Program $p$ is \emph{Termination Sensitive} ($\TS$)
      if and only if for all $\ell$ and $\ell$-equivalent $x$ and $y$,
      $p(x)$ is defined if and only if $p(y)$ is.
      $$
      p\ \text{is}\ \TS\ \defAs\ \forall \ell. \forall x, y. x \sim_\ell y \implies (\defined{p}{x} \iff \defined{p}{y})
      $$
    \item $p$ is $\Total$ if it always terminates.
      $$
      p\ \text{is}\ \Total\ \defAs\ \forall x. \defined{p}{x}
      $$
  \end{itemize}
  
\end{definition}
We let the meta-variable \TX~range over termination criteria.
$$\TX ::= \TI\ |\ \MT\ |\ \TS\ |\ \Total$$

The termination criteria \TS~and \TI~are standard from the IFC literature
\cite{hedin2012perspective, bielova2016spot, ngo2018impossibility}, and
the notion of total programs is also a natural termination criteria.
Our new criteria, \MT, is motivated by the termination requirements of SME,
as we discuss below.
\MT~is a weaker requirement than \TS, but stronger than \TI, which results in the
following ordering of termination criteria.
\begin{proposition}
  \label{prop:termination-ordering}
  The following chain of implications is strict.
  \begin{align*}
              &p\ \text{is \Total}\\
    \implies\ &p\ \text{is \TS}\\
    \implies\ &p\ \text{is \MT}\\
    \implies\ &p\ \text{is \TI}
  \end{align*}
\end{proposition}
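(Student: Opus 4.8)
The plan is to prove the three implications in order (observing that the last one is immediate), and then supply one separating program per link to establish strictness. All of this can be done inside the two-point lattice $\{\LOW,\HIGH\}$ with $\HIGH\not\canFlowTo\LOW$.

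For $\Total\implies\TS$: if $p$ is total then $\defined{p}{x}$ holds for every $x$, so for any $\ell$ and any $\ell$-equivalent $x,y$ both $\defined{p}{x}$ and $\defined{p}{y}$ hold and the biconditional $\defined{p}{x}\iff\defined{p}{y}$ is trivially satisfied. For $\TS\implies\MT$ the key observation is that $\ell$-projection is idempotent, $(x\project\ell)\project\ell=x\project\ell$, so that $x\sim_\ell(x\project\ell)$ always holds; applying the $\TS$ hypothesis to this pair yields $\defined{p}{x}\iff\defined{p}{x\project\ell}$ for every $\ell$, which is (more than) what $\MT$ demands. Finally $\MT\implies\TI$ is vacuous, since every program is $\TI$ by definition.

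For strictness I would exhibit: (i) the everywhere-diverging program $p_\bot$ with $p_\bot(x)\defAs\udef$, which is $\TS$ — both sides of the biconditional are always false — but clearly not $\Total$; (ii) the program $\divIfPre(x)\defAs\text{if }1^\HIGH\in x\text{ then }\udef\text{ else }\emptyset$, which is $\MT$ because its set of terminating inputs (those $x$ with $1^\HIGH\notin x$) is downward closed under $\subseteq$ while $x\project\ell\subseteq x$, yet is not $\TS$ since $\{1^\HIGH\}\sim_\LOW\emptyset$ but $\divIfPre(\{1^\HIGH\})$ diverges and $\divIfPre(\emptyset)=\emptyset$; and (iii) the program $\terminationLeak$ from Example~\ref{ex:running}, which is $\TI$ (as is every program) but not $\MT$, witnessed by $\defined{\terminationLeak}{\{1^\HIGH\}}$ together with $\terminationLeak(\{1^\HIGH\}\project\LOW)=\terminationLeak(\emptyset)=\udef$.

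There is essentially no serious obstacle here; the only points needing a moment's care are the idempotence of projection underpinning $\TS\implies\MT$, and verifying that the witness $\divIfPre$ genuinely is $\MT$, which rests on the monotonicity fact $x\project\ell\subseteq x$ rather than on anything about the security order.
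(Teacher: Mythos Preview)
Your proposal is correct and matches the paper's approach: the paper does not spell out a proof for this proposition but relies on the definitions for the implications and on the running examples (Table~\ref{table:example-programs} and Example~\ref{ex:running-termination}) as separating witnesses. Your only deviation is using the everywhere-diverging program instead of the paper's $\divOnLow$ for the $\TS$-but-not-$\Total$ step, which is an equally valid (and arguably simpler) choice.
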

Recall that an implication $P \implies Q$ is \emph{strict} if it is
\emph{not} the case that $P \iff Q$.
Using the termination criteria, we define a family of notions of secure program
in a way that separates termination and noninterference.
\begin{definition}[\TX-secure]
  A program is \emph{\TX-secure} if and only if it is $\TX$ and noninterfering.
\end{definition}
The \TX-secure criteria are naturally ordered the same way as the termination criteria.
\begin{proposition}
  \label{prop:security-ordering}
  The following chain of implications is strict.
  \begin{align*}
              &p\ \text{is \Total-secure}\\
    \implies\ &p\ \text{is \TS-secure}\\
    \implies\ &p\ \text{is \MT-secure}\\
    \implies\ &p\ \text{is \TI-secure}
  \end{align*}
\end{proposition}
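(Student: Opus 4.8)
The plan is to obtain the three forward implications essentially for free from Proposition~\ref{prop:termination-ordering}, and then to supply a separating example at each link of the chain. Since ``$p$ is $\TX$-secure'' unfolds to ``$p$ is $\TX$ \emph{and} $p$ is noninterfering'', and Proposition~\ref{prop:termination-ordering} already establishes $\Total \implies \TS \implies \MT \implies \TI$ for the termination criteria in isolation, any program that is, say, $\TS$-secure is noninterfering and $\TS$, hence $\MT$, hence $\MT$-secure; the other two implications are identical. So the forward direction is immediate.

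For strictness I need, at each link, a program that is secure at the weaker level but not at the stronger one. The key design choice is to make every witness \emph{noninterfering}, so that its failure to be secure at the stronger level stems purely from the termination criterion. For $\TS$-secure but not $\Total$-secure I would take the everywhere-diverging program $p_0(x) \defAs \udef$: it is vacuously noninterfering, it satisfies $\TS$ because the biconditional $\defined{p_0}{x} \iff \defined{p_0}{y}$ always holds (both sides are always false), and it is plainly not $\Total$. For $\MT$-secure but not $\TS$-secure I would work in the two-point lattice with $\HIGH \not\canFlowTo \LOW$ and take $\divIfPre(x) \defAs \text{if}\ 1^\HIGH \in x\ \text{then}\ \udef\ \text{else}\ \emptyset$: its output is always $\emptyset$ when defined, so it is noninterfering; it is $\MT$ because $x\project\ell \subseteq x$, so a $\HIGH$-free input stays $\HIGH$-free under every projection; and it is not $\TS$ because $\emptyset \sim_\LOW \{1^\HIGH\}$ while $\divIfPre$ is defined on $\emptyset$ and diverges on $\{1^\HIGH\}$. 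For $\TI$-secure but not $\MT$-secure I would reuse $\terminationLeak$ from Example~\ref{ex:running}, which is noninterfering and trivially $\TI$, but is not $\MT$: it is defined on $\{1^\HIGH\}$ and yet diverges on $\{1^\HIGH\}\project\LOW = \emptyset$.

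I do not expect a real obstacle; the argument is routine. The two points that need care are: (i) checking the degenerate case — chiefly the everywhere-diverging program — against the precise definitions, since the antecedents in the definitions of noninterference, $\MT$ and $\TS$ refer to $\defined{p}{\cdot}$ and it is easy to misstate the value of ``$\iff$'' when both of its sides are false; and (ii) making sure each separating example is genuinely noninterfering, so that one cannot simply transplant an arbitrary witness from the proof of Proposition~\ref{prop:termination-ordering} (a program can separate two termination criteria while being interfering, in which case it separates nothing here). The last two witnesses require a lattice with some label strictly above $\bot$ — for instance $\Pow{P}$ with $P$ nonempty — which costs no generality.
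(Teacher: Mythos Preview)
Your proposal is correct and matches the paper's approach: the forward implications come straight from Proposition~\ref{prop:termination-ordering}, and strictness is witnessed by the paper's own running examples ($\divIfPre$ for $\MT$-but-not-$\TS$ and $\terminationLeak$ for $\TI$-but-not-$\MT$, as in Example~\ref{ex:running-termination} and Table~\ref{table:example-programs}). The only minor deviation is your use of the everywhere-diverging program for $\TS$-but-not-$\Total$, where the paper's implicit witness is $\divOnLow$; both work, and yours is arguably the cleaner degenerate case.
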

We re-visit some programs from Example \ref{ex:running} to illustrate
the various termination criteria.
\begin{example}
  \label{ex:running-revisit}
  The programs $\id$ and $\combine$ from Example \ref{ex:running} are
  noninterfering and \Total; hence they are both \Total-secure.
  The program $\terminationLeak$ is noninterfering, but its termination is
  influenced by $\HIGH$ information, and so it is only \TI-secure.
  Program $\leakBit$, $\combineAll$, and $\leakAll$ meanwhile
  are interfering and hence are not \TX-secure for any \TX.
\end{example}
Next we present three noninterfering programs that illustrate
different termination criteria.
\begin{example}
  \label{ex:running-termination}
  First, the following program is not \Total, as it sometimes diverges,
  but divergence only depends on public (\LOW-labeled) information, so it
  is \TS.
  \begin{align*}
    \divOnLow(x) &\defAs \text{if}\ 1^\LOW \in x\ \text{then}\ x\ \text{else}\ \udef\\
  \intertext{
    In the next program, divergence depends on $\HIGH$ information, so it is not \TS.
    It is \MT~as removing $\HIGH$ information only improves termination.
    }
    \divIfPre(x) &\defAs \text{if}\ 1^\HIGH \in x\ \text{then}\ \udef\ \text{else}\ \emptyset\\
  \intertext{
    In contrast, for the following program, removing $\HIGH$ information hurts termination, so
    this program is \TI~but not \TS.
  }
    \divIfAbs(x) &\defAs \text{if}\ 1^\HIGH \not\in x\ \text{then}\ \udef\ \text{else}\ \emptyset
  \end{align*}
  For a brief summary of this example and examples \ref{ex:running} and \ref{ex:running-revisit},
  see Table \ref{table:example-programs}.
\end{example}

A version of \MT-security has been studied in the past by Rafnsson and Sabelfeld
\cite{rafnsson2016secure} and Jaskelioff and Russo \cite{jaskelioff2011secure}.
These authors consider programs whose termination is stable under ``default'' inputs
in a statically labeled setting.
Specifically, they formulate definitions of security that require that the program
preserves $\ell$-equivalence and that replacing inputs with default values does not
cause non-termination.
This is analogous to \MT-security, requiring that termination is preserved 
when the high inputs have a default value, like $0$, is identical
to saying that termination is preserved when the $H$-selection of
the input is the ``default value'' $\emptyset$.
\begin{table}
  \normalsize
  \centering
  \begin{tabular}{|l|l|l|l|}
    \hline
    \textbf{Program}   & \textbf{NI} & \textbf{Termination} & \textbf{Security} \\ \hline
    $\id$              & yes         & \Total               & \Total-secure     \\ \hline
    $\combine$         & yes         & \Total               & \Total-secure     \\ \hline
    $\combineAll$      & no          & \Total               & $\times$          \\ \hline
    $\leakBit$         & no          & \Total               & $\times$          \\ \hline
    $\leakAll$         & no          & \Total               & $\times$          \\ \hline
    $\terminationLeak$ & yes         & \TI                  & \TI-secure        \\ \hline
    $\divOnLow$        & yes         & \TS                  & \TS-secure        \\ \hline
    $\divIfPre$        & yes         & \MT                  & \MT-secure        \\ \hline
    $\divIfAbs$        & yes         & \TI                  & \TI-secure        \\ \hline
  \end{tabular}
  \\
  ~
  \\
  \caption{\normalsize \label{table:example-programs} Noninterference (NI), termination, and security of example programs}
\end{table}
\eject

\subsection*{\label{sec:def-enforcement-mechanisms} \textbf{Enforcement Mechanisms}}

Next we turn our attention to mechanisms for enforcing \TX-security.
\begin{definition}[Enforcement Mechanism]
  An enforcement mechanisms $E$ is a polynomial-time total recursive function
  that takes any program $p : \Pow{A \times \L} \pto \Pow{B \times \L}$ to a
  program $E[p] : \Pow{A \times \L} \pto \Pow{B \times \L}$.
\end{definition}

Normally, we talk about two properties of enforcement mechanisms: 
security and transparency
\cite{hamlen2003computability, zanarini2013precise, SME, MF, FSME, bielova2016spot}.
Security means that applying a secure enforcement mechanism $E$ to any
program $p$ always yields a secure program $E[p]$.
Transparency, on the other hand, means that given a secure program $p$, applying
$E$ does not change the observable behaviour of $p$.
\begin{definition}[Security and Transparency]
  An enforcement mechanism $E$ is (1) \TX-secure and (2) $\TX'$-transparent
  if:
  \begin{enumerate}
    \item For all $p$, $E[p]$ is \TX-secure
    \item For all $\TX'$-secure $p$ and inputs $x$, $\sem{E[p]}(x) = \sem{p}(x)$.
  \end{enumerate}
  We say that $E$ is $\TX$-$\TX'$ if it is \TX-secure and $\TX'$-transparent.
\end{definition}
The chains of implications from propositions \ref{prop:termination-ordering}
and \ref{prop:security-ordering} are reflected in the definitions of security
and transparency for enforcement mechanisms.
\begin{proposition}
  The following two chains of implications are strict.
  $$
  \begin{array}{clccl}
             &E\ \text{is \Total-secure} & \vlinesep{40pt}{8pt} &          & E\ \text{is \TI-transparent}\\
    \implies &E\ \text{is \TS-secure}    &                      & \implies & E\ \text{is \MT-transparent}\\
    \implies &E\ \text{is \MT-secure}    &                      & \implies & E\ \text{is \TS-transparent}\\
    \implies &E\ \text{is \TI-secure}    &                      & \implies & E\ \text{is \Total-transparent}
  \end{array}
  $$
\end{proposition}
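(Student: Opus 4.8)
The plan is to prove the two chains of implications separately, each by unfolding the definitions and appealing to Propositions~\ref{prop:termination-ordering} and~\ref{prop:security-ordering}.

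For the security chain, the argument is immediate. Suppose $E$ is \Total-secure, i.e.\ $E[p]$ is \Total-secure for every $p$. Since \Total-security implies \TS-security by Proposition~\ref{prop:security-ordering}, $E[p]$ is \TS-secure for every $p$, so $E$ is \TS-secure. The remaining two steps (\TS-secure $\implies$ \MT-secure $\implies$ \TI-secure) are identical, using the next two links of Proposition~\ref{prop:security-ordering}. Strictness of each implication follows from strictness in Proposition~\ref{prop:security-ordering}: at each level one exhibits a program $p$ witnessing the strictness there (e.g.\ $\terminationLeak$ separates \TI-secure from \MT-secure, $\divIfPre$ separates \MT-secure from \TS-secure, $\divOnLow$ separates \TS-secure from \Total-secure) and takes the \emph{constant} enforcement mechanism $E_p$ with $E_p[q] = p$ for all $q$; this is trivially a polynomial-time total function, and it is secure at the weaker criterion but not the stronger one. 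So none of the security implications reverse.

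For the transparency chain, the key observation is that the transparency conditions are \emph{quantified over secure programs}, so a \emph{stronger} termination hypothesis on $p$ yields a \emph{weaker} transparency obligation. Concretely, suppose $E$ is \TI-transparent: for every \TI-secure $p$ and every input $x$, $E[p](x) = p(x)$. Since every \MT-secure program is \TI-secure (Proposition~\ref{prop:security-ordering}), the universally quantified statement specializes: for every \MT-secure $p$ and every $x$, $E[p](x) = p(x)$, i.e.\ $E$ is \MT-transparent. The same specialization argument gives \MT-transparent $\implies$ \TS-transparent $\implies$ \Total-transparent. For strictness one again uses constant mechanisms: take a program $p$ that is, say, \MT-secure but not \TS-secure, pick any secure program $p'$ with $p' \neq p$ (on some input), and build $E$ that maps $p$ to $p'$ and fixes every \TS-secure program; such an $E$ is \TS-transparent but not \MT-transparent.\footnote{One must check that $E$ defined by a finite case split on the syntax of the input program is still polynomial-time total recursive, which is immediate.}

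\textbf{The main obstacle} is not mathematical depth but care with the quantifier direction in the transparency chain --- it runs ``backwards'' relative to the security chain, and it is easy to state the implication the wrong way. The other point requiring attention is the strictness claims: one must confirm that constant (or finitely-patched) enforcement mechanisms genuinely satisfy the definition of enforcement mechanism (polynomial-time total recursive, type-correct), and that the separating programs from Example~\ref{ex:running-termination} indeed lie strictly between the relevant criteria, which is exactly what Propositions~\ref{prop:termination-ordering} and~\ref{prop:security-ordering} (and Table~\ref{table:example-programs}) already establish.
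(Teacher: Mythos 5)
Your argument is correct and matches the paper's intended reasoning exactly: the paper states this proposition without a formal proof, but its accompanying remark (security \emph{provides} \TX-security while transparency \emph{relies} on it) is precisely your quantifier-direction observation, and your strictness witnesses via constant or finitely-patched mechanisms built from the separating programs of Table~\ref{table:example-programs} are the natural way to fill in the details the paper omits.
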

Note that the transparency ordering is the converse of the security ordering.
This is because security requires the mechanism to \emph{provide} \TX-security,
while transparency allows the mechanism to \emph{rely} on \TX-security.
%
%
Bearing these orderings in mind, if we write that $E$ is not secure, we  mean that it
is not \TI-secure, and if we write that $E$ is not transparent, we mean that it is not \Total-transparent.

\begin{example}
  We give a number of examples of secure and transparent enforcement mechanisms:
  \begin{itemize}
    \item $E[p](x) \defAs \emptyset$ is \Total-secure, but is not transparent.
    \item $E[p](x) \defAs p(x)$ is \TI-transparent, but not secure.
    \item No Sensitive Upgrade (NSU) \cite{Zdancewic:2002:PLI:935787} is
      \TI-secure but not transparent.
    \item Permissive Upgrade (PU) \cite{PU} is \TI-secure and transparent
      for more programs than NSU \cite{bielova2016taxonomy}, but still not transparent.
    \item The Hybrid Monitoring (HM) technique is \TI-secure \cite{bielova2016taxonomy}.
    \item Zanarini et al. \cite{zanarini2013precise} give an enforcement mechanism
      that is \TI-secure and \MT-transparent.
    \item SME \cite{SME} provides \MT-security and \MT-transparency (see below).
  \end{itemize}
\end{example}

\subsection*{\textbf{Multi-Execution}}

Inspired by Secure Multi-Execution (SME) \cite{SME}, we define the
enforcement mechanism $\ME$ (for ``Multi-Execution'') in our framework.
$\ME[p](x)$ runs $p$ multiple times, once for each security label $\ell$
in the lattice on appropriately censored input $x\project\ell$ that only
contains information visible to $\ell$, and uses the output of $p(x\project\ell)$
to construct the $\ell$-labeled part of final output of $\ME[p](x)$.
\begin{definition}
  $\sem{\ME[p]}(x) \defAs \bigcup\{\ \sem{p}(x\project\ell)@\ell\ |\ \ell \in \L\ \}$
\end{definition}
This mathematical definition of $\ME$ can naturally be implemented by
iterating over all labels $\ell$ in $\L$, provided $\L$ is finite.
Section \ref{sec:multi-execution} deals with the implementation in the
case where $\L$ is non-finite.

Figure \ref{fig:ME} shows a graphical rendition of $\ME[p]$ for the two-point
lattice $\{\LOW, \HIGH\}$.
The outer box represents $\ME[p](x)$, it takes both $\HIGH$ and $\LOW$ input
from the arrows on the left, and produces both $\HIGH$ and $\LOW$ output in the
arrows to the right.
Both the $\HIGH$ and $\LOW$ parts of the input are used in the top-most run of
$p$, and only the $\LOW$ part of the input is used in the bottom-most run.
The top-most run, the one given both the $\HIGH$ and $\LOW$ input, then
produces only the $\HIGH$ output, whereas the bottom-most run contributes
only the final $\LOW$ output.
Noninterference for this construction is easy to show, the $\LOW$ output can
only be influenced by the $\LOW$ input, as that run of $p$ never sees the actual
$\HIGH$ input.

For example, consider how $\ME$ works with the program
$\leakBit$ from Example \ref{ex:running} on inputs $\emptyset$ and $\{1^\HIGH\}$.
$$\leakBit(x)\ \defAs\ \text{if}\ 1^\HIGH \in x\ \text{then}\ \{1^\LOW\}\ \text{else}\ \{0^\LOW\}$$
Note that $\emptyset \sim_\LOW \{1^\HIGH\}$ and so we expect that
$\ME[\leakBit](\emptyset) \sim_\LOW \ME[\leakBit](\{1^\HIGH\})$.
As $\leakBit$ is $\Total$ we can simply compute to see that this does indeed hold.
$$
\begin{array}{ccc}
  \begin{alignedat}{2}
     &\ME[\leakBit](\emptyset)&&\project\LOW\\
  =\ &\leakBit(\emptyset\project\LOW)&&\project\LOW\\
  =\ &\leakBit(\emptyset)&&\project\LOW\\
  =\ &\{0^\LOW\}&&\project\LOW\\
  =\ &\{0^\LOW\}&&
  \end{alignedat}
  &
  \vlinesep{36pt}{40pt}
  &
  \begin{alignedat}{2}
     &\ME[\leakBit](\{1^\HIGH\})&&\project\LOW\\
  =\ &\leakBit(\{1^\HIGH\}\project\LOW)&&\project\LOW\\
  =\ &\leakBit(\emptyset)&&\project\LOW\\
  =\ &\{0^\LOW\}&&\project\LOW\\
  =\ &\{0^\LOW\}&&
  \end{alignedat}
\end{array}
$$

Furthermore, when $p$ is \TI-secure and both $\ME[p](x)$ and $p(x)$ terminate,
the output of $\ME[p](x)$ has to equal that of $p(x)$.
The $\HIGH$ part of $\ME[p](x)$ is clearly the same as the $\HIGH$ part of
$p(x)$, and because $p$ is noninterfering it follows that the $\LOW$ output of $p(x)$
doesn't depend on the $\HIGH$ input, and so the $\LOW$ output of $p(x)$ must be
the same as the $\LOW$ part of $p(x\project\LOW)$.
Putting these two facts together allows us to conclude that $\ME[p](x)$ is the
same as $p(x)$ when both terminate.

To see how $\ME$ deals with termination, consider program $\divIfAbs$ from Example
\ref{ex:running-termination}:
$$\divIfAbs(x)\ \defAs\ \text{if}\ 1^\HIGH \not\in x\ \text{then}\ \udef\ \text{else}\ \emptyset$$
It diverges if $1^\HIGH \not\in x$ and terminates with result $\emptyset$ otherwise.
$\ME[\divIfAbs]$ meanwhile runs both $\divIfAbs(x)$ and
$\divIfAbs(x\project\LOW)$, which means that $\ME[\divIfAbs]$ always diverges,
as $1^\HIGH \not\in x\project\LOW$ for all $x$.

However, if we consider $\ME[\divIfPre]$ we get a more interesting result.
Unlike $\divIfAbs$, $\divIfPre$ diverges when $1^\HIGH \in x$ and so
if $1^\HIGH \not\in x$, then we have $\defined{\ME[\divIfPre]}{x}$ with
output $\emptyset$.
As termination of $\ME[\divIfPre](x)$ \emph{increases} as we remove $\HIGH$
elements from $x$ and the output is constant, and $\ME[\divIfAbs](x)$ always
diverges, we see that $\ME$ behaves in an \MT-secure manner for both programs.

Monotonic Termination of $\ME[p]$ follows from the fact that $\ME[p](x)$
diverges whenever $p(x\project\ell)$ diverges for any $\ell$, and so if $\ME[p](x)$
is defined, then so is $\ME[p](x\project\ell)$.

\begin{theorem}
$\ME$ is \MT-secure.
\end{theorem}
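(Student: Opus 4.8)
The plan is to unfold the definition of \MT-security: for an arbitrary program $p$ we must establish that $\ME[p]$ is (1) noninterfering and (2) Monotonically Terminating. Both parts reduce to a single bookkeeping observation about how $\ell$-projection commutes with the defining union of $\ME$. Since projection distributes over unions and every element of $\sem{p}(x\project\jmath)@\jmath$ carries the label $\jmath$, we have $\big(\sem{p}(x\project\jmath)@\jmath\big)\project\ell = \sem{p}(x\project\jmath)@\jmath$ when $\jmath\canFlowTo\ell$ and $=\emptyset$ otherwise, so that
\[
  \sem{\ME[p]}(x)\project\ell \;=\; \bigcup\{\ \sem{p}(x\project\jmath)@\jmath\ \mid\ \jmath\canFlowTo\ell\ \}.
\]
I will also use the elementary fact that $(x\project\ell)\project\jmath = x\project\jmath$ whenever $\jmath\canFlowTo\ell$ (the side condition $k\canFlowTo\ell$ is implied by $k\canFlowTo\jmath$), from which $x\sim_\ell y$ immediately gives $x\project\jmath = y\project\jmath$ for every $\jmath\canFlowTo\ell$.

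\emph{Noninterference.} Fix $\ell$ and $x\sim_\ell y$ with $\defined{\ME[p]}{x}$ and $\defined{\ME[p]}{y}$. By the displayed equation it suffices to match, termwise, the two unions indexed by $\{\jmath\mid\jmath\canFlowTo\ell\}$. For each such $\jmath$ we have $x\project\jmath = y\project\jmath$, so these are literally the same argument to $p$, hence $\sem{p}(x\project\jmath)@\jmath = \sem{p}(y\project\jmath)@\jmath$. Unioning over all $\jmath\canFlowTo\ell$ yields $\sem{\ME[p]}(x)\project\ell = \sem{\ME[p]}(y)\project\ell$, i.e. $\sem{\ME[p]}(x)\sim_\ell\sem{\ME[p]}(y)$.

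\emph{Monotonic Termination.} First record the characterisation that $\defined{\ME[p]}{x}$ holds iff $\defined{p}{x\project\ell}$ for every $\ell$, since $\bigcup\{\sem{p}(x\project\ell)@\ell\mid\ell\in\L\}$ denotes a set precisely when all of its constituents do (and diverges as soon as any one does). Now assume $\defined{\ME[p]}{x}$ and fix $\ell'$; we must show $\defined{\ME[p]}{x\project\ell'}$, i.e. $\defined{p}{(x\project\ell')\project\jmath}$ for every $\jmath\in\L$. The crux is that a double projection of $x$ is again a single projection of $x$: $(x\project\ell')\project\jmath = x\project\jmath$ when $\jmath\canFlowTo\ell'$, $(x\project\ell')\project\jmath = x\project\ell'$ when $\ell'\canFlowTo\jmath$, and $(x\project\ell')\project\jmath = x\project(\ell'\sqcap\jmath)$ in general (in the power-set lattices of interest, $\ell'\sqcap\jmath = \ell'\cap\jmath$). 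In every case this is $x\project\ell''$ for some $\ell''\in\L$, so $\defined{p}{(x\project\ell')\project\jmath}$ follows from the hypothesis, and therefore $\defined{\ME[p]}{x\project\ell'}$, completing the argument. The main obstacle is exactly this last step: re-running $\ME[p]$ on the censored input $x\project\ell'$ still iterates over \emph{all} labels $\jmath$ — including those incomparable to $\ell'$ — so one must check that $p$ is never fed anything outside the family of projections of $x$ on which it is already known to terminate, and recognising each double projection as a single projection of $x$ (via the meet) is what makes this go through.
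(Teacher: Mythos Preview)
Your argument follows the same two-part structure as the paper's: noninterference via termwise matching of the union over $\{\jmath \mid \jmath \canFlowTo \ell\}$, and Monotonic Termination via the observation that every double projection of $x$ is again a single projection of $x$. The noninterference half is essentially identical to the paper's proof.

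For the \MT~half, however, you identify $(x\project\ell')\project\jmath$ with $x\project(\ell'\sqcap\jmath)$. The paper's framework only postulates a \emph{join} semi-lattice $\langle \L, \canFlowTo, \bot, \lub \rangle$, so binary meets need not exist in general, and your parenthetical retreat to ``the power-set lattices of interest'' does not cover the statement as claimed. The repair is easy and meet-free: set $\ell'' \defAs \bigsqcup\{\,k \in \L(x) \mid k \canFlowTo \ell',\ k \canFlowTo \jmath\,\}$, a finite join over labels actually occurring in $x$ and hence available from $\bot$ and $\lub$. Since every element of the index set lies below $\ell'$ and below $\jmath$, we get $\ell'' \canFlowTo \ell'$ and $\ell'' \canFlowTo \jmath$, and then a direct check gives $(x\project\ell')\project\jmath = x\project\ell''$. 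The paper's own one-line proof of \MT~is terse enough not to expose this point, but your more explicit version should appeal to joins of labels present in the input rather than meets in the ambient lattice.
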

\begin{proof}
  To show noninterference, pick any two $x$, $y$, and label $\ell$ such that
  $x \sim_\ell y$ and assume that $\defined{\ME[p]}{x}$
  and $\defined{\ME[p]}{y}$.
  For all $\jmath \canFlowTo \ell$ we have that $x \sim_\jmath y$
  and so: 
  \begin{alignat*}{2}
       &\sem{\ME[p]}(x)&&@\jmath\\
    =\ &\bigcup\{\ \sem{p}(x\project\ell')@\ell'\ |\ \ell' \in \L \}&&@\jmath\\
    =\ &\sem{p}(x\project\jmath)&&@\jmath\\
    =\ &\sem{p}(y\project\jmath)&&@\jmath\\
    =\ &\bigcup\{\ \sem{p}(y\project\ell')@\ell'\ |\ \ell' \in \L \}&&@\jmath\\
    =\ &\sem{\ME[p]}(y)&&@\jmath
  \end{alignat*}
  Because $x\project\ell = \bigcup\{x@\jmath\ |\ \jmath \canFlowTo \ell\}$
  we now have that $\sem{\ME[p]}(x) \sim_\ell \sem{\ME[p]}(y)$.

  For \MT-termination, if $\defined{\ME[p]}{x}$ then $\defined{p}{x\project\ell}$
  for all $\ell$ and so $\defined{\ME[p]}{x\project\ell}$ for all $\ell$.
  Consequently, $\ME$ is \MT-secure.
\end{proof}
Naturally, a corollary of this theorem is that $\ME$ is also \TI-secure.
\begin{corollary}
  $\ME$ is \TI-secure.
\end{corollary}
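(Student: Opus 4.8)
The plan is to obtain this corollary as an immediate weakening of the preceding theorem, using the ordering of the security criteria rather than any new argument. First I would recall that the theorem just proved establishes that $\ME$ is \MT-secure, i.e.\ that $\ME[p]$ is \MT-secure for every program $p$. Then I would invoke the chain of (strict) implications for enforcement mechanisms, whose relevant link states that every \MT-secure enforcement mechanism is also \TI-secure. Composing these two observations yields that $\ME$ is \TI-secure, which is the claim.

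If one prefers a self-contained derivation rather than a one-line appeal to the ordering, I would instead unfold the definitions. By the definition of \TX-secure, $\ME[p]$ is \TI-secure exactly when it is \TI\ and noninterfering. The first conjunct is vacuous: by Definition \ref{def:termination} every program is \TI. The second conjunct, noninterference of $\ME[p]$, is precisely what the first half of the proof of the theorem established — the \MT-termination half of that proof is simply not needed here. Hence $\ME[p]$ is \TI-secure for all $p$.

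I do not anticipate any genuine obstacle: the corollary is strictly weaker than the theorem, so the only thing worth making explicit is that the termination side of \TI-security imposes no condition at all, and that the noninterference obligation has already been discharged verbatim in the theorem's proof.
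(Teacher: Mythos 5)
Your proposal is correct and matches the paper's treatment: the corollary is stated there without proof precisely because it follows immediately from the theorem that $\ME$ is \MT-secure together with the ordering of security notions (\MT-secure implies \TI-secure). Your optional unfolding --- that the \TI\ termination condition is vacuous and the noninterference half of the theorem's proof already discharges the rest --- is a correct elaboration of the same argument, not a different route.
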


Next we address \MT-transparency.
Preservation of termination follows immediately from the definition
of the \MT~termination criteria.
This is a feature of the definition of \MT~termination, it is precisely the
definition we need in order to prove \MT-transparency.
While the observation that conditions such as \MT~ are natural pre-requisites for 
transparency (see e.g. \cite{jaskelioff2011secure,rafnsson2016secure}), we believe
that this framework provides a clean explanation for why $\MT$ exists.
Because $\ME$ runs a number of $p(x\project\ell)$ for different $\ell$, it is
natural to require that these runs terminate if $p(x)$ terminates in order to
preserve termination.
We note that \MT-security is more useful as a correctness criteria for $\ME$
than as a notion of security.
In other words, being \MT-transparent is more useful than being \MT-secure.

\begin{theorem}
  \label{thm:ME-MT-transparent}
  $\ME$ is \MT-transparent.
\end{theorem}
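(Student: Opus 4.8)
The plan is to show that for any $\MT$-secure program $p$ and any input $x$, we have $\sem{\ME[p]}(x) = \sem{p}(x)$, where equality is understood in the sense that either both sides diverge or both converge to the same set. I would split this into a termination argument and an extensional-equality argument.

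First I would handle termination. Suppose $\defined{p}{x}$. Since $p$ is $\MT$, we get $\defined{p}{x\project\ell}$ for every $\ell \in \L$, and hence by definition $\defined{\ME[p]}{x}$. Conversely, suppose $\defined{\ME[p]}{x}$; then in particular $\defined{p}{x\project\ell}$ for every $\ell$, and since $x = x\project\ell$ for $\ell = \bigsqcup\L(x)$ (or more simply, since $x\project\ell = x$ whenever $\ell$ is an upper bound of $\L(x)$), we recover $\defined{p}{x}$. So $\defined{p}{x} \iff \defined{\ME[p]}{x}$, and it remains only to compare outputs when both are defined.

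Next, assuming both $p(x)$ and $\ME[p](x)$ converge, I would prove $\sem{\ME[p]}(x) = \sem{p}(x)$ by showing the two sets have the same $@\ell$-selection for every label $\ell$ (since any set $z$ satisfies $z = \bigcup\{z@\ell \mid \ell \in \L(z)\}$, and both sides only carry labels drawn from $\L$). Fix $\ell$. By definition, $\sem{\ME[p]}(x)@\ell = \bigl(\bigcup\{\sem{p}(x\project\ell')@\ell' \mid \ell' \in \L\}\bigr)@\ell = \sem{p}(x\project\ell)@\ell$, exactly as in the noninterference computation in the proof of the preceding theorem. So it suffices to show $\sem{p}(x\project\ell)@\ell = \sem{p}(x)@\ell$. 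For this I invoke noninterference of $p$: since $x\project\ell \sim_\ell x$ (projection is idempotent, so $(x\project\ell)\project\ell = x\project\ell$), and since $\defined{p}{x}$ gives $\defined{p}{x\project\ell}$ by $\MT$, noninterference yields $\sem{p}(x\project\ell) \sim_\ell \sem{p}(x)$, i.e. $\sem{p}(x\project\ell)\project\ell = \sem{p}(x)\project\ell$. Taking the $@\ell$-selection of both sides — which is determined by the $\ell$-projection since anything at level exactly $\ell$ survives $\project\ell$ — gives $\sem{p}(x\project\ell)@\ell = \sem{p}(x)@\ell$, as required.

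The main obstacle, such as it is, is bookkeeping about the three projection/selection operators: I need the facts that $\project$ is idempotent, that $z@\ell$ is recoverable from $z\project\ell$ (because $a^\ell \in z\project\ell$ iff $a^\ell \in z$, as $\ell \canFlowTo \ell$), and that $z = \bigcup_\ell z@\ell$. None of these is deep, but the argument collapses if any is mis-stated, so I would state them as small observations up front. The one genuinely load-bearing hypothesis is that $p$ is $\MT$ rather than merely $\TI$-secure: without $\MT$, $\defined{p}{x}$ would not guarantee $\defined{p}{x\project\ell}$, so neither the termination direction nor the appeal to noninterference on $x\project\ell$ would go through — which is exactly why the weaker $\TI$-transparency fails for $\ME$ and why the $\MT$ criterion was introduced.
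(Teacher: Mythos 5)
Your proof is correct and follows essentially the same route as the paper's: both establish $\defined{p}{x} \iff \defined{\ME[p]}{x}$ using the $\MT$ criterion and the identity $x = x\project\bigsqcup\L(x)$, and then show the outputs agree level-by-level by reducing $\sem{\ME[p]}(x)@\ell$ to $\sem{p}(x\project\ell)@\ell$ and invoking noninterference of $p$. Your version merely makes explicit the projection/selection bookkeeping that the paper compresses into the phrase ``by $p$ \MT-secure.''
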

\begin{proof}
  Pick any \MT-secure program $p$ and input $x$.
  We first prove that $\defined{\ME[p]}{x} \iff \defined{p}{x}$.
  If $\notdefined{p}{x}$ then because $x = x\project\bigsqcup\L(x)$
  we have that $\notdefined{p}{x\project\bigsqcup\L(x)}$
  and so $\notdefined{\ME[p]}{x}$.
  If $\defined{p}{x}$ then $\defined{p}{x\project\ell}$ for all $\ell$, and
  so clearly $\defined{\ME[p]}{x}$.

  Next we show that when $\defined{\ME[p]}{x}$ and $\defined{p}{x}$
  we have that $\ME[p](x) = p(x)$.
  For all $a^\ell \in \sem{p}(x)$ is equivalent to $a^\ell \in \sem{p}(x\project\ell)@\ell$
  (by $p$ \MT-secure), which in turn is equivalent to $a^\ell \in \sem{\ME[p]}(x)$ (by definition).
  Consequently, $\sem{p}(x) = \sem{\ME[p]}(x)$.
\end{proof}
Again, a simple corollary follows from the hierarchy of notions of security.
\begin{corollary}
  $\ME$ is \TS- and \Total-transparent.
\end{corollary}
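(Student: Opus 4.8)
The plan is to derive this immediately from Theorem~\ref{thm:ME-MT-transparent} together with the security hierarchy of Proposition~\ref{prop:security-ordering}. Transparency at a criterion $\TX'$ is a statement quantified over the class of $\TX'$-secure programs: it asserts that $\sem{\ME[p]}(x) = \sem{p}(x)$ for every $\TX'$-secure $p$ and every input $x$. So it suffices to observe that the classes of $\TS$-secure and of $\Total$-secure programs are contained in the class of $\MT$-secure programs, since then the $\MT$-transparency of $\ME$ specialises to $\TS$- and $\Total$-transparency.

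Concretely, I would first invoke Proposition~\ref{prop:security-ordering}, which gives $p$ is $\Total$-secure $\implies$ $p$ is $\TS$-secure $\implies$ $p$ is $\MT$-secure; in particular every $\TS$-secure program and every $\Total$-secure program is $\MT$-secure. Then, fixing an arbitrary $\TS$-secure $p$ and input $x$, this program is $\MT$-secure, so Theorem~\ref{thm:ME-MT-transparent} yields $\sem{\ME[p]}(x) = \sem{p}(x)$; as $p$ and $x$ were arbitrary, $\ME$ is $\TS$-transparent. Replacing ``$\TS$-secure'' by ``$\Total$-secure'' throughout gives $\Total$-transparency by the identical argument.

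There is no real obstacle here: the content is already in Theorem~\ref{thm:ME-MT-transparent}, and this corollary is bookkeeping about which programs the hypothesis covers. The only point to be careful about is the direction of the inclusion — weakening the termination criterion imposed on the programs one must faithfully run \emph{strengthens} the transparency guarantee, which is exactly why the transparency ordering is the converse of the security ordering — so one must verify that $\TS$-secure and $\Total$-secure programs really are \emph{special cases} of $\MT$-secure, and not the other way round.
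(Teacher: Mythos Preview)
Your argument is correct and is precisely the approach the paper takes: the corollary is stated immediately after Theorem~\ref{thm:ME-MT-transparent} with the one-line justification that it follows from the hierarchy of notions of security, and you have simply spelled out that inference in full.
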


Note that \MT-secure and \MT-transparent is the strongest classification of
the security and transparency of the $\ME$ enforcement mechanism.
For example, $\ME$ could not be \MT-secure and \TI-transparent, as this would
require preserving \TI-termination, which would in turn prohibit $\ME$ from being \MT-secure.
Likewise, $\ME$ could not be \TS-secure, as this would prohibit it from preserving the semantics
of merely \MT-secure programs.
In general, $\ME$ could not be $\TX$-$\TX'$ for any $\TX$ stronger than, or $\TX'$ weaker,
than $\MT$.

In the following few sections, we establish a number of results about the
possibility and impossibility of secure, transparent, and efficient information
flow control.
Before doing so, we review the state of transparent enforcement
of noninterference in the literature.

\subsection*{\textbf{The State of Transparent Enforcement}}
\newcommand{\tkg}{}
\newcommand{\imp}{$\times$\tkg}
\newcommand{\impM}{\imp~(\hspace{-1sp}\cite{ngo2018impossibility})}
\newcommand{\impC}{\imp~\cite{ngo2018impossibility}}
\newcommand{\ckg}{}
\newcommand{\ok}{\ckg$\checkmark$}
\begin{table}[t]
  \begin{tabular}{cc}
    & \textbf{Transparency} \\
    \\
    \textbf{Security} &
      {
      \normalsize
      \begin{tabular}{c|c|c|c|c|}
        ~               &  \textbf{\Total} & \textbf{\TS} & \textbf{\MT} & \textbf{\TI} \\ \hline
        \textbf{\TI}    &  \ok   & \ok \cite{OptimisingFSME}  & \ok \cite{zanarini2013precise} & \ckg\METI \\ \cline{1-5}
        \textbf{\MT}    &  \ok   & \ok  & \ckg\ME \\ \cline{1-4}
        \textbf{\TS}    &  \impM & \impC \\ \cline{1-3}
        \textbf{\Total} &  \imp  \\ \cline{1-2}
      \end{tabular}
      }
    \\
  \end{tabular}
  ~
  \\
  ~
  \caption{\normalsize \label{fig:taxonomy} Possible and impossible combinations of security and transparency.}
\end{table}

Table \ref{fig:taxonomy} presents the combinations of secure and transparent
enforcement that are possible or impossible to enforce.
The table is upper-triangular as anything below the diagonal is impossible.
For example, it is impossible to simultaneously enforce a strong property such as 
\TS-security, while preserving the semantics of all programs that obey a
weaker property, such as \TI-security.
Put differently, because some programs are \TI-secure but not \TS-secure, these
programs must have their semantics altered by a \TS-secure enforcement
mechanism.

As shown above, it is possible to construct an \MT-\MT~enforcement mechanism
$\ME$.
Because possibility propagates ``up and to the left'' in our table, any
\MT-\MT~enforcement mechanism is also \TI-\MT~ and \TI-\TS, for example,
we also mark all boxes to the left and above of \MT-\MT~ as possible.

Ngo et al. \cite{ngo2018impossibility} show that \TS-\TS~enforcement is
impossible.
Their proof also works to show that the weaker \TS-\Total~condition is
impossible (we mark this by putting the citation in brackets
in Table \ref{fig:taxonomy}), and consequently
\Total-\Total~ enforcement is also impossible.

One open question in the literature is whether or not \TI-\TI~enforcement
is possible.
In Section \ref{sec:TITI} we answer this question by presenting $\METI$, a \TI-\TI~
enforcement mechanism.

One issue faced by $\ME$ is that a naive implementation fails to converge
when $\L$ is non-finite, and takes infeasibly long to execute when $\L$ is very
large.
Running $p$ once for each label in an infinite lattice is, clearly, impossible.
In Section \ref{sec:multi-execution} we construct $\MEF$,
a version of $\ME$ which lifts these restrictions and makes $\ME$ work for
non-finite lattices.

\section{\label{sec:multi-execution} Multi-Execution for Decentralised Lattices}
In this section, we instantiate the framework of
Section \ref{sec:def-enforcement-mechanisms} in the form of a
novel formulation of the $\ME$ enforcement mechanism.
Our new enforcement mechanism is inspired by MF \cite{MF},
FSME \cite{FSME}, and OGMF \cite{OGMF}.
Unlike SME, these schemes do not run one parallel version of the program for
each level in the lattice, instead multi-execution under these schemes is
determined by the behaviour of the program and the levels in the input.
These methods are ``adaptively'' multi-executing in a ``white-box'' manner.
This allows them to avoid running one copy of $p$ for each security level and
thereby also allows them to work for infinite $\L$, something which SME is
unable to do.

We develop an enforcement mechanism $\MEF$ to provide a black-box account
of these methods.
$\MEF$ is, to the best of our knowledge, the first black-box enforcement
mechanism that is \MT-secure, \MT-transparent, and works for non-finite (so
called ``decentralised'' \cite{DCLabels}) lattices.

We start by observing that the definition of $\ME$ requires running $p$
on $x\project\ell$ for each $\ell \in \L$.
Assuming $x$ is finite, then there is only a finite number of such downward
projections of $x$, even for non-finite $\L$.
Thus, regardless of the size of $\L$ we only need to run $p$ a finite number
of times.
The tricky part is to appropriately compose the result of these runs to produce
the right output.

To accomplish this, we need to introduce some additional machinery.
If $S$ is a finite subset of $\L$ we call
$$
C(S) \defAs \{\ \bigsqcup S'\ |\ S' \subseteq S\ \}
$$
the \emph{closure-set of $S$}.
Note that $C(S)$ simply encodes all the ways of combining zero or more labels
in $S$, including both the lower bound of $\bot$ and the upper bound of
$\bigsqcup S$.
For example, if $S = \L(x)$ for some $x$, then $C(S)$ is all the ways to label
data arising by combining zero or more $a^\ell \in x$.

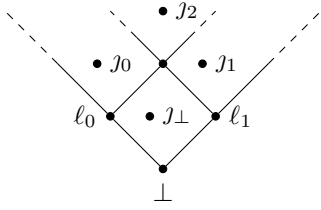
\begin{figure}[t]
  \centering
\begin{tikzpicture}[scale=0.7,dot/.style={circle,draw, fill=black, inner sep = 0pt, text width = 1mm}]
  \node[dot,label=below:{$\bot$}] at (0, 0) (bot) {};
  \node[dot,label=right:{$\jmath_\bot$}] at (-0.25, 1) (j) {};
  \node[dot,label=left:{$\ell_0$}] at (-1, 1) (l0) {};
  \node[dot,label=right:{$\jmath_0$}] at (-1.25, 2) (j0) {};
  \node[dot,label=right:{$\ell_1$}] at (1, 1)  (l1) {};
  \node[dot,label=right:{$\jmath_1$}] at (0.75, 2) (j1) {};
  \node[dot] at (0, 2)  (l2) {};
  \node[dot,label=right:{$\jmath_2$}] at (0, 3) (j2) {};
  \draw (bot) -- (l0);
  \draw (bot) -- (l1);
  \draw (l0) -- (l2);
  \draw (l1) -- (l2);
  \draw (l2) -- (-0.5,2.5);
  \draw[dashed] (-0.5,2.5) -- (-1,3);
  \draw (l2) -- (0.5,2.5);
  \draw[dashed] (0.5,2.5) -- (1,3);
  \draw (l0) -- (-2,2);
  \draw[dashed] (-2,2) -- (-3,3);
  \draw (l1) -- (2,2);
  \draw[dashed] (2,2) -- (3,3);
  \end{tikzpicture}
  \caption{\normalsize \label{fig:CL} The upwards neighbourhoods of $C(\{\ell_0, \ell_1\})$}
\end{figure}

For any set $S \subseteq \L$ and label $\ell \in S$ we define the \emph{upward neighbourhood of $\ell$ in $S$},
written $\ell \uparrow S$, as:
$$
\ell \uparrow S \defAs \{\ \jmath \in \L\
                        |\ \ell \canFlowTo \jmath,
                        \forall \jmath' \in S.\ \jmath' \canFlowTo \jmath \implies \jmath' \canFlowTo \ell\ \}
$$
That is, $\ell \uparrow S$ is the set of all $\jmath \in \L$ such that if
$\ell \canFlowTo \jmath$, then there is no other label in $S$ which
can flow to $\jmath$ but can not flow to $\ell$.
Put differently, if $\jmath \in \ell \uparrow S$, then $\ell$ is the greatest
label in $S$ such that $\ell \canFlowTo \jmath$.
From the point of view of information flow, this means that
$\ell$ captures all the labels in $s$ whose information may
flow to $\jmath$.

For an illustration of how these upwards neighbourhoods work, see Figure \ref{fig:CL}.
Each area in the diagram corresponds to the upwards neigbourhood of one of the levels
in $C(\{\ell_0, \ell_1\})$.
The level $\jmath_\bot$ in the middle square, for example, is in the upward neighbourhood
of level $\bot$ in $C(\{\ell_0, \ell_1\})$ while $\jmath_0$ is in the upward neighbourhood
of $\ell_0$.
Likewise, $\jmath_1$ is in $\ell_1 \uparrow C(\{\ell_0, \ell_1\})$
and $\jmath_2$ is in $\ell_0 \lub \ell_1 \uparrow C(\{\ell_0, \ell_1\})$.

Note that in the case where $S = \{\ell_0, \ell_1\}$ where $\ell_0$ and $\ell_1$ are
incomparable, neither $\ell_0 \uparrow S$ nor $\ell_1 \uparrow S$ contain $\ell_0 \lub \ell_1$.
For this reason, it is important to always consider upwards neighbourhoods in
$C(S)$ rather than $S$ itself.
The upward neighbourhood of $\ell$ in $C(S)$ precisely characterises the levels
for which the computation $\sem{p}(x\project\ell)$ can produce output if there
is one $\sem{p}(x\project\ell)$ for each $\ell \in C(S)$.
This is captured by the following proposition together with the fact that $\ell$
can flow to all levels in $\ell \uparrow C(S)$.
\begin{restatable}{proposition}{proppartition}
  Given any $S \subseteq \L$, the family which maps each $\ell \in C(S)$ to $\ell \uparrow C(S)$
  partitions $\L$.
\end{restatable}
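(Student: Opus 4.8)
The plan is to show that the family $\{\ell \uparrow C(S)\}_{\ell \in C(S)}$ both covers $\L$ and is pairwise disjoint. Write $T = C(S)$ for brevity, and note the key structural fact about $T$: it is closed under binary joins (if $\jmath_1, \jmath_2 \in T$ then $\jmath_1 \lub \jmath_2 \in T$, since each is a join of a subset of $S$ and their union is again a subset of $S$) and contains $\bot = \bigsqcup \emptyset$. So $T$ is a finite join-subsemilattice of $\L$ with a least element.

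\emph{Covering.} Fix any $\jmath \in \L$. Consider the set $T_\jmath = \{\ell \in T \mid \ell \canFlowTo \jmath\}$; it is nonempty since $\bot \in T_\jmath$, and it is finite, so since $T$ is closed under joins the element $\ell^* \defAs \bigsqcup T_\jmath$ lies in $T$, and $\ell^* \canFlowTo \jmath$ because $\jmath$ is an upper bound of $T_\jmath$. I claim $\jmath \in \ell^* \uparrow T$: we have $\ell^* \canFlowTo \jmath$, and for any $\jmath' \in T$ with $\jmath' \canFlowTo \jmath$ we get $\jmath' \in T_\jmath$, hence $\jmath' \canFlowTo \bigsqcup T_\jmath = \ell^*$. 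This is exactly the defining condition, so $\jmath$ belongs to the block indexed by $\ell^*$.

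\emph{Disjointness.} Suppose $\jmath \in (\ell_1 \uparrow T) \cap (\ell_2 \uparrow T)$ with $\ell_1, \ell_2 \in T$. From $\jmath \in \ell_2 \uparrow T$ and $\ell_1 \in T$ with $\ell_1 \canFlowTo \jmath$, the definition gives $\ell_1 \canFlowTo \ell_2$; symmetrically $\ell_2 \canFlowTo \ell_1$; hence $\ell_1 = \ell_2$ by antisymmetry of $\canFlowTo$. (If the paper does not want to assume $\canFlowTo$ is a partial order but only a preorder, the conclusion is that the two blocks coincide, which still yields a partition after identifying equivalent labels; I would state it for the partial-order case, matching the lattice axioms given.) Combining the two parts, every $\jmath \in \L$ lies in exactly one block $\ell \uparrow T$, which is the definition of a partition.

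The only subtlety — and the one place I would be careful — is the covering step: one must resist the temptation to pick "the greatest $\ell \in S$ below $\jmath$'' (which need not exist, as the incomparable-$\ell_0,\ell_1$ remark in the text warns), and instead use closure under joins in $C(S)$ to manufacture $\ell^* = \bigsqcup\{\ell \in C(S) \mid \ell \canFlowTo \jmath\}$ and verify it lands back in $C(S)$. Everything else is a short unwinding of the definition of $\ell \uparrow C(S)$ together with finiteness of $S$.
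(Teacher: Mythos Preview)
Your proof is correct and follows essentially the same two-step decomposition (covering plus pairwise disjointness) as the paper, with the disjointness argument being identical. If anything, your covering step is more careful: the paper simply asserts the existence of a greatest $\jmath \in C(S)$ below a given label, whereas you explicitly construct it as the join of all such elements and justify that it lands back in $C(S)$ via join-closure.
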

\begin{proof}
  See Appendix \ref{app:proofs}.
\end{proof}

For the next definition we overload the $@$ operator to work for
sets $S \subseteq \L$ of labels as
$x@S = \{\ a^\ell\ |\ a^\ell \in x, \ell \in S\ \}$.
Finally, we define a more sophisticated version of $\ME$ that we call
$\MEF$ (for $\ME$-Fastish).
\begin{definition}
  $$\sem{\MEF}[p](x) \defAs \bigcup \{\ \sem{p}(x\project\ell)@(\ell \uparrow C(\L(x)))\ |\ \ell \in C(\L(x))\ \}$$
\end{definition}
Intuitively, this definition computes all the possible combinations of levels
which can arise from computing with data in the input (i.e. $C(\L(x))$) and
performs multi-execution with only these levels.
However, this is not sufficient to capture all the levels the data may be output at.
In effect, this is because a computation may combine data labeled $\ell_0$ and $\ell_1$,
but write the output at some level $\ell_2 \sqsupset \ell_0 \lub \ell_1$.
To get around this issue and ensure that we capture \emph{all} outputs of a computation,
we finally do the selection $\sem{p}(x\project\ell)@(\ell \uparrow C(\L(x)))$.
Because the upward neighbourhoods of $C(\L(x))$ are disjoint, the final output of $\sem{\MEF[p]}(x)$
is a union of disjoint sets, so each element of the output is added to the final set only once.
To see this principle in action, consider the following example.
\begin{example}
  Recall the definition of $\combine$ from Example \ref{ex:running}:
  $$\combine(x)\ \defAs\ \{ (|x@\Alice| + |x@\Bob|)^{\Alice \lub \Bob \lub \Charlie} \}$$
  $\MEF[\combine](\{1^\Alice\})$ runs one opy of $\combine$ for each level in $C(\{\Alice\}) = \{\bot, \Alice\}$.
  This means that $\MEF$ runs $\combine(\{1^\Alice\}\project\bot)$
  and $\combine(\{1^\Alice\}\project\Alice)$.
  The first run outputs $\{0^{\Alice \lub \Bob \lub \Charlie}\}$ and 
  the second one $\{1^{\Alice \lub \Bob \lub \Charlie}\}$.
  Because $\Alice \lub \Bob \lub \Charlie$ is in the upwards closed neighbourhood
  of $\Alice$ in $\{\Alice, \bot\}$ and not in the upwards closed neighbourhood
  of $\bot$, we have that the output of $\MEF[\combine](\{1^\Alice\})$ is
  $\{1^{\Alice \lub \Bob \lub \Charlie}\}$.
  Note that, had $\MEF$ been using the same rule for selecting outputs as
  $\ME$, i.e. had the definition used $p(x\project\ell)@\ell$ rather than
  $p(x\project\ell)@(\ell\uparrow C(\L(x)))$, the output in our example would
  be $\emptyset$, as $\Alice \lub \Bob \lub \Charlie$ is equal to neither
  $\Alice$ nor $\bot$.
\end{example}

While the formulation of $\MEF$ looks similar to that of $\ME$, it has the advantage
of being implementable even for infinite, ``decentralised'', lattices like
DC-labels \cite{DCLabels}.
It also has the nice property of being equivalent to $\ME$ (provided that both terminate).

\begin{restatable}{theorem}{thmMEFME}
  \label{thm:MEF-ME}
  For all $p, x$, if $\L$ is finite we have that:
  $$\MEF[p]{x} = \ME[p]{x}$$
\end{restatable}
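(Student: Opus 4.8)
The plan is to show that the two definitions yield the same set by reducing both to a union indexed by all of $\L$ and then matching them term by term. First I would observe that when $\L$ is finite, every label $\ell \in \L$ lies in exactly one upward neighbourhood $\jmath \uparrow C(\L(x))$ for $\jmath \in C(\L(x))$, by the partition property (Proposition~\ref{thm:MEF-ME}'s predecessor, \texttt{proppartition}). So I can rewrite $\ME[p](x) = \bigcup\{\sem{p}(x\project\ell)@\ell \mid \ell \in \L\}$ by grouping the index $\ell$ according to which neighbourhood it falls in: $\ME[p](x) = \bigcup_{\jmath \in C(\L(x))} \bigcup_{\ell \in \jmath \uparrow C(\L(x))} \sem{p}(x\project\ell)@\ell$. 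The goal is then to show that for each fixed $\jmath \in C(\L(x))$, the inner union $\bigcup\{\sem{p}(x\project\ell)@\ell \mid \ell \in \jmath \uparrow C(\L(x))\}$ equals the single term $\sem{p}(x\project\jmath)@(\jmath \uparrow C(\L(x)))$ appearing in $\MEF[p](x)$.

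The key lemma for this inner equality is that for every $\ell \in \jmath \uparrow C(\L(x))$, we have $x\project\ell = x\project\jmath$. This holds because $\jmath \canFlowTo \ell$ gives $x\project\jmath \subseteq x\project\ell$, while conversely any $a^{\jmath'} \in x\project\ell$ has $\jmath' \in \L(x) \subseteq C(\L(x))$ and $\jmath' \canFlowTo \ell$, so by the defining condition of $\jmath \uparrow C(\L(x))$ we get $\jmath' \canFlowTo \jmath$, hence $a^{\jmath'} \in x\project\jmath$. (Here I use $\jmath \uparrow C(S)$ with $S = \L(x)$ and the fact that $C(S) \supseteq S$.) Given this, every run $\sem{p}(x\project\ell)$ with $\ell \in \jmath \uparrow C(\L(x))$ is literally the same computation as $\sem{p}(x\project\jmath)$, so $\bigcup\{\sem{p}(x\project\ell)@\ell \mid \ell \in \jmath\uparrow C(\L(x))\} = \bigcup\{\sem{p}(x\project\jmath)@\ell \mid \ell \in \jmath\uparrow C(\L(x))\} = \sem{p}(x\project\jmath)@(\jmath\uparrow C(\L(x)))$, the last step just unfolding the overloaded $@S$ notation.

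The termination side is straightforward: $\ME[p](x)$ is defined iff $\sem{p}(x\project\ell)$ is defined for all $\ell \in \L$, and $\MEF[p](x)$ is defined iff $\sem{p}(x\project\jmath)$ is defined for all $\jmath \in C(\L(x))$. Since $C(\L(x)) \subseteq \L$, one direction is immediate, and the other follows from the same observation that for $\ell \in \jmath \uparrow C(\L(x))$ we have $x\project\ell = x\project\jmath$ with $\jmath \in C(\L(x))$, so definedness on $C(\L(x))$ forces definedness everywhere. Combining the definedness equivalence with the term-by-term set equality gives $\MEF[p](x) = \ME[p](x)$.

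I expect the main obstacle to be bookkeeping around the partition claim and making sure the grouping of the index set $\L$ is rigorous — in particular, confirming that \emph{every} $\ell \in \L$ (not just those in $\L(x)$ or $C(\L(x))$) is covered by exactly one neighbourhood, which is precisely what \texttt{proppartition} supplies, and that finiteness of $\L$ is used only to know the outer union in $\MEF$ is over the finite set $C(\L(x))$ and the regrouping is legitimate. Everything else is a routine unfolding of the projection and selection operators.
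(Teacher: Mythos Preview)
Your proposal is correct and rests on the same key lemma as the paper's proof, namely $x\project\ell = x\project\jmath$ whenever $\ell \in \jmath\uparrow C(\L(x))$; the paper argues by two element-wise inclusions rather than your explicit regrouping of the $\ME$ union along the partition, but the content is identical. One small correction to your closing remark: $C(\L(x))$ is finite because $x$ is (implicitly) finite, not because $\L$ is --- finiteness of $\L$ is what makes $\ME$ itself well-defined as an algorithm, so that the equality has algorithmic content.
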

\begin{proof}
  See Appendix \ref{app:proofs}.
\end{proof}

The theorem above means that $\MEF$ inherits the \MT-\MT~property of $\ME$ in the case when
$\L$ is finite.
In the case when $\L$ is non-finite, we still have that $\MEF$ is \MT-\MT.
$\MEF[p]$ is \MT-terminating because $C(\L(x\project\ell)) \subseteq C(\L(x))$ and if
$p$ is \MT-secure then termination and output behaviour are preserved as well.
The proof of \MT-transparency works by establishing that $\MEF[p](x)$ implements $\ME[p](x)$ when read
as a (non-runnable) specification for infinite lattices and so we relegate the
proof to Appendix \ref{app:proofs}.

\begin{proposition}
  \label{thm:MEF-MT-MT}
  $\MEF$ is \MT-\MT.
\end{proposition}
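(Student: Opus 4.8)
The plan is to reduce the \MT-\MT~property of $\MEF$ to the already-established \MT-\MT~property of $\ME$, by showing that $\MEF[p](x)$ agrees with $\ME[p](x)$ for \emph{every} lattice $\L$ --- not just finite ones --- where for infinite $\L$ we read $\sem{\ME[p]}(x) \defAs \bigcup\{\,\sem{p}(x\project\ell)@\ell \mid \ell\in\L\,\}$ purely as a (non-executable) mathematical specification. Theorem~\ref{thm:MEF-ME} already gives this equality for finite $\L$; the observation to exploit is that finiteness there is needed only for $\ME$ to be implementable, not for the set equality itself, so essentially the same argument generalises.

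The single new ingredient I would prove is a \emph{selection identity}: for every finite input $x$ and every $\ell\in\L$, writing $\ell^\star \defAs \bigsqcup\L(x\project\ell)$, one has $\ell^\star\in C(\L(x))$, $\ell\in\ell^\star\uparrow C(\L(x))$, and $x\project\ell^\star = x\project\ell$. All three parts are short: $\ell^\star$ is the join of a (finite) subset of $\L(x)$, so it lies in $C(\L(x))$; since a join is $\canFlowTo\ell$ iff each joinand is, $\ell^\star$ is in fact the \emph{greatest} element of $C(\L(x))$ below $\ell$, which is exactly the condition defining $\ell^\star\uparrow C(\L(x))$; and a label occurring in $x$ flows to $\ell$ iff it is one of the joinands of $\ell^\star$ and hence flows to $\ell^\star$, while $\ell^\star\canFlowTo\ell$ gives the reverse inclusion. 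Combining this with the partition proposition (the upward neighbourhoods of $C(\L(x))$ partition $\L$), I would then show that in the union defining $\MEF[p](x)$ the only summand contributing a label-$\ell$ element is the one indexed by $\ell^\star$, for which the additional $@(\ell^\star\uparrow C(\L(x)))$-selection is vacuous at label $\ell$; hence $\sem{\MEF[p]}(x)@\ell = \sem{p}(x\project\ell^\star)@\ell = \sem{p}(x\project\ell)@\ell = \sem{\ME[p]}(x)@\ell$. Since moreover $\{\,x\project\ell' \mid \ell'\in C(\L(x))\,\} = \{\,x\project\ell \mid \ell\in\L\,\}$ (every $x\project\ell$ equals $x\project\ell^\star$ with $\ell^\star\in C(\L(x))$, and conversely $C(\L(x))\subseteq\L$), the partial functions $\MEF[p](x)$ and the $\ME$-specification have the same domain and, when defined, agree label by label, establishing the equivalence.

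With the equivalence in place, I would deduce \MT-security and \MT-transparency of $\MEF$ by transporting the proofs of ``$\ME$ is \MT-secure'' and of Theorem~\ref{thm:ME-MT-transparent}, neither of which uses finiteness of $\L$. For noninterference: if $x\sim_\ell y$ then for each $\jmath\canFlowTo\ell$ we have $\sem{\MEF[p]}(x)@\jmath = \sem{p}(x\project\jmath)@\jmath$, which depends only on $x\project\jmath = (x\project\ell)\project\jmath$, hence only on $x\project\ell = y\project\ell$, so it equals $\sem{\MEF[p]}(y)@\jmath$; reassembling over all $\jmath\canFlowTo\ell$ yields $\MEF[p](x)\sim_\ell\MEF[p](y)$. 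For \MT-termination: if $\defined{\MEF[p]}{x}$ and $m$ is any label, then every $\ell'\in C(\L(x\project m))$ satisfies $\ell'\canFlowTo m$, so $(x\project m)\project\ell' = x\project\ell'$ and $\ell'\in C(\L(x))$, whence $\defined{p}{(x\project m)\project\ell'}$ and therefore $\defined{\MEF[p]}{x\project m}$. For \MT-transparency, assuming $p$ is \MT-secure, the run on $x\project\bigsqcup\L(x) = x$ together with the $\MT$ property of $p$ gives $\defined{\MEF[p]}{x}\iff\defined{p}{x}$, and when both hold $\sem{\MEF[p]}(x)@\ell = \sem{p}(x\project\ell)@\ell = \sem{p}(x)@\ell$ for every $\ell$ --- the last equality by noninterference of $p$ applied to $x\sim_\ell x\project\ell$, both defined because $p$ is $\MT$ --- so $\MEF[p](x) = p(x)$.

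I expect the main obstacle to be the bookkeeping relating $C(\L(x))$ to $\L$: one must verify both that the finitely many projections $\MEF$ actually computes coincide with the (possibly infinitely many) projections appearing in the $\ME$ specification, and that the $\ell\uparrow C(\L(x))$-selection re-routes every output element to exactly the run the specification would have used --- neither more nor fewer. The partition proposition supplies the disjointness and exhaustiveness needed here, and the selection identity $x\project\ell^\star = x\project\ell$ is the glue that lines the two views up; once both are available, the remaining verifications are routine rewriting, which is why the detailed argument is deferred to Appendix~\ref{app:proofs}.
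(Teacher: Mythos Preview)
Your proposal is correct and follows essentially the same route as the paper: you establish that $\MEF[p](x)$ coincides with the (possibly non-executable) specification $\ME[p](x)$ for arbitrary $\L$ and then transport the \MT-security and \MT-transparency arguments for $\ME$, exactly as the paper does in the text preceding the proposition and in the appendix proof. Your explicit ``selection identity'' $\ell^\star=\bigsqcup\L(x\project\ell)$ simply names the element $\jmath\in C(\L(x))$ that the paper's proof of Theorem~\ref{thm:MEF-ME} invokes without naming, so the two arguments match step for step.
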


\section{\label{sec:TITI} Termination Insensitive Noninterference is Transparently Enforceable}
In this section we answer an open question in the literature \cite{bielova2016taxonomy}
by constructing a \TI-\TI~enforcement mechanism.
We start by briefly exploring why $\ME$ and $\MEF$ are not \TI-transparent.
The issue lies in preserving the termination of \TI-secure programs.
Put simply, if $p$ is a \TI-secure program and $p(x)$ is defined, there is no guarantee
that $p(x\project\ell)$ is defined and so there is no guarantee that $\ME[p](x)$ is defined.

Overcoming this limitation of $\ME$ boils down to two observations.
The first observation is that $\ME$ fails to be \TI-transparent only because of
termination, which boils down to the choice to use $x\project\ell$ as the input
to the run of $p$ at level $\ell$.
Because there is no guarantee that $\defined{p}{x}$ implies $\defined{p}{x\project\ell}$,
we can't produce the output from the run at $\ell$, even though $p(x)$ is defined.
If we can choose some other $x' \sim_\ell x$ such that $\defined{p}{x'}$ instead,
we may be able to get something done.
The second observation is that if $\defined{p}{x}$, then for each $\ell$ there
exists at least one $x'$ such that $x' \sim_\ell x\project\ell$ and
$\defined{p}{x'}$, namely $x$.
We can unfortunately not simply run $p(x)$ instead of $p(x\project\ell)$
however, as this may not be secure.
Instead, we need to find $x'$ such that $x' \sim_\ell x$ and $\defined{p}{x'}$
based only on $x\project\ell$, $\ell$, and $p$.

The question now becomes, how do we pick such an $x'$?
One answer is to enumerate all $x' \sim_\ell (x\project\ell)$ in a deterministic
order and dove-tail, that is run ``in parallel'', all $p(x')$ and pick the
first $x'$ such that $p(x')$ terminates.
Note that this does not mean we pick the first $x'$ \emph{in the enumeration order},
rather we pick the $x'$ such that, during dove-tailing, $p(x')$ is the first run to
terminate.
In the rest of this section we formalise this simple idea and show that by using
this one trick, we obtain a \TI-\TI~enforcement mechanism.

If $p : \Pow{A\times\L} \pto \Pow{B\times\L}$ is a program, $\ell$ a level,
$x$ an input to $p$, and $f : \mathbb{N} \to \Pow{A\times\L}$ an enumeration of $\Pow{A\times\L}$
then the following function is partial recursive:
$$
\sem{\psi^\ell_{p,x}}(i) \defAs
\begin{cases}
  &f(i)\ \text{if}\ f(i) \sim_\ell x\ \text{and}\ \defined{p}{f(i)}\\
  &\text{divergent otherwise}
\end{cases}
$$
This means that, from Corollary 5.V(a) in \cite{rogers1967theory}, as long as
there exists an $x' \sim_\ell x$ such that $\defined{p}{x'}$, then there exists a
total recursive function $\phi^\ell_{p,x}$ with the same range as $\psi^\ell_{p, x}$.
This is where the dove-tailing happens, $\phi^\ell_{p,x}$ more or less works
by interleaving the executions of $p(f(i))$ for all $i$ until one terminates.
There are two facts about $\phi$ we use when constructing our
\TI-\TI~enforcement mechanism:
\begin{enumerate}
  \item If $x \sim_\ell y$ then $\sem{\phi^\ell_{p, x\project\ell}}(i) = \sem{\phi^\ell_{p, y\project\ell}}(i)$.
  \item If $p$ is \TI-secure, then $\sem{p}(\phi^\ell_{p, x\project\ell}(i)) \sim_\ell \sem{p}(x)$ provided
    both runs terminate.
\end{enumerate}
From these two observations we define our \TI-\TI~enforcement mechanism
$\METI$ for ``multi-execution search for termination''.
\begin{definition}
  Define the enforcement mechanism $\METI$ as:
  $$
  \sem{\METI[p]}(x) \defAs
  \begin{cases}
    &\text{divergent}\ \text{if}\ \notdefined{p}{x}\\
    &\cup\{\ \sem{p}(\phi^\ell_{p,x\project\ell}(0)) @ (\ell \uparrow C(\L(x)))\\
    &\ \ |\ \ell \in C(\L(x)) \}\;\; \text{otherwise}
  \end{cases}
  $$
\end{definition}
To see how $\METI$ works, consider again the program $\divIfPre$ from Example
\ref{ex:running-termination}.
$$
\divIfPre(x) \defAs \text{if}\ 1^\HIGH \in x\ \text{then}\ \udef\ \text{else}\ \emptyset
$$
This program is \TI-secure, and so we expect that $\METI$ should preserve its semantics.
Running $\METI[\divIfPre](\{1^\HIGH\})$ diverges, as
$\notdefined{\divIfPre}{\{1^\HIGH\}}$, and assuming without loss of generality
that $\phi^\ell_{\divIfPre,\emptyset}(0) = \emptyset$ for all $\ell$, running
$\METI[\divIfPre](\emptyset)$ yields the expected result of $\emptyset$.

The proof of security for $\METI$ relies on the first observation above, and the proof
of transparency relies on the second.
\begin{theorem}
  $\METI$ is \TI-secure
\end{theorem}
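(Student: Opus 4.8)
The plan is to show $\METI[p]$ is noninterfering, since it is trivially \TI\ (every program is). Fix a level $\ell$ and inputs $x \sim_\ell y$ with both $\sem{\METI[p]}(x)$ and $\sem{\METI[p]}(y)$ defined; the goal is $\sem{\METI[p]}(x) \sim_\ell \sem{\METI[p]}(y)$, i.e.\ equality of the $\ell$-projections. Since $z\project\ell = \bigcup\{z@\jmath \mid \jmath \canFlowTo \ell\}$ for any $z$, it suffices to prove $\sem{\METI[p]}(x)@\jmath = \sem{\METI[p]}(y)@\jmath$ for every $\jmath \canFlowTo \ell$. Because both outputs are defined we are in the ``otherwise'' branch, so $\defined{p}{x}$ and $\defined{p}{y}$; first I would record that this makes every $\phi^{\ell'}_{p,x\project\ell'}$ (for $\ell' \in C(\L(x))$) total recursive, with $\phi^{\ell'}_{p,x\project\ell'}(0)$ some $x'$ satisfying $\defined{p}{x'}$ --- the required witness $x' \sim_{\ell'} x\project\ell'$ with $\defined{p}{x'}$ is $x$ itself (projection is idempotent), invoking Corollary 5.V(a) of \cite{rogers1967theory}. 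Hence all the terms $\sem{p}(\phi^{\ell'}_{p,\cdot}(0))$ occurring in $\sem{\METI[p]}(x)$ and $\sem{\METI[p]}(y)$ are defined.

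Next I would isolate the unique run contributing the label $\jmath$. Since the upward neighbourhoods $\{\,\ell' \uparrow C(\L(x)) \mid \ell' \in C(\L(x))\,\}$ partition $\L$ (the partition proposition above), there is a unique $\ell'_x \in C(\L(x))$ with $\jmath \in \ell'_x \uparrow C(\L(x))$, and since the term for $\ell'$ in $\sem{\METI[p]}(x)$ contributes only labels in $\ell' \uparrow C(\L(x))$, we get $\sem{\METI[p]}(x)@\jmath = \sem{p}(\phi^{\ell'_x}_{p,x\project\ell'_x}(0))@\jmath$; symmetrically $\sem{\METI[p]}(y)@\jmath = \sem{p}(\phi^{\ell'_y}_{p,y\project\ell'_y}(0))@\jmath$ for the analogous $\ell'_y \in C(\L(y))$.

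The crux, and the step I expect to be the main obstacle, is proving $\ell'_x = \ell'_y$ despite $\L(x)$ and $\L(y)$ generally differing on high labels. Using the ``greatest label'' reading of upward neighbourhoods, $\ell'_x$ is the greatest element of $C(\L(x))$ below $\jmath$; since $C(\L(x))$ is closed under joins this equals $\bigsqcup\{\lambda \in \L(x) \mid \lambda \canFlowTo \jmath\}$. As $\jmath \canFlowTo \ell$, every such $\lambda$ satisfies $\lambda \canFlowTo \ell$, so $\{\lambda \in \L(x) \mid \lambda \canFlowTo \jmath\} = \{\lambda \in \L(x\project\ell) \mid \lambda \canFlowTo \jmath\}$, and $x\project\ell = y\project\ell$ gives $\L(x\project\ell) = \L(y\project\ell)$, whence this set equals $\{\lambda \in \L(y) \mid \lambda \canFlowTo \jmath\}$ and therefore $\ell'_x = \ell'_y$. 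Write $\jmath^\ast$ for this common value, noting $\jmath^\ast \canFlowTo \jmath \canFlowTo \ell$.

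Finally I would close the argument: since $\jmath^\ast \canFlowTo \ell$ and $x \sim_\ell y$, projections compose to give $x \sim_{\jmath^\ast} y$, so observation (1) on $\phi$ yields $\sem{\phi^{\jmath^\ast}_{p,x\project\jmath^\ast}}(0) = \sem{\phi^{\jmath^\ast}_{p,y\project\jmath^\ast}}(0)$. Applying $p$ to this common element and selecting $@\jmath$ gives $\sem{\METI[p]}(x)@\jmath = \sem{\METI[p]}(y)@\jmath$; as $\jmath \canFlowTo \ell$ was arbitrary, $\sem{\METI[p]}(x) \sim_\ell \sem{\METI[p]}(y)$, and $\METI$ is \TI-secure.
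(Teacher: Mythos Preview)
Your proof is correct and follows essentially the same route as the paper: for each label $\jmath \canFlowTo \ell$ you identify the unique element of $C(\L(x))$ (resp.\ $C(\L(y))$) whose upward neighbourhood contains $\jmath$, argue these coincide because only labels below $\ell$ are involved and $x \sim_\ell y$, and then invoke observation~(1) on $\phi$ to finish. The paper's proof is terser but structurally identical; your explicit formula $\ell'_x = \bigsqcup\{\lambda \in \L(x) \mid \lambda \canFlowTo \jmath\}$ and the accompanying justification make the ``greatest element'' step more transparent than the paper's one-line assertion.
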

\begin{proof}
  To see that $\METI$ is \TI-secure, consider any level $\ell$, program $p$, and
  inputs $x \sim_\ell y$.
  It is the case that for all $\ell' \canFlowTo \ell$ the greatest $\jmath$ in
  $C(\L(x))$ and $C(\L(y))$ such that $\jmath \canFlowTo \ell'$ are
  the same, as $\L(x\project\ell') = \L(y\project\ell')$.
  If both $\defined{\METI[p]}{x}$ and $\defined{\METI[p]}{y}$ then:
  \begin{alignat*}{2}
       &\sem{\METI[p]}(x)&&@\ell'\\
    =\ &\sem{p}(\phi^{\jmath}_{p,x\project\jmath}(0))&&@\ell'\\
    =\ &\sem{p}(\phi^{\jmath}_{p,y\project\jmath}(0))&&@\ell'\\
    =\ &\sem{\METI[p]}(y)&&@\ell'
  \end{alignat*}
  From which we conclude that $\sem{\METI[p]}(x) \sim_\ell \sem{\METI[p]}(y)$, in other words
  $\METI$ is \TI-secure.
\end{proof}
\begin{theorem}
  $\METI$ is \TI-transparent.
\end{theorem}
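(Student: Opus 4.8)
The plan is to mirror the proof that $\ME$ is \MT-transparent (Theorem \ref{thm:ME-MT-transparent}), splitting the argument into a termination part and an output part. Fix a \TI-secure program $p$ and an input $x$. For termination: if $\notdefined{p}{x}$ then $\sem{\METI[p]}(x)$ diverges by the first clause of the definition, matching $\sem{p}(x)$; if $\defined{p}{x}$, then for every $\ell \in C(\L(x))$ the set $\{\,x' \mid x' \sim_\ell x\project\ell,\ \defined{p}{x'}\,\}$ is nonempty, since it contains $x$ itself ($x \sim_\ell x\project\ell$ because $\project\ell$ is idempotent). By the cited corollary of \cite{rogers1967theory}, $\phi^\ell_{p,x\project\ell}$ is then a genuine total recursive function and $\phi^\ell_{p,x\project\ell}(0)$ is a value on which $p$ halts; since $C(\L(x))$ is finite and each $@$-selection is finite, $\defined{\METI[p]}{x}$. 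Hence $\defined{\METI[p]}{x} \iff \defined{p}{x}$.

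For the output, write $N_\ell = \ell \uparrow C(\L(x))$ and $x'_\ell = \phi^\ell_{p,x\project\ell}(0)$. The partition proposition gives that $\{\,N_\ell \mid \ell \in C(\L(x))\,\}$ partitions $\L$, so $\sem{\METI[p]}(x) = \bigcup_\ell \sem{p}(x'_\ell)@N_\ell$ and $\sem{p}(x) = \bigcup_\ell \sem{p}(x)@N_\ell$ are both unions over the same partition of $\L$; it therefore suffices to prove $\sem{p}(x'_\ell)@N_\ell = \sem{p}(x)@N_\ell$ for each $\ell \in C(\L(x))$. Two easy facts feed in. First, for any $\jmath \in N_\ell$ we have $x\project\jmath = x\project\ell$: since $\L(x) \subseteq C(\L(x))$, the defining condition of $N_\ell$ forces every label of $x$ below $\jmath$ to be below $\ell$, and $\ell \canFlowTo \jmath$ gives the converse. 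Second, from $x'_\ell \sim_\ell x\project\ell \sim_\ell x$, $\defined{p}{x'_\ell}$, $\defined{p}{x}$, and the second noted fact about $\phi$, we get $\sem{p}(x'_\ell) \sim_\ell \sem{p}(x)$, i.e.\ agreement on all labels $\canFlowTo \ell$.

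The step I expect to be the real obstacle is lifting this $\ell$-equivalence of the two outputs to exact agreement on all of $N_\ell$, which may contain labels strictly above $\ell$. For $\MEF$ this was automatic, because the run at $\ell$ used the input $x\project\ell$, whose labels all lie in $\L(x)$, so $x\project\ell \sim_\jmath x$ for every $\jmath \in N_\ell$ and noninterference of $p$ closed the gap. For $\METI$ the substitute $x'_\ell = \phi^\ell_{p,x\project\ell}(0)$ is constrained only to agree with $x\project\ell$ below $\ell$: if it carries data at a label $k$ with $k \not\canFlowTo \ell$ that nonetheless flows into some $\jmath \in N_\ell$, that data can alter $\sem{p}(x'_\ell)$ at $\jmath$ and falsify $\sem{p}(x'_\ell)@N_\ell = \sem{p}(x)@N_\ell$. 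So the crux is to pin down the construction of the enumeration $f$ (hence of $\phi$) so that $\phi^\ell_{p,x\project\ell}(0)$ always returns a substitute whose labels not below $\ell$ cannot flow into $N_\ell$ — such a substitute does exist ($x$ is one, since every $k \in \L(x)$ with $k \not\canFlowTo \ell$ fails to flow into any $\jmath \in N_\ell$), but producing one uniformly from $x\project\ell$ alone is the delicate point — and then to re-check that this refined search stays nonempty (so the termination part survives) and still depends only on $x\project\ell$ (so the \TI-security argument is untouched). Once $x'_\ell \sim_\jmath x$ for every $\jmath \in N_\ell$, noninterference of $p$ (both runs terminate) gives $\sem{p}(x'_\ell)\project\jmath = \sem{p}(x)\project\jmath$, hence $\sem{p}(x'_\ell)@N_\ell = \sem{p}(x)@N_\ell$, and summing over the partition gives $\sem{\METI[p]}(x) = \sem{p}(x)$.
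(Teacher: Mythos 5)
Your decomposition --- termination via nonemptiness of the search set $\{x' \mid x' \sim_\ell x\project\ell,\ \defined{p}{x'}\}$, then output equality via the partition of $\L$ into upward neighbourhoods plus noninterference of $p$ --- is exactly the paper's strategy, and your termination half and your reduction to showing $\sem{p}(x'_\ell)@(\ell\uparrow C(\L(x))) = \sem{p}(x)@(\ell\uparrow C(\L(x)))$ for each $\ell \in C(\L(x))$ track the paper's proof closely. Where you diverge is that the paper closes the remaining step in one line: writing $\jmath$ for the greatest element of $C(\L(x))$ below an output label $\ell$, it asserts that $y \sim_\jmath x$ implies $y \sim_\ell x$ for $y = \phi^{\jmath}_{p,x\project\jmath}(0)$, and then invokes noninterference at $\ell$ to conclude $p(y)@\ell = p(x)@\ell$. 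You decline to grant exactly that implication, and your hesitation is well-founded: the dovetailing search guarantees only $y\project\jmath = x\project\jmath$, and nothing prevents the first terminating candidate from carrying data at a label $k$ with $k \not\canFlowTo \jmath$ but $k \canFlowTo \ell$, in which case $y\project\ell \neq x\project\ell$ and the noninterference step at $\ell$ is unavailable. (The implication does hold for $y = x\project\jmath$, which is why the analogous step succeeds for $\MEF$; but that is the one substitute $\METI$ cannot insist on without losing \TI-transparency of termination.) Concretely, in the lattice $\Pow{\{\Alice,\Bob\}}$ take the \Total-secure program $p(z) = \{1^{\Bob}\}$ if $0^{\Bob} \in z$ and $\emptyset$ otherwise, and $x = \{0^{\Alice}\}$: the run at $\bot$ searches over all $y$ with $y\project\bot = \emptyset$, may return $y = \{0^{\Bob}\}$, and then contributes $1^{\Bob}$ at $\bot\uparrow C(\L(x)) = \{\bot,\Bob\}$, although $p(x) = \emptyset$.

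So your proposal is incomplete, but it is incomplete at precisely the point where the paper's own argument is thinnest; you have correctly located the load-bearing step rather than missed it. Your proposed repair --- constrain the enumeration so that the substitute's labels outside $\project\ell$ cannot flow into $\ell\uparrow C(\L(x))$ --- is the right kind of fix, and the caveats you attach to it are the real ones: the constraint must be computable from $p$, $\ell$, and $x\project\ell$ alone (note that $\ell\uparrow C(\L(x))$ depends on $\L(x)$, which the run at $\ell$ must not see, on pain of breaking the \TI-security argument, which relies on $\phi^{\jmath}_{p,x\project\jmath}$ and $\phi^{\jmath}_{p,y\project\jmath}$ being \emph{identical} when $x\project\jmath = y\project\jmath$), and the constrained search must still contain a terminating candidate so that the termination half survives. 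Carrying that out is genuinely more work than either your sketch or the paper's proof supplies.
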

\begin{proof}
  Note that $\defined{p}{x} \iff \defined{\METI[p]}{x}$.
  If $p$ is \TI-secure then
  it holds that for all $\ell$ and $x \sim_\ell y$,
  $p(y)@\ell = p(x)@\ell$.
  Consider now the greatest $\jmath \in C(\L(x))$ such
  that $\jmath \canFlowTo \ell$, if $y \sim_{\jmath} x$ then
  $y \sim_\ell x$ and so $\sem{\phi^{\jmath}_{p,x\project\jmath}}(0) \sim_\ell x$
  and so, for all $\ell$:
  $$
  \METI[p](x)@\ell = p(\phi^{\jmath}_{p,x\project\jmath}(0))@\ell =p(x)@\ell
  $$
  %
  %
  Which gives us that $\METI[p](x) = p(x)$.
\end{proof}

While $\METI$ answers the question of whether or not \TI-\TI~enforcement
is possible, it does so in a somewhat unsatisfactory manner.
Computing $\METI[p](x)$ is very slow, not only because we are computing
something once for each $\ell$ in $C(\L(x))$ (which is an exponential number
of levels!), but also because the computation of $\sem{\phi^\ell_{p,x\project\ell}}(0)$
requires interleaving several (possibly an exponential number!) of runs of $p$.

\section{\label{sec:level-assignments} Transparent Enforcement is Multi-Execution}
Consider again the program $\combineAll$ from Example \ref{ex:running}.
$$\combineAll(x) \defAs \{|x|^{\bigsqcup\L(x)}\}$$
Running $\MEF[\combineAll](x)$ results in one run of $\combineAll$ for each level in
$C(\L(x))$ and thus, $\MEF[\combineAll](x)$ produces one output for each level
in $C(\L(x))$.
In the worst case, the size of the output of $\MEF[\combineAll](x)$ is exponential in the size of $x$ whereas
the original output of $p$ is polynomial sized in the size of $x$.

This observation that $\ME$ and $\MEF$ cause exponential overhead in
the output of some programs causes severe issue for multi-execution.
Many applications ``in the wild'' have large legacy code bases with vulnerabilities
and would therefore benefit from applying transparent IFC techniques.
If these techniques introduce extraneous overhead, applying them to large legacy systems
is out of the question.

As insecure programs cause $\ME$ and $\MEF$ to produce exponentially sized outputs, we
know there can be no \emph{semantics preserving} optimisation of these methods that
gets rid of this overhead.
This means that the work on OGMF \cite{OGMF} and some of the ``data-oriented''
optimisations of Algehed et al. \cite{OptimisingFSME}, can not hope to make
multi-execution practical \emph{on their own}.
Such techniques may be integral to the eventual success of multi-execution, but
not without incorporating other measures as well.

There is still hope that we may find some efficient enforcement mechanism,
but this mechanism may need to behave differently to $\ME$
on programs such as $\combineAll$ above.
Insecure programs aside, we begin by defining the least restrictive notion
of an efficient enforcement mechanism, that it ``only'' produce polynomial
overhead on already secure programs.
\begin{definition}
  An enforcement mechanism $E$ is $\TX$-efficient if and only if $E[p](x)$
  runs in time polynomial in $|p| + |x|$ whenever $p$ is \TX-secure
  and runs in time polynomial in $|x|$.
\end{definition}

In this section we consider the connection between arbitrary enforcement
mechanisms and multi-execution and show that any enforcement mechanism for
\TI-security and \MT-transparency gives rise to an efficient strategy for
multi-execution.
In spirit, it is similar to $\MEF$, we do multi-execution only for the levels
that are necessary.
We begin by generalising the definition of $\ME$ to allow it to selectively
multi-execute at particular levels.

\begin{definition}
  A partial function $\LA$ is a \emph{Level Assignment} if given a program
  $p : \Pow{A \times \L} \pto \Pow{B \times \L}$ and an input
  $x \in \Pow{A \times \L}$ we have $\LA(p, x) \subseteq \L$. 
\end{definition}

Given a level assignment $\LA$, we define the enforcement mechanism
$\ME_\LA$:
$$\sem{\ME_\LA[p]}(x) \defAs \bigcup \{\ \sem{p}(x\project\ell)@\ell\ |\ \ell \in \LA(p, x)\ \}$$
Note that $\ME_\LA$ is simply a generalisation of $\ME$.
Level assignments naturally inherit definitions from enforcement mechanisms.
\begin{definition}
  We call a level assignment $\LA$ \TX-secure if $\ME_\LA$ is \TX-secure, and \TX-transparent
  if $\ME_\LA$ is \TX-transparent.
  Furthermore, we call $\LA$ \TX-efficient if for all polynomial-time \TX-secure
  programs $\LA(p, x)$ takes time polynomial in $|p|+|x|$.
\end{definition}
Note that, if $\LA$ takes time polynomial in $|p| + |x|$ and $p$ takes time
polynomial in $|x|$, then $\ME_\LA[p](x)$ also takes time polynomial in
$|p| + |x|$.
This is because if $\LA$ is polynomial time, then it only produces a polynomially
sized output and so $\ME_\LA$ only executes a polynomial number of runs of the
polynomial time program $p$.

\begin{example}
  Deciding if a level assignment is secure and transparent can be non-trivial.
  Consider the intuitive level assignment $\LA(p,x) = \L(p(x))$.
  It allows us to compute levels without doing anything other than computing $p(x)$.
  Unfortunately, however, it is insecure in the $\HIGH-\LOW$ lattice, as illustrated
  by the following program:
  $$
  \sem{\leakLevel}(x) \defAs \text{if}\ x@\HIGH = \emptyset\ \text{then}\ \{0^\LOW\}\ \text{else}\ \emptyset
  $$
  We have that $\emptyset \sim_\LOW \{0^\HIGH\}$, but the following holds:
  $$
  \begin{array}{rcl}
    \begin{alignedat}{1}
      \leakLevel(\emptyset)        &= \{0^\LOW\}\\
      \LA(\leakLevel, \emptyset)   &= \{\LOW\}\\
      \ME_L[\leakLevel](\emptyset) &= \{0^\LOW\}
    \end{alignedat}
    & \vlinesep{20pt}{24pt} & 
    \begin{alignedat}{1}
      \leakLevel(\{0^\HIGH\})        &= \emptyset\\
      \LA(\leakLevel, \{0^\HIGH\})   &= \emptyset\\
      \ME_L[\leakLevel](\{0^\HIGH\}) &= \emptyset
    \end{alignedat}
  \end{array}
  $$
  Which gives us:
  $$
  \ME_\LA[\leakLevel](\emptyset) = \{0^\LOW\} \not\sim_\HIGH \emptyset = \ME_\LA[\leakLevel](\{0^\HIGH\})
  $$
  In conclusion, $\ME_\LA[\leakLevel]$ is not secure and so neither is $\LA$.
\end{example}

Clearly, there are some conditions which need to be met in order to produce a
secure level assignment.
The following lemma establishes one sufficient condition for $\TI$-security,
that $\LA(p,\_)$ is a secure program.
To make this precise, we observe that the notion of projection can be extended to sets
of labels by observing that $\Pow{\L}$ is isomorphic to $\Pow{1\times\L}$ where $1$
is the single-element set.
Consequently, we extend the definition of projection (and thereby also $\ell$-equivalence)
for subsets $S \subseteq \L$ as $S \project \ell = \{\ \ell' \in L\ |\ \ell' \canFlowTo \ell\ \}$.
\begin{restatable}{lemma}{lemmaLsecure}
  \label{lemma:L-secure}
  If the program $\lambda x.\ \LA(p,x)$ is \TI-secure for all $p$, then $\LA$ is \TI-secure.
\end{restatable}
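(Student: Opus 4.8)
The plan is to show that if $\lambda x.\ \LA(p,x)$ is \TI-secure for every $p$, then $\ME_\LA$ is noninterfering, which is exactly \TI-security for the level assignment. So fix a program $p$, a level $\ell$, and inputs $x \sim_\ell y$ with both $\ME_\LA[p](x)$ and $\ME_\LA[p](y)$ defined; the goal is $\ME_\LA[p](x) \sim_\ell \ME_\LA[p](y)$, i.e. agreement after $\project\ell$. It suffices to show that for every $\jmath \canFlowTo \ell$ the two outputs agree on their $\jmath$-selections, since $z\project\ell = \bigcup\{z@\jmath \mid \jmath\canFlowTo\ell\}$. By the definition of $\ME_\LA$, the $\jmath$-labeled data in $\ME_\LA[p](x)$ comes exactly from the run $\sem{p}(x\project\jmath)@\jmath$ when $\jmath \in \LA(p,x)$, and is empty otherwise (and likewise for $y$); so the real content of the argument is (i) that $\jmath \in \LA(p,x) \iff \jmath \in \LA(p,y)$ for $\jmath\canFlowTo\ell$, and (ii) that when $\jmath$ is selected, $\sem{p}(x\project\jmath)@\jmath = \sem{p}(y\project\jmath)@\jmath$.

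For (ii), note that since $x\sim_\ell y$ and $\jmath\canFlowTo\ell$ we have $x\sim_\jmath y$, hence $x\project\jmath = y\project\jmath$, so $\sem{p}(x\project\jmath)$ and $\sem{p}(y\project\jmath)$ are literally the same value — this part is immediate and does not even use any hypothesis on $\LA$. For (i), this is where \TI-security of $\lambda x.\ \LA(p,x)$ enters: \TI-security (noninterference) for that program says that whenever $x' \sim_{\ell} y'$ and both $\LA(p,x')$ and $\LA(p,y')$ are defined, then $\LA(p,x') \sim_{\ell} \LA(p,y')$, i.e. $\LA(p,x')\project\ell = \LA(p,y')\project\ell$ under the extended projection $S\project\ell = \{\ell' \in \L \mid \ell'\canFlowTo\ell\} \cap S$. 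Since $x\sim_\ell y$, this gives $\LA(p,x)\project\ell = \LA(p,y)\project\ell$, and a label $\jmath\canFlowTo\ell$ lies in $\LA(p,x)$ iff it lies in $\LA(p,x)\project\ell$, so indeed $\jmath \in \LA(p,x) \iff \jmath \in \LA(p,y)$. Combining (i) and (ii), for each $\jmath\canFlowTo\ell$ the $\jmath$-selections of the two outputs coincide, and taking the union over all such $\jmath$ yields $\ME_\LA[p](x)\project\ell = \ME_\LA[p](y)\project\ell$, as required.

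The one point that needs a little care — and the main (mild) obstacle — is the definedness bookkeeping: \TI-security of $\lambda x.\ \LA(p,x)$ only constrains behaviour when both $\LA(p,x)$ and $\LA(p,y)$ are defined, and $\ME_\LA[p](x)$ being defined must be unpacked to guarantee this. I would argue that $\defined{\ME_\LA[p]}{x}$ requires $\LA(p,x)$ to be defined (otherwise the union is over an undefined index set), and likewise for $y$; with both defined, the noninterference hypothesis applies directly. A secondary subtlety is making sure the extended projection on $\Pow{\L}$ interacts correctly with membership of a single label $\jmath$ — but this is exactly the observation, already made in the text, that $\Pow{\L} \cong \Pow{1\times\L}$, so $\jmath \in S\project\ell$ iff $\jmath\in S$ and $\jmath\canFlowTo\ell$. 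Everything else is routine equational reasoning of the same flavour as the proof that $\ME$ is \MT-secure.
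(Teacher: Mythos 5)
Your proof is correct and follows essentially the same route as the paper's: reduce $\ell$-equivalence of the outputs to agreement of the $\jmath$-selections for each $\jmath \canFlowTo \ell$, use \TI-security of $\lambda x.\ \LA(p,x)$ to get $\jmath \in \LA(p,x) \iff \jmath \in \LA(p,y)$, and use $x\project\jmath = y\project\jmath$ for the selected case. If anything, you are more explicit than the paper about the two points it waves through ("clearly"), namely the membership equivalence via the extended projection and the definedness of $\LA(p,x)$ and $\LA(p,y)$.
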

\begin{proof}
  See Appendix \ref{app:proofs}. 
\end{proof}
This lemma raises two immediate questions:
\begin{enumerate}
  \item Is the implication strict?
  \item Why \TI~and not \MT-secure?
\end{enumerate}
The answer to the first question is that the implication is indeed strict, as demonstrated by
the following example.
\begin{example}
  Provided that $\L$ has at least one level $\ell$ such that $\ell \not\canFlowTo \bot$,
  there exists an \MT-secure $\LA$ for which $\LA(p,\_)$ is not secure for all $p$.
  To see this, consider the program:
  $$\sem{\nameify{empty}}(x) \defAs \emptyset$$
  and define $\LA$ to be:
  $$
  \LA(q,x) \defAs
    \begin{cases}
      \emptyset\ &\text{if}\ \ell \in \L(x)\ \text{and}\ q = \nameify{empty}\\
      \{\bot\}\  &\text{if}\ \ell \not\in \L(x)\ \text{and}\ q = \nameify{empty}\\
      \emptyset\ &\text{otherwise}
    \end{cases}
  $$
  In the case where $q = \nameify{empty}$ we have that:
  $$\sem{\ME_\LA[q]}(x) = \emptyset = \sem{\ME_\LA[q]}(y)$$
  giving us that $\ME_\LA[q]$ is noninterfering.
  Furthermore, in the case where $q \not= \nameify{empty}$, we have that $\sem{\ME_\LA[q]}(x) = \emptyset$, which
  is also \MT-secure.
\end{example}
To see the answer to the second question, consider the next example.
\begin{example}
  Consider again the two-point lattice $\LOW \canFlowTo \HIGH$.
  The level assignment $\LA(p, x) \defAs \{\ \HIGH\ |\ \HIGH \not\in \L(x)\}$ satisfies the property
  that $\lambda x.\ \LA(p, x)$ is an \MT-secure program for all $p$.
  Recall the program $\divIfAbs$:
  $$
  \divIfAbs(x) \defAs \text{if}\ 1^\HIGH \not\in x\ \text{then}\ \udef\ \text{else}\ \emptyset
  $$
  Notice that $\notdefined{\divIfAbs}{\emptyset}$ and therefore we have that:
  $$
    \begin{aligned}
      &\LA(\divIfAbs, \emptyset) = \{\HIGH\}\\
      &\notdefined{\ME_\LA[\divIfAbs]}{\emptyset}
    \end{aligned}
  $$
  Furthermore, as a general fact we have that if $\LA(p, x) = \emptyset$ then
  $\ME_\LA[p](x) = \emptyset$ for all $\LA$, $p$, and $x$.
  Therefore, the following holds:
  $$
    \begin{aligned}
      \LA(\divIfAbs, \{1^\HIGH\})           &= \emptyset\\
      \sem{\ME_\LA[\divIfAbs]}(\{1^\HIGH\}) &= \emptyset
    \end{aligned}
  $$
  However, because $\emptyset = \{1^\HIGH\}\project \LOW$, we see that
  $\ME_\LA[\divIfAbs]$ is not an \MT-secure program, as it is not
  \MT-terminating; $\ME_\LA[\divIfAbs](\{1^\HIGH\})$ is defined while
  $\ME_\LA[\divIfAbs](\{1^\HIGH\}\project\LOW)$ is not.
\end{example}

Lemma \ref{lemma:L-secure} suggests the following class of level assignment
techniques that are secure by construction.
\begin{restatable}{proposition}{propLAC}
  \label{prop:label-intersect}
  If $\LA(p, x) = C(\L(x)) \cap \LA'(p)$ for some total $\LA'$, then $\LA$ is
  an \MT-secure level assignment.
\end{restatable}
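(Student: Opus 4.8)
The plan is to reduce the statement to Lemma~\ref{lemma:L-secure} by showing that the program $\lambda x.\ \LA(p,x) = C(\L(x)) \cap \LA'(p)$ is \TI-secure for every $p$. Since \TI-security for this program means exactly noninterference (every program is trivially \TI), it suffices to prove: for every $\ell$ and every $x \sim_\ell y$, we have $\LA(p,x) \project \ell = \LA(p,y) \project \ell$. Here $\LA'(p)$ is a fixed subset of $\L$ not depending on the input, so the only input-dependent part is $C(\L(x))$; the heart of the argument is therefore that $C(\L(x)) \project \ell$ depends only on $x \project \ell$, i.e. on $\L(x\project\ell)$.

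First I would unfold the definitions: $x \sim_\ell y$ means $x\project\ell = y\project\ell$, hence $\L(x\project\ell) = \L(y\project\ell)$, i.e. $\{\jmath \in \L(x)\ |\ \jmath \canFlowTo \ell\} = \{\jmath \in \L(y)\ |\ \jmath \canFlowTo \ell\}$. Call this common set $T$. Next I would show that for any $S \subseteq \L$, the set $C(S)\project\ell = \{\bigsqcup S'\ |\ S' \subseteq S,\ \bigsqcup S' \canFlowTo \ell\}$ equals $C(\{\jmath \in S\ |\ \jmath\canFlowTo\ell\})\project\ell = C(S\project\ell)$. The key observation is that $\bigsqcup S' \canFlowTo \ell$ holds iff every element of $S'$ flows to $\ell$ (using that $\canFlowTo$ is a partial order with $\lub$ its join), so the subsets $S'$ of $S$ contributing to $C(S)\project\ell$ are precisely the subsets of $S\project\ell$. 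Applying this with $S = \L(x)$ and $S = \L(y)$ and using $\L(x)\project\ell = \L(y)\project\ell = T$, we get $C(\L(x))\project\ell = C(T)\project\ell = C(\L(y))\project\ell$.

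Finally I would intersect with $\LA'(p)$: since $\LA'(p)$ is independent of the input, $(C(\L(x)) \cap \LA'(p))\project\ell = (C(\L(x))\project\ell) \cap \LA'(p)$, and the same for $y$, so the two projections agree. (I should double-check the trivial bookkeeping that projection distributes over intersection on $\Pow{\L}$, which is immediate from the pointwise definition $S\project\ell = \{\ell' \in S\ |\ \ell' \canFlowTo \ell\}$.) This establishes that $\lambda x.\ \LA(p,x)$ is noninterfering, hence \TI-secure, for all $p$, and Lemma~\ref{lemma:L-secure} then yields that $\LA$ is \TI-secure. To upgrade \TI-security to \MT-security I would additionally check \MT-termination of $\ME_\LA[p]$: since $\LA$ is total, $\ME_\LA[p](x)$ is defined iff every run $p(x\project\ell)$ for $\ell \in C(\L(x)) \cap \LA'(p)$ is defined, and because $C(\L(x\project\jmath)) \subseteq C(\L(x))$ for every $\jmath$ (projection only removes labels), the set of required runs for input $x\project\jmath$ is a subset of those for $x$ — so if $\ME_\LA[p](x)$ terminates, so does $\ME_\LA[p](x\project\jmath)$.

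The main obstacle I anticipate is the containment $\LA(p, x\project\jmath) \subseteq$ (something controlled by $\LA(p,x)$) needed for the \MT-termination half: one must be careful that projecting the input can only shrink $\L(x)$ and hence shrink $C(\L(x))$, so intersecting with the fixed $\LA'(p)$ preserves the inclusion $C(\L(x\project\jmath)) \cap \LA'(p) \subseteq C(\L(x)) \cap \LA'(p)$. Everything else is routine unfolding; the one genuinely load-bearing fact is the lemma-level claim that $\bigsqcup S' \canFlowTo \ell \iff \forall s \in S'.\ s \canFlowTo \ell$, which is exactly the universal property of the join.
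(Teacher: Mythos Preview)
Your proposal is correct and follows essentially the same route as the paper: both reduce noninterference to Lemma~\ref{lemma:L-secure} by showing $x \sim_\ell y \implies C(\L(x)) \sim_\ell C(\L(y))$ (you spell out the join-universal-property argument that the paper leaves implicit), and both handle \MT-termination separately via the inclusion $\LA(p, x\project\jmath) \subseteq \LA(p, x)$. The one detail you leave tacit --- that for $\ell \in C(\L(x\project\jmath))$ we have $\ell \canFlowTo \jmath$ and hence $(x\project\jmath)\project\ell = x\project\ell$, so the runs needed for $x\project\jmath$ are literally among those already performed for $x$ --- is glossed over at the same level in the paper's proof.
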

\begin{proof}
  See Appendix \ref{app:proofs}.
\end{proof}
For programs $p$ that satisfy the property that for all $x$, $\L(p(x))
\subseteq C(\L(x))$, such that $\LA'(p) \supseteq \text{domain}(\L \circ p)$ the
level assignment $\LA$ above is \MT-transparent (see Lemma
\ref{lemma:L-transparent} below).
This explains the ``computation-oriented'' optimisations of Algehed et al. \cite{OptimisingFSME}.
In their paper programs only produce outputs at levels in $C(\L(x))$ and they
attach so called ``label-expressions'', booleans expressions over lattice
elements, to program points to restrict the levels for which multi-execution is
carried out.
In one of their examples, program $p$ takes input from multiple different inputs,
but only ever writes to channels labeled with single principals like $\Alice$ or $\Bob$.
The authors then attach label-expressions on the form
$\Alice \wedge \neg \Bob \vee \neg \Alice \wedge \Bob$, which represents the set
of labels $\{\{\Alice\}, \{\Bob\}\}$ to the program $p$ during multi-execution.
Note that the sets of labels attached to a program in this scheme is constant, it does
not depend on the input $x$ and so $\LA(p, x)$ is on the form required by Proposition
\ref{prop:label-intersect}.

With an appropriate sufficient condition for security established, the natural next
question is if a similar sufficient condition exists for transparency.
\begin{restatable}{lemma}{lemmaLtransparent}
  \label{lemma:L-transparent}
  $\LA$ is \MT-transparent if for all \MT-secure programs $p$ and
  inputs $x$, we have that $\defined{p}{x}$ if and only if $\defined{\LA}{p, x}$
  and $\LA(p, x) = \L(p(x))$.
\end{restatable}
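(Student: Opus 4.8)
The statement to prove is Lemma~\ref{lemma:L-transparent}: $\LA$ is \MT-transparent if for all \MT-secure programs $p$ and inputs $x$, $\defined{p}{x} \iff \defined{\LA}{p,x}$ and $\LA(p,x) = \L(p(x))$.

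The plan is to unfold the definition of \MT-transparency for $\ME_\LA$: we must show that for every \MT-secure program $p$ and input $x$, $\sem{\ME_\LA[p]}(x) = \sem{p}(x)$, where $\sem{\ME_\LA[p]}(x) = \bigcup\{\ \sem{p}(x\project\ell)@\ell\ |\ \ell \in \LA(p,x)\ \}$. First I would handle the termination side: by hypothesis $\defined{p}{x}$ iff $\defined{\LA}{p,x}$, and when $\defined{p}{x}$ then $\LA(p,x) = \L(p(x))$ is a finite set, and since $p$ is \MT-secure, $\defined{p}{x\project\ell}$ for every $\ell$; hence each run $\sem{p}(x\project\ell)$ in the union terminates, so $\defined{\ME_\LA[p]}{x}$. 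Conversely if $\notdefined{p}{x}$ then $\notdefined{\LA}{p,x}$, so $\ME_\LA[p]$ cannot even begin and diverges. This establishes $\defined{\ME_\LA[p]}{x} \iff \defined{p}{x}$.

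Next, assuming both are defined, I would prove set equality by mutual inclusion on elements $a^\ell$. For the $\supseteq$ direction: take $a^\ell \in \sem{p}(x)$. Then $\ell \in \L(p(x)) = \LA(p,x)$, so $\ell$ is one of the levels over which the union ranges. Since $p$ is \MT-secure (hence noninterfering) and $x \sim_\ell x\project\ell$, noninterference gives $\sem{p}(x) \sim_\ell \sem{p}(x\project\ell)$, i.e.\ $\sem{p}(x)\project\ell = \sem{p}(x\project\ell)\project\ell$; since $a^\ell \in \sem{p}(x)$ and $\ell \canFlowTo \ell$, we get $a^\ell \in \sem{p}(x\project\ell)$, hence $a^\ell \in \sem{p}(x\project\ell)@\ell$, hence $a^\ell \in \sem{\ME_\LA[p]}(x)$. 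For the $\subseteq$ direction: take $a^\ell \in \sem{\ME_\LA[p]}(x)$. Then there is $\ell' \in \LA(p,x)$ with $a^\ell \in \sem{p}(x\project\ell')@\ell'$, which forces $\ell = \ell'$, so $a^\ell \in \sem{p}(x\project\ell)$ with $\ell \in \LA(p,x) = \L(p(x))$. Again by noninterference $\sem{p}(x\project\ell)\project\ell = \sem{p}(x)\project\ell$, and since $a^\ell \in \sem{p}(x\project\ell)$ with $\ell \canFlowTo \ell$, we conclude $a^\ell \in \sem{p}(x)$. Combining the two inclusions yields $\sem{\ME_\LA[p]}(x) = \sem{p}(x)$.

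I expect the main obstacle to be bookkeeping rather than conceptual: one must be careful that the hypothesis $\LA(p,x) = \L(p(x))$ is only assumed to hold for \MT-secure $p$ (which is exactly the class quantified over in the definition of transparency, so this is fine), and that the two uses of noninterference correctly exploit $x \sim_\ell x\project\ell$ — this holds because $(x\project\ell)\project\ell = x\project\ell$. A secondary subtlety is making sure the ``$a^\ell \in \ldots @\ell' \Rightarrow \ell = \ell'$'' step is justified directly from the definition of $@$. None of this is deep, so the lemma follows by straightforward unfolding; the real content was already isolated in Lemma~\ref{lemma:L-secure} and Proposition~\ref{prop:label-intersect}.
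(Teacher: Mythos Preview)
Your proof is correct and follows essentially the same approach as the paper's: establish the termination biconditional from the hypothesis and \MT, then prove set equality by mutual inclusion using noninterference on $x \sim_\ell x\project\ell$. The paper's proof is considerably terser (it compresses both inclusions into the single observation ``$a^\ell \in \sem{p}(x)$ iff $a^\ell \in \sem{p}(x\project\ell)$''), but the underlying argument is identical; your version simply makes the bookkeeping explicit.
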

\begin{proof}
  See Appendix \ref{app:proofs}. 
\end{proof}

Next we develop one of the key contributions of this paper.
We use the theory of level assignments to show that any \TI-secure
and \MT-transparent enforcement mechanism $E$ gives rise to a \TI-secure
and \MT-transparent level assignment $\LA_E$.
Furthermore, it follows trivially that multi-executing a secure, polynomial time,
programs $p$ using $\LA_E$ is only polynomially slower than executing $E[p]$.
Concretely, what this demonstrates is that any efficient solution to the problem of
\MT-transparent enforcement gives rise to an efficient solution to the problem
of computing level assignments.
\begin{definition}
  Given an enforcement mechanism $E$, define the level assignment
  $\LA_E(p, x) \defAs \L(E[p](x))$.
\end{definition}
\begin{theorem}[\TI-Security]
  \label{thm:E-secure}
  If $E$ is \TI-secure, then $\LA_E$ is a \TI-secure level assignment.
\end{theorem}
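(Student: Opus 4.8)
The plan is to apply Lemma~\ref{lemma:L-secure}: it suffices to show that for every program $p$, the program $q_p \defAs \lambda x.\ \LA_E(p,x) = \lambda x.\ \L(E[p](x))$ is \TI-secure. Since \TI~imposes no termination requirement, all that needs checking is that $q_p$ is noninterfering, i.e. that for all $\ell$ and all $x \sim_\ell y$ with $q_p(x)$ and $q_p(y)$ both defined, we have $q_p(x) \sim_\ell q_p(y)$ --- where $\sim_\ell$ on subsets of $\L$ is the projection-based equivalence extended just above Lemma~\ref{lemma:L-secure}. Unwinding the definitions, $q_p(x) \sim_\ell q_p(y)$ means $\{\ell' \in \L(E[p](x)) \mid \ell' \canFlowTo \ell\} = \{\ell' \in \L(E[p](y)) \mid \ell' \canFlowTo \ell\}$, which is exactly the set of labels appearing in $E[p](x)\project\ell$ versus $E[p](y)\project\ell$.

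First I would fix $p$, $\ell$, and $x \sim_\ell y$. Since $E$ is an enforcement mechanism it is a total function, so $E[p]$ is a well-defined program, and $q_p(x)$ is defined precisely when $E[p](x)$ is. Because $E$ is \TI-secure, $E[p]$ is \TI-secure, hence noninterfering; so from $x \sim_\ell y$ and definedness of $E[p](x)$ and $E[p](y)$ we get $E[p](x) \sim_\ell E[p](y)$, that is $E[p](x)\project\ell = E[p](y)\project\ell$. Taking labels of both sides, $\L(E[p](x)\project\ell) = \L(E[p](y)\project\ell)$. The final step is the observation that for any set $z \in \Pow{B \times \L}$, the labels occurring in $z\project\ell$ are exactly $\{\ell' \in \L(z) \mid \ell' \canFlowTo \ell\} = \L(z)\project\ell$; applying this with $z = E[p](x)$ and $z = E[p](y)$ yields $\L(E[p](x))\project\ell = \L(E[p](y))\project\ell$, i.e. $q_p(x) \sim_\ell q_p(y)$. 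Hence $q_p$ is noninterfering, so \TI-secure, and Lemma~\ref{lemma:L-secure} gives that $\LA_E$ is a \TI-secure level assignment.

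I expect the only real subtlety to be bookkeeping around the overloaded notation: $\sim_\ell$ is being used simultaneously on $\Pow{B\times\L}$ and on $\Pow{\L}$ via the isomorphism $\Pow{\L} \cong \Pow{1\times\L}$, and one must check that "taking labels" commutes with "$\ell$-projection" in the sense $\L(z\project\ell) = \L(z)\project\ell$ under that identification. This is immediate from the defining formulas $z\project\ell = \{a^\jmath \mid a^\jmath \in z,\ \jmath \canFlowTo \ell\}$ and $S\project\ell = \{\ell' \in \L \mid \ell' \canFlowTo \ell\}$ --- a label $\ell'$ appears in $z\project\ell$ iff some $a^{\ell'}\in z$ with $\ell'\canFlowTo\ell$ iff $\ell' \in \L(z)$ and $\ell'\canFlowTo\ell$ --- but it is the one place where care is warranted. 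Everything else is a direct invocation of the hypothesis that $E$ is \TI-secure together with Lemma~\ref{lemma:L-secure}; no termination analysis is needed because, unlike the \MT~case, \TI~is vacuous on the divergence channel.
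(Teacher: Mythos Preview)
Your proposal is correct and follows exactly the paper's approach: the paper's proof is literally ``Immediate by Lemma~\ref{lemma:L-secure}'', and you have simply spelled out the verification that its hypothesis holds, namely that $\lambda x.\ \L(E[p](x))$ is noninterfering because $E[p]$ is and $\L(-)$ commutes with $\ell$-projection. The bookkeeping you flag around the overloaded $\sim_\ell$ and the identity $\L(z\project\ell)=\L(z)\project\ell$ is precisely the content the paper elides with the word ``immediate''.
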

\begin{proof}
  Immediate by Lemma \ref{lemma:L-secure}.
\end{proof}
\begin{theorem}[\MT-Transparency]
  \label{thm:E-transparent}
  If $E$ is \MT-transparent, then so is $\LA_E$.
\end{theorem}
\begin{proof}
  Immediate by Lemma \ref{lemma:L-transparent}.
\end{proof}
\begin{proposition}
  \label{prop:efficient}
  If $E[p](x)$ runs in time polynomial in $|p| + |x|$ for all polynomial
  time \MT-secure programs $p$ and inputs $x$, then $\LA_E$ is \MT-efficient.
\end{proposition}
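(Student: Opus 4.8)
The plan is to unpack the definition of $\LA_E(p,x) = \L(E[p](x))$ and observe that computing this set requires only (1) running $E[p](x)$ to obtain its output, and then (2) extracting the set of labels appearing in that output. First I would use the hypothesis that $E[p](x)$ runs in time polynomial in $|p| + |x|$ whenever $p$ is polynomial-time and \MT-secure. Since \MT-security implies \TI-security by Proposition \ref{prop:security-ordering}, and since \MT-efficiency (per the definition) is stated in terms of \TI-secure programs, I would need to be slightly careful here: actually re-reading the definition of \TX-efficiency, it quantifies over \TX-secure $p$, so \MT-efficiency of $\LA_E$ requires the bound to hold for \MT-secure programs, which is exactly the hypothesis. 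So the match is direct.

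The key steps, in order: (i) Fix an arbitrary polynomial-time \MT-secure program $p$ and input $x$. (ii) By hypothesis, $E[p](x)$ terminates in time $q(|p| + |x|)$ for some polynomial $q$; in particular its output has size at most $q(|p| + |x|)$. (iii) Extracting $\L(E[p](x)) = \{\ell \mid a^\ell \in E[p](x)\}$ from the output is a linear-time (or at worst quadratic-time, accounting for deduplication) operation in the size of $E[p](x)$, hence polynomial in $|p| + |x|$. (iv) Composing, $\LA_E(p,x)$ is computable in time polynomial in $|p| + |x|$, which is the definition of \MT-efficiency for $\LA_E$.

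I do not expect any real obstacle here; the proposition is essentially a bookkeeping observation that projecting to labels does not blow up the running time. The one subtlety worth a sentence is ensuring that the size of the \emph{output} of $E[p](x)$ is polynomially bounded — but this is automatic, since a machine running in polynomial time cannot write a superpolynomially large output. Thus the proof is short: it should simply note that $\LA_E(p,x)$ is computed by running $E[p](x)$ and then reading off the labels, and both phases are polynomial in $|p| + |x|$ under the stated hypothesis.
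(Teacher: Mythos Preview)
Your proposal is correct. The paper in fact offers no proof for this proposition at all; it is stated as self-evident, presumably because $\LA_E(p,x) = \L(E[p](x))$ is computed by running $E[p](x)$ and then scanning the output for labels, exactly as you describe. Your write-up supplies the obvious details the authors chose to omit, including the point that the output of a polynomial-time computation is itself polynomially bounded.
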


This polynomial-time correspondence between multi-execution and
any transparent enforcement mechanism means that any limit on the efficiency 
of multi-execution is a limit on the efficiency of transparent enforcement.
This is a sobering thought, to the best of our knowledge there is no efficient
way to do multi-execution.
In fact, it is not clear that there should be any efficient
way to do it.
In a sense, the level assignment problem asks that we produce a program $\LA$
that is capable of efficiently answering arbitrarily tricky questions about program
executions.


\section{\label{sec:efficiency} Efficient Black-Box Enforcement is Impossible for Decentralised Lattices}
As seen empirically in previous work, transparent enforcement suffers
from exponential overheads \cite{MF, FSME, OptimisingFSME, wong2018faster}.
In this section we argue that there may be fundamental reasons why this is the
case by showing that it is impossible to do efficient, i.e. polynomial
overhead, black-box enforcement.
Intuitively, an enforcement mechanism $E$ is black-box if
$E[p](x)$ can only do a series of tests of $p : \Pow{A \times\L} \pto \Pow{B\times\L}$, 
where each test consists of a test-input $x_t \in \Pow{A\times\L}$ and 
a ``continuation'' which takes the result of $p(x_t)$ and either returns
a final result $E[p](x) \in \Pow{B\times\L}$ or continues testing.
Thus, we define the set of \emph{test sequences} $T$ from $A$ to $B$ inductively as:
$$
T ::= \Pow{B \times \L}\ |\ \Pow{A \times \L} \times (\Pow{B \times \L} \pto T)
$$
A test sequence $t \in T$ is either finished, $t \in \Pow{B \times \L}$, or it is
a test $a \in \Pow{A \times \L}$ together with a continuation $c \in \Pow{B \times \L} \pto T$.

For example, $\MEF$ can be understood as producing a test sequence where it tests
all $x\project\ell$ for $\ell$ in $C(\L(x))$, keeps the relevant parts of the
output around, and constructs the final union when it produces its finished result.

We define the program $\text{Exec} : \text{Program} \times T \pto \Pow{B\times\L}$,
which takes a program $p$ and a test sequence $t$ and produces the result of evaluating $t$ with $p$.
\begin{alignat*}{2}
  &\text{Exec}(p, (y, c)) &&\defAs \text{Exec}(p, c(\sem{p}(y)))\\
  &\text{Exec}(p, z)      &&\defAs z
\end{alignat*}
If $e : \Pow{A \times \L} \pto T$ takes an input and produces a test sequence,
then we can execute $e$ by running the following program $\text{Exec}(p, e(x))$:
If $\text{Exec}(p, e(x))$ terminates, we can obtain the \emph{trace} of
tested inputs and the corresponding output of $p$ (i.e. a list of $\Pow{A
\times \L} \times \Pow{B \times \L}$) by computing $\text{Trace}(p, e(x))$:
\begin{alignat*}{2}
  &\text{Trace}(p, (y, c)) &&\defAs (y, p(y)) : \text{Trace}(p, c(\sem{p}(y)))\\
  &\text{Trace}(p, z)      &&\defAs []
\end{alignat*}
Because we are interested in entirely black-box methods, we require that the
enforcement mechanism $E$ should work for any choice of lattice.
One consequence of this choice is that $E$ must \emph{discover} elements of $\L$.
We can formalise the idea that $E$ must discover its labels by defining a
trace to be consistent with an initially known set of labels $S$ inductively as:
\begin{mathpar}
  \frac{\;}{\text{Consistent}(S, [])}

  \frac{\L(y) \subseteq C(S)\;\;\;\;\text{Consistent}(S \cup \L(z), t)}{\text{Consistent}(S, (y, z) : t)}
\end{mathpar}
With these definitions in place, we can define what it means for an enforcement mechanism
to be black-box:
\begin{definition}
  $E$ is a \emph{black-box} enforcement mechanism if and only if there exists a function
  $e : \Pow{A \times \L} \pto T$ such that:
  $$
  \sem{E[p]}(x) = \text{Exec}(p, e(x))
  $$
  We call $\text{Trace}(p, e(x))$ the trace of $E$ at $p, x$
  and require that $\text{Consistent}(\emptyset, \text{Trace}(p, e(x)))$
  always holds.
\end{definition}
We refer to set of test inputs in the trace $t$ produced by $E$ at $p$ and $x$,
i.e. the set of first elements of the tuples in $t$, as the \emph{test-set} of $E$.
Note that our definition pre-supposes that $E$ initially does not know
\emph{any} of the labels in $\L$.
This restriction can be lifted to allow $E$ to have prior knowledge of a finite
proper subset of $\L$ and all the results in this section still hold.

The intuition for the proof that efficient black-box enforcement does not exist
is that all an efficient black-box enforcement mechanism $E$ can do is run the
program $p$ a polynomial number of times.
This means that there is always a large number of subsets of the input
$x$ which $E$ has not tried $p$ on.
We exploit this fact by constructing, from $E$, two secure programs $p$ and $q$
and an insecure program $w$.
We then pick two $\ell$-equivalent inputs such that $w$ behaves like $p$ on one input and
$q$ on the other.
Because $E$ is black-box, it can not tell the two runs of $w$ apart from the runs of $p$ and
$q$, and so $E$ is forced to preserve the semantics of $w$, which is insecure.

\begin{theorem}
  \label{thm:efficient-black-box-impossible}
  There is no \Total-efficient \TI-\Total~black-box enforcement mechanism.
\end{theorem}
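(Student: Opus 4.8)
The plan is to argue by contradiction. Assume $E$ is a \Total-efficient, \TI-secure, \Total-transparent black-box enforcement mechanism, witnessed by some $e : \Pow{A\times\L}\pto T$. I will instantiate the framework with a decentralised lattice — a powerset lattice over $n$ principals, $\Pow{\{1,\ldots,n\}}$ — and a family of inputs $x_n$ of size polynomial in $n$ whose label set generates an exponentially large closure set $C(\L(x_n))$; a natural choice is $x_n = \{\, 0^{\{i\}} \mid 1 \le i \le n \,\}$, so that $C(\L(x_n))$ is in bijection with the subsets of $\{1,\ldots,n\}$. Because $E$ is \Total-efficient and black-box, for every polynomial-time \Total-secure program $p$ the run $\sem{E[p]}(x_n) = \text{Exec}(p, e(x_n))$ terminates after at most polynomially many tests of $p$; in particular the test-set of $E$ at $p,x_n$ has size polynomial in $n$, while there are $2^n$ ``sub-configurations'' of $x_n$, one for each level in $C(\L(x_n))$. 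This quantitative gap — polynomially many tests against exponentially many sub-configurations — is the heart of the argument: there is a super-polynomially large family of sub-configurations on which $E$ never tests $p$.

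Before exploiting the gap I will record a structural consequence of \TI-security: for every program $r$ and every level $\ell$, the value $\sem{E[r]}(x)@\ell$ depends on $x$ only through $x\project\ell$ — otherwise two $\ell$-equivalent inputs would witness that $E[r]$ is interfering, contradicting \TI-security. Consequently, whenever $E$ is transparent on a \Total-secure program it must, level by level, reconstruct each slice of the output from the corresponding projection of the input, and black-boxness means it can do so only by running the program on test inputs that are themselves determined by that projection together with the answers already received.

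With these two facts in hand I carry out the diagonal construction sketched informally before the statement. Starting from a simple reference \Total-secure program $p$ (for instance a constant), I run $\sem{E[p]}(x_n)$ and read off its polynomially bounded test-set; using the unprobed sub-configurations I then manufacture a second \Total-secure program $q$ that agrees with $p$ on every test input $E$ actually uses for $x_n$ yet differs observably from $p$ at a chosen leak level, together with an \emph{insecure} program $w$ that coincides with $p$ along the trace $E$ follows for input $x_n$ and with $q$ along the trace $E$ follows for a carefully chosen $\ell$-equivalent input $y_n$, the discrepancy between the two being hidden inside the part of the input $E$ never inspects. Since $E$ is black-box, running it on $w$ reproduces exactly its trace for $p$ at $x_n$ and exactly its trace for $q$ at $y_n$, so $\sem{E[w]}(x_n) = \sem{E[p]}(x_n) = p(x_n)$ and $\sem{E[w]}(y_n) = \sem{E[q]}(y_n) = q(y_n)$ by \Total-transparency. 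As $p$ and $q$ were arranged to differ observably at the leak level, $\sem{E[w]}(x_n) \not\sim_\ell \sem{E[w]}(y_n)$ while $x_n \sim_\ell y_n$, so $E[w]$ is interfering — contradicting \TI-security.

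The delicate part, and the step I expect to dominate the proof, is the simultaneous, adaptive construction of $q$, $w$, $x_n$ and $y_n$: the test-set of $E$ at $p,x_n$ depends on $p$ and the test-set at $q,y_n$ depends on $q$, so these choices are mutually entangled. One must fix enough structure in advance — making the hidden region depend on a subset $R \subseteq \{1,\ldots,n\}$ that $E$'s polynomially many tests cannot all ``hit'', and arranging which level plays the role of $\ell$ and what shape $y_n$ takes — so that a single well-defined partial recursive $w$ satisfies both agreement conditions, no conflict arises on the intersection of the two (polynomially bounded) test-sets, the label-discovery/consistency constraint on traces is respected, and $p$ and $q$ remain genuinely \Total-secure. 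Pushing this counting argument through is where the real work lies.
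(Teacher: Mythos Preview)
Your high-level strategy matches the paper's: exploit the polynomial-versus-exponential gap between the size of $E$'s test-set and the number of sub-configurations of the input, then build two \Total-secure programs $p,q$ and one insecure $w$ that mimics $p$ on one input and $q$ on an $\ell$-equivalent input. However, you leave the decisive step open --- you flag the ``mutual entanglement'' of choosing $q$, $w$, and $y_n$ as ``where the real work lies'' without resolving it --- and the structural observation that $\sem{E[r]}(x)@\ell$ depends only on $x\project\ell$ is not actually needed.

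The paper dissolves the entanglement rather than pushing a counting argument through both runs. Take $p$ to be the constant program $\nameify{empty}(x)=\emptyset$; since it never outputs, $E$ learns no new labels while testing it. Run $\sem{E[\nameify{empty}]}(\toInput(S))$ and, by counting, find $S'\subset S$ such that no test input $y$ in the trace satisfies both $\toInput(S')\subseteq y$ and $\bigcup\L(y)=S'$. Now set the leak level to $S'$ and the second input to $\toInput(S')=\toInput(S)\project S'$. The key point you are missing is that for this second run \emph{no counting is needed at all}: the consistency requirement on traces forces every label appearing in every test input to lie in $C(\L(\toInput(S')))$, i.e.\ to be a subset of $S'$. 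Hence on that run $E$ can never construct a test $y$ with $\toInput(S')\subseteq y$ and $\bigcup\L(y)\neq S'$, so the programs $\nameify{greater}$ and $\nameify{equal}$ (outputting $\{1^{S'}\}$ when $\toInput(S')\subseteq y$, with $\nameify{equal}$ additionally requiring $\bigcup\L(y)=S'$) are indistinguishable there. Thus the single insecure program $w=\nameify{equal}$ already coincides with $\nameify{empty}$ along the first trace (by the choice of $S'$) and with $\nameify{greater}$ along the second (by consistency alone), with no adaptive back-and-forth. Once you see that consistency does the work for the second input, the construction becomes entirely concrete and the ``delicate part'' disappears.
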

\begin{proof}
  Let $\L = \Pow{\Nat}$, the powerset lattice over the natural numbers.
  We consider programs where the domain and co-domain are both the singleton
  set $\{1\}$.
  In other words, we consider programs $p : \Pow{\{1\} \times \L} \pto \Pow{\{1\} \times \L}$.
  
  Assume $E$ is a \Total-efficient \TI-\Total~black-box enforcement mechanism.
  For any finite subset $S$ of $\Nat$ define:
  $$
  \toInput(S) \defAs \{ 1^{\{ n \}}\ |\ n \in S \}
  $$
  Define the \Total-secure program $\sem{\nameify{empty}}(x) \defAs \emptyset$.
  Let $X(S)$ be test-set of $E$ for $\nameify{empty}, \toInput(S)$.
  Because $E$ is \Total-efficient, we can fix a sufficiently large $S$ such
  that there exists an $S' \subset S$ such that there is no $x \in X(S)$ that
  satisfies both $\toInput(S') \subseteq x$ and $\cup\L(x) = S'$, this is
  because the size of $X(S)$ is polynomial in $|S|$ but the number of choices
  for $S'$ grows exponentially in $|S|$.
  Note that $\toInput(S) \sim_{S'} \toInput(S')$, as
  $\toInput(S) \project S' = \toInput(S') = \toInput(S') \project S'$.

  Now define another \Total-secure program:
  $$
  \sem{\nameify{greater}}(x) \defAs \text{if}\ \toInput(S') \subseteq x\ \text{then}\ \{1^{S'}\}\ \text{else}\ \emptyset
  $$
  and the \TI-insecure program:
  $$
  \sem{\nameify{equal}}(x) \defAs \text{if}\ \toInput(S') \subseteq x\ \wedge\ \cup\L(x) = S'\ \text{then}\ \{1^{S'}\}\ \text{else}\ \emptyset
  $$
  Next consider two runs of $\sem{E[\nameify{equal}]}$.
  The first is $\sem{E[\nameify{equal}]}(\toInput(S))$.
  Recall that our choice of $S'$ guarantees that $E$ never tests
  $\nameify{empty}$ on any $x$ such that $\toInput(S') \subseteq x$ and
  $\cup\L(x) = S'$.
  Furthermore, $\nameify{equal}$ behaves like $\nameify{empty}$ on all other inputs.
  Hence we know that:
  $$
  E[\nameify{equal}](\toInput(S)) = E[\nameify{empty}](\toInput(S)) = \emptyset
  $$
  The second run we consider is $\sem{E[\nameify{equal}]}(\toInput(S'))$, for which $\nameify{equal}$
  behaves as $\nameify{greater}$.
  To see why, consider that $\nameify{equal}(x) \not= \nameify{greater}(x)$ implies that the
  two tests in $\nameify{equal}$ and $\nameify{greater}$ evaluate to different things
  on $x$.
  Hence, $\toInput(S') \subseteq x$ and $\cup\L(x) \not= S'$, and so $x$ must contain labels not
  in $S'$.
  However, consistency means that $E$ can not construct any such $x$ when testing
  either $\nameify{equal}$ or $\nameify{greater}$ on $\toInput(S')$.
  As a consequence, when $E$ tests $\nameify{greater}$ it never does a test
  that would reveal the difference between $\nameify{greater}$ and $\nameify{equal}$.
  This means that we can conclude that the following holds:
  $$
  \sem{E[\nameify{equal}]}(\toInput(S')) = E[\nameify{greater}](\toInput(S')) = \{ 1^{S'} \}
  $$
  In conclusion, we have that both of the following two things hold:
  $$\toInput(S) \sim_{S'} \toInput(S')$$
  $$
  E[\nameify{equal}](\toInput(S))  =
  \emptyset                        \not\sim_{S'}
  \{ 1^{S'} \}                     =
  E[\nameify{equal}](\toInput(S')) 
  $$
  Which contradictions our assumption that $E$ is \TI-secure.
\end{proof}

Because Theorem \ref{thm:efficient-black-box-impossible} proves the \emph{easiest} \TX-$\TX'$ enforcement to be impossible
to do efficiently, a simple corollary is the following.
\begin{corollary}
  Efficient black-box \TI-\TI, \TI-\MT, \TI-\TS, \MT-\MT, \MT-\TS, and \MT-\Total~enforcement
  are all impossible.
\end{corollary}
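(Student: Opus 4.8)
The plan is to reduce each of the six cases to Theorem~\ref{thm:efficient-black-box-impossible} by chasing the three orderings on security, transparency, and efficiency; no genuinely new argument is needed. The black-box condition is preserved verbatim under all of these weakenings, since being black-box is a structural requirement on $E$ (the existence of a consistent test-sequence generator $e$) that does not mention $\TX$ or $\TX'$ at all, so there is nothing to check there.

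First I would pin down the direction of each ordering. For security, Proposition~\ref{prop:security-ordering} and its enforcement-mechanism counterpart give \MT-secure $\implies$ \TI-secure, so every mechanism in the six cases (all of which have $\TX \in \{\TI, \MT\}$) is in particular \TI-secure. For transparency the ordering runs the opposite way: \TI-transparent $\implies$ \MT-transparent $\implies$ \TS-transparent $\implies$ \Total-transparent, so each of the transparency levels occurring in the six cases ($\TX' \in \{\TI, \MT, \TS, \Total\}$) implies \Total-transparency. For efficiency, the key observation is that the class of \Total-secure programs is contained in the class of \MT-secure programs, which is contained in the class of \TI-secure programs; since $\TX$-efficiency only constrains the running time of $E[p](x)$ on $\TX$-secure $p$, demanding efficiency on a \emph{larger} class is the \emph{stronger} requirement, and hence \TI-efficient $\implies$ \MT-efficient $\implies$ \Total-efficient --- exactly the implication we want.

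With these three facts the argument becomes uniform. Suppose for contradiction that for one of the listed pairs $\TX$-$\TX'$ there is an efficient black-box $\TX$-$\TX'$ enforcement mechanism $E$. Then $E$ is black-box; $E$ is \TI-secure (since $\TX \in \{\TI, \MT\}$); $E$ is \Total-transparent (since $\TX' \in \{\TI, \MT, \TS, \Total\}$); and $E$ is \Total-efficient (since its $\TX$-efficiency, with $\TX \in \{\TI, \MT\}$, implies \Total-efficiency). Thus $E$ is a \Total-efficient \TI-\Total~black-box enforcement mechanism, contradicting Theorem~\ref{thm:efficient-black-box-impossible}. As all six cases collapse to this one contradiction, the corollary follows.

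I do not expect a real obstacle; this is a bookkeeping corollary whose only subtle point is the efficiency ordering, which at first glance seems to point the wrong way (a \TI-\TI~mechanism intuitively sounds ``harder'' to build than a \TI-\Total~one, yet their efficiency requirements are not literally nested). The resolution, stated above, is that \TI-efficiency is a \emph{strictly stronger} demand than \Total-efficiency because it must hold on the larger class of \TI-secure programs, so the downgrade we rely on does go through.
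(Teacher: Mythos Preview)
Your proposal is correct and matches the paper's own reasoning, which is simply the one-line remark that Theorem~\ref{thm:efficient-black-box-impossible} already rules out the \emph{weakest} combination \TI-\Total\ with the weakest efficiency demand (\Total-efficient), so every listed stronger combination is a fortiori impossible. You have merely unpacked the three orderings (security, transparency, efficiency) that make ``weakest'' precise; no substantive difference.
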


\section{\label{sec:future-work} Future Work}
There are a number of limitations to our model.
Firstly, we do not consider reactive (like \cite{SME,zanarini2013precise}) or
interactive (like \cite{rafnsson2016secure}) programs.
We expect that extending our model to cover this case will introduce new
challenges related to termination and progress channels.
For example, there is no reason to expect that reactive systems are \emph{any
easier} to secure than batch-job ones, as in the batch-job setting we have access to
everything, including the program and all the inputs, from the start.
Whereas in the reactive or interactive setting the enforcement mechanism
needs to react to new information and act on partial knowledge of the input.
Furthermore, in the interactive case, the fact that the attacker can
dynamically choose inputs to respond to the action of the enforcement mechanism may pose
additional issues for efficient transparent enforcement.

Likewise, our model does not yet deal with declassification.
Extensional accounts of declassification in multi-execution already exist
\cite{bolocsteanu2016asymmetric, rafnsson2016secure}, and we expect they can be
adapted to our setting.

The final, and we believe most important, future work is to resolve the
question of whether or not efficient transparent enforcement is at all possible
even for our simple model.
The connection between multi-execution and transparent enforcement established
in Section \ref{sec:level-assignments} means that efficiency of enforcement
and efficiency of multi-execution are intimately related.
However, to the best of our knowledge multi-execution requires something akin
to efficient program analysis, either static or dynamic, to break the
exponential blowup barrier \cite{OptimisingFSME}.
The inherent limitations posed by undecidability therefore leads us to
conjecture that efficient transparent enforcement is impossible.
\begin{conjecture}
  \Total-efficient \TI-\Total~enforcement is impossible.
\end{conjecture}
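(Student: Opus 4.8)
The plan is to reduce the conjecture to an impossibility statement about level assignments, and then attack that statement with a padding/diagonalisation argument that defeats white-box source access. First I would observe that the machinery of Section~\ref{sec:level-assignments} survives the weakening from \MT-transparency to \Total-transparency: if $E$ were \Total-efficient and \TI-\Total, then $\LA_E(p,x) \defAs \L(E[p](x))$ would be \TI-secure by Theorem~\ref{thm:E-secure} and, by a simplified version of Lemma~\ref{lemma:L-transparent} --- the simplification being that every \Total-secure program is total, hence \MT-secure, so $\L(E[p](x)) = \L(p(x))$ already forces $\ME_{\LA_E}[p](x) = p(x)$ --- also \Total-transparent, with efficiency transferring as in Proposition~\ref{prop:efficient}. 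Since $\ME_\LA$ recovers an enforcement mechanism from any such $\LA$, it thus suffices to rule out \Total-efficient, \TI-secure, \Total-transparent level assignments.

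Second, I would set up the hard instance in the decentralised lattice $\L = \Pow{\Nat}$ with trivial data domain $\{1\}$, reusing the $\toInput$ encoding from the proof of Theorem~\ref{thm:efficient-black-box-impossible}. The obstruction that makes this harder than the black-box case is exactly the one flagged in Section~\ref{sec:future-work}: in the black-box proof the ``missed'' set $S'$ could be hard-coded into the insecure program $\nameify{equal}$, because a black-box mechanism cannot read it off; a white-box $\LA$ sees $S'$ in the source immediately. So the whole burden shifts to hiding the distinguishing information inside polynomial-time source code in a way that a polynomial-time analyser cannot recover.

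Third, I would make the programs do the distinguishing work \emph{at run time under a tight time budget}, exploiting that the polynomial bounding $E$ is fixed while the programs' own polynomials are not. Concretely, using the time hierarchy theorem, fix for each $k$ a predicate $\chi_k$ decidable in time $n^{k}$ but not in time $n^{\lfloor k/2 \rfloor}$, and let $p_k$ on input $\toInput(S)$ spend its $|S|^{k}$-step budget evaluating $\chi_k(S)$ and choose its output level accordingly; then arrange a syntactically near-identical twin $p_k'$, differing only inside the embedded routine for $\chi_k$, which is \Total-secure exactly when $p_k$ is not. Since $E$ --- hence $\LA_E$ --- runs in a fixed polynomial $q(|p|+|x|)$, once $k$ exceeds roughly $2\deg q$ the enforcement can neither simulate $p_k$ to completion nor, by the hierarchy separation, otherwise decide $\chi_k$ on large $S$; it is therefore forced either to omit an output level of a \Total-secure instance (contradicting transparency) or to retain a spurious level on an insecure one, reproducing --- after choosing $S \sim_{S'} S'$ with $S'$ the hard set --- the contradiction $E[p_k'](\toInput(S)) \not\sim_{S'} E[p_k'](\toInput(S'))$ just as in Theorem~\ref{thm:efficient-black-box-impossible}.

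Fourth --- and this is the step I expect to be the real obstacle, and the reason the statement is still only a conjecture --- one must turn ``$E$ cannot distinguish $p_k$ from $p_k'$ in time $q(\cdot)$'' into a rigorous argument. In the black-box setting indistinguishability was \emph{free}, enforced by the bound on the number of tests and by $\text{Consistent}$; here it is not, because $E$ holds the full source of the twins. Closing this gap seems to require either (i) a white-box-to-oracle simulation showing that any polynomial-time $E$ behaving differently on $p_k$ and $p_k'$ can be compiled into a polynomial-time decision procedure for $\chi_k$, contradicting the hierarchy bound --- this is the route I would push on first, with the delicate point being that $E$ may exploit the source in ways other than executing it --- or (ii) hiding $\chi_k$ inside the source via indistinguishability obfuscation (or one-way functions), which would yield only a \emph{conditional} impossibility. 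As a fallback I would be prepared to settle for the conditional statement ``\Total-efficient \TI-\Total~enforcement is impossible assuming indistinguishability obfuscation (or $\mathrm{P} \neq \mathrm{NP}$ together with one-way functions)'', which already rules out efficient transparent white-box enforcement under standard cryptographic assumptions.
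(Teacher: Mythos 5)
This statement is explicitly left \emph{open} in the paper: it is a conjecture, and the only supporting material the authors provide is the reduction lemma in Section~\ref{sec:future-work} showing that impossibility would follow from the non-existence of a \Total-efficient \TI-\Total~level assignment. Your first step reproduces exactly that reduction (via $\LA_E$, Theorems~\ref{thm:E-secure} and~\ref{thm:E-transparent}, and Proposition~\ref{prop:efficient}), which is fine, but everything after it is where the conjecture actually lives, and your attempt does not close it --- as you yourself concede in your fourth step. A proof sketch whose final step is ``this is the real obstacle and I might have to settle for a conditional statement'' is not a proof of the unconditional claim.

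Beyond the gap you acknowledge, there is a concrete problem with the time-hierarchy budget argument in your third step. The paper's notion of efficiency is per-program: an enforcement mechanism (or level assignment) is \Total-efficient if $E[p](x)$ runs in time polynomial in $|p|+|x|$ for each polynomial-time \Total-secure $p$, and the paper's own discussion of $\ME_\LA$ (``$\ME_\LA$ only executes a polynomial number of runs of the polynomial time program $p$'') makes clear that the degree of this polynomial is allowed to depend on the degree of $p$'s own running time --- otherwise even $E[p]=p$ would fail to be efficient. Under that reading there is no fixed budget $q$ for $E$ to exceed: $E$ can always afford to simulate $p_k$ to completion on the input it is given, so ``once $k$ exceeds roughly $2\deg q$'' has no referent and the hierarchy separation gives no obstruction. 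What $E$ cannot do is decide whether $p_k$ is noninterfering \emph{globally} (a Rice-style, undecidability obstruction, which is the intuition the paper itself gestures at), but transparency does not require $E$ to decide security --- it only constrains $E$'s behaviour on inputs to programs that happen to be secure, and $E$ is free to be non-transparent on insecure programs in ways that never reveal a decision. Relatedly, your proposed extraction in step (i) is circular: to turn ``$E$ behaves differently on $p_k$ and $p_k'$'' into a decider for $\chi_k$ you must know which behaviour is the transparent one, i.e.\ which twin is secure, which is the very question being decided. The conditional variants you fall back on (under indistinguishability obfuscation or one-way functions) may well be provable and would be a genuine contribution, but they are weaker than the stated conjecture, which remains open.
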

This conjecture states the strongest possible variant of this theorem in our
framework.
\TI-\Total enforcement is our weakest condition, which means that impossibility
of efficient \TI-\Total~ enforcement implies impossibility of efficient
\TX-$\TX'$ enforcement for all $\TX$ and $\TX'$ in this paper.
We believe that a proof of this conjecture could use Theorems
\ref{thm:E-secure} and \ref{thm:E-transparent} and Proposition
\ref{prop:efficient} from Section \ref{sec:level-assignments},
which we formalise in the following lemma statement.
\begin{lemma}
  Unless there is a \Total-efficient \TI-\Total~level assignment
  $\LA$ such that:
  \begin{enumerate}
    \item For all \Total-secure $p$, $\L(p(x)) = \LA(p, x)$
    \item For all $p$, $\lambda x.\ \LA(p, x)$ is a \TI-secure program
  \end{enumerate}
  There is no \Total-efficient \TI-\Total~enforcement mechanism.
\end{lemma}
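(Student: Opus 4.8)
The plan is to prove the contrapositive: assuming a \Total-efficient \TI-\Total~enforcement mechanism $E$ exists, I would exhibit a \Total-efficient \TI-\Total~level assignment satisfying~(1) and~(2). The natural candidate is the level assignment $\LA_E(p,x) \defAs \L(E[p](x))$ already introduced above, and the entire task reduces to re-verifying each required property at the \Total~level rather than the \MT~level at which Theorems \ref{thm:E-secure} and \ref{thm:E-transparent} and Proposition \ref{prop:efficient} are phrased. Note that $\LA_E$ is genuinely a partial function, since $E[p](x)$ may diverge for insecure $p$ — but this is exactly what the definition of level assignment (and of program noninterference, which only constrains inputs on which both runs converge) permits.

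For condition~(2), I would take any $p$ and any $x \sim_\ell y$ with both $E[p](x)$ and $E[p](y)$ defined; since $E$ is \TI-secure, $E[p](x) \sim_\ell E[p](y)$, and applying $\L(\cdot)$ and using $\L(z) \project \ell = \L(z \project \ell)$ yields $\L(E[p](x)) \sim_\ell \L(E[p](y))$. Hence $\lambda x.\ \LA_E(p,x)$ is \TI-secure for every $p$, which is~(2), and Lemma \ref{lemma:L-secure} then gives that $\LA_E$ is a \TI-secure level assignment.

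For condition~(1), \Total-transparency of $E$ gives $E[p](x) = p(x)$ whenever $p$ is \Total-secure, so $\LA_E(p,x) = \L(p(x))$ in that case, which is~(1). To see $\LA_E$ is itself \Total-transparent I would expand, for \Total-secure $p$, the quantity $\sem{\ME_{\LA_E}[p]}(x) = \bigcup\{\ \sem{p}(x\project\ell)@\ell\ |\ \ell \in \L(p(x))\ \}$; since $p$ is total and noninterfering, $\sem{p}(x\project\ell)@\ell = \sem{p}(x)@\ell$ (the same $@\ell$-decomposition used in the proof of Theorem \ref{thm:ME-MT-transparent}) and $\sem{p}(x) = \bigcup\{\ \sem{p}(x)@\ell\ |\ \ell \in \L(p(x))\ \}$, so the two sides coincide, and totality of $p$ guarantees $\ME_{\LA_E}[p](x)$ converges. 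Finally, for \Total-efficiency: $\LA_E(p,x)$ is computed by running $E[p](x)$ and reading off its labels, a linear post-processing step, so \Total-efficiency of $E$ gives that $\LA_E(p,x)$ runs in time polynomial in $|p|+|x|$ for every polynomial-time \Total-secure $p$, which is precisely \Total-efficiency of $\LA_E$.

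I do not expect a real obstacle in this lemma; it is essentially a bookkeeping exercise over the machinery of Section \ref{sec:level-assignments}. The one point that needs care is that $E$ is assumed only \Total-transparent — the \emph{weakest} transparency notion — so Theorem \ref{thm:E-transparent} cannot be cited verbatim (it needs \MT-transparency); but the \Total-level argument is in fact the easier one, since \Total-secure programs are both total and noninterfering, which is exactly what makes the $@\ell$-decomposition and the convergence of $\ME_{\LA_E}[p](x)$ go through without fuss. Assembling the four checks yields the desired \Total-efficient \TI-\Total~level assignment satisfying~(1) and~(2), completing the contrapositive.
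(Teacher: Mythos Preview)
Your proposal is correct and follows essentially the same approach as the paper, which also uses the level assignment $\LA_E(p,x) = \L(E[p](x))$ and derives condition~(1) from \Total-transparency and condition~(2) from \TI-security of $E$. You are right to flag that Theorem~\ref{thm:E-transparent} and Proposition~\ref{prop:efficient} are stated at the \MT~level and must be redone at the \Total~level; your direct argument for \Total-transparency and \Total-efficiency of $\LA_E$ fills exactly that gap and is sound.
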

The lemma uses the $\LA_E$ level-assignment from Section \ref{sec:level-assignments}.
Condition (1) follows from the fact that $\L(E[p](x)) = \L(p(x))$ when $E$ is \Total-transparent
and $p$ \Total-secure.
Condition (2) meanwhile follows from $E$ being \TI-secure.

\section{\label{sec:related-work} Related Work}
There exists a large body of work on the expressive power and precision
of different enforcement mechanisms for noninterference.
Bielova and Rezk \cite{bielova2016taxonomy} study a number of different
mechanisms and rank them in order of precision.
Ngo et al. \cite{ngo2018impossibility} show that \TS-\TS~ enforcement is
impossible.
Hamlen et al. also study the complexity classes of monitors and
enforcement mechanisms \cite{hamlen2003computability}.

Ngo et al. \cite{ngo2015runtime} provide both a generic black-box enforcement
mechanism for reactive programs that is transparent, as well as lower and
upper bounds on what hyperproperties can and cannot be black-box transparently
enforced.
While this work is impressive, it has two limitations.
Firstly, in the interest of tractability the paper partly side-steps the issue
of termination by demanding that programs always make progress.
Secondly, the paper is only concerned with what can and cannot be enforced
in a black-box manner.
We believe that a natural and exciting direction for future work is to marry
their formalism and ours to allow careful study of white-box transparency with
non-termination in a reactive setting.

A number of authors have provided multi-execution based enforcement mechanism
with subtly different \TX-$\TX'$ properties \cite{SME,MF,OGMF,FSME,
OptimisingFSME,rafnsson2016secure,jaskelioff2011secure,yang2016precise,
bielova2016spot,zanarini2013precise}.
However, only a small number of articles focus on the efficiency trade-offs related
to transparency.
Austin and Flanagan \cite{MF} compare the performance of SME and MF on a number of
micro-benchmarks.
MF was originally proposed as an optimisation of SME.
However, it has since been made clear that MF and SME have some complementary
performance behaviours \cite{FSME}.
Schmitz et al. \cite{FSME} study the time-memory trade-off in FSME, a system
which attempts to get ``the best of both worlds'' performance between MF and
SME.
Ngo et al. \cite{OGMF} show a qualitative optimisation of MF.

Pfeffer et al. \cite{pfeffer2019efficient} study a byte-code level
multi-execution technique which they optimise by selectively multi-executing
similarly to FSME \cite{FSME}.
However, they work in a statically labeled setting with small lattices, and do
are not exposed to the issues of exponential overhead we study here.

Algehed et al. \cite{OptimisingFSME} provide a framework for optimising FSME
which inspired our discussion of level-assignments.
Their ``computation-oriented'' optimisations are capable of reducing
exponential to polynomial overhead.
However, these methods have to be applied by manual analysis of the program
being optimised.
It is unlikely that the technique can be automated to work \emph{for all
programs}, as this requires a complete static analysis tool for level
assignments.

Some authors have also studied multi-execution in other contexts than security,
notably with applications to debugging \cite{wong2018faster} and program repair
\cite{wong2018beyond}.
While an implementation of these methods has been heavily optimised \cite{wong2018faster},
they suffer from the same unavoidable exponential overheads as other multi-execution techniques
discussed in this paper.

\section{\label{sec:conclusion} Conclusion}
In this paper we have presented a thorough study of the feasibility and efficiency
of transparent information flow control.
Our results indicate that while transparent IFC is possible to achieve for a
diverse set of security criteria including Monotonically Terminating
Noninterference and Termination Insensitive Noninterference, current methods
are practically infeasible for many systems due to inherent limitations on
efficiency.
We show that any enforcement mechanism that is secure and transparent also
gives rise to a secure and transparent multi-execution based enforcement with
polynomial slowdown.
We also show that any traditional multi-execution based enforcement
mechanism can expect to have large runtime overheads.
Furthermore, we prove that efficient black-box IFC is impossible.

We also pose the conjecture that transparent, efficient, and secure IFC
enforcement is impossible in general, not just in the black-box case.
We give a lemma which we think will help researchers find a proof for our
conjecture that is based on our proof that transparent enforcement is
intimately linked to multi-execution.
Our conjecture, as stated, is without reference to any hardness assumptions
for other problems in the complexity theory literature.
Readers familiar with complexity theory may note that such un-qualified lower
bounds are typically difficult to prove, and we therefore think that a proof of
a weakened version of our conjecture may be one suitable next step for the
community.

Taken together, our results indicate that the quest for transparent IFC is
unlikely to generate a panacea for securing third-party and legacy code.
This is a troubling result, it indicates that there is no ``easy way out'' for
securing existing software systems.
At the same time, we believe that our results indicate a clear future path for
the IFC and programming languages communities.
If efficient transparent enforcement is impossible, the community ought instead
to focus on tools to help software engineers build secure systems from scratch.

We also note that while our results paint a bleak picture for multi-execution
\emph{in general}, they do not mean that there are no application areas for the
techniques.
For example, when the size of the security lattice is small multi-execution can
still be applicable \cite{de2012flowfox, pfeffer2019efficient}.
The techniques are also useful in less efficiency-sensitive contexts than general
enforcement, for example in testing \cite{wong2018faster} and program
repair \cite{wong2018beyond}.
Similarly, efficient transparent enforcement being unlikely to work
does not preclude designing enforcement for the common case to achieve good
average-case performance.

\bibliographystyle{IEEEtran}
\bibliography{main.bib}

\begin{thebibliography}{10}
\providecommand{\url}[1]{#1}
\csname url@samestyle\endcsname
\providecommand{\newblock}{\relax}
\providecommand{\bibinfo}[2]{#2}
\providecommand{\BIBentrySTDinterwordspacing}{\spaceskip=0pt\relax}
\providecommand{\BIBentryALTinterwordstretchfactor}{4}
\providecommand{\BIBentryALTinterwordspacing}{\spaceskip=\fontdimen2\font plus
\BIBentryALTinterwordstretchfactor\fontdimen3\font minus
  \fontdimen4\font\relax}
\providecommand{\BIBforeignlanguage}[2]{{%
\expandafter\ifx\csname l@#1\endcsname\relax
\typeout{** WARNING: IEEEtran.bst: No hyphenation pattern has been}%
\typeout{** loaded for the language `#1'. Using the pattern for}%
\typeout{** the default language instead.}%
\else
\language=\csname l@#1\endcsname
\fi
#2}}
\providecommand{\BIBdecl}{\relax}
\BIBdecl

\bibitem{sabelfeld2003language}
A.~Sabelfeld and A.~C. Myers, ``Language-based information-flow security,''
  \emph{IEEE Journal on selected areas in communications}, vol.~21, no.~1, pp.
  5--19, 2003.

\bibitem{Denning}
D.~E. Denning, ``A lattice model of secure information flow,''
  \emph{Communications of the ACM}, vol.~19, no.~5, pp. 236--243, 1976.

\bibitem{FlowCaml}
V.~Simonet and I.~Rocquencourt, ``Flow caml in a nutshell,'' in
  \emph{Proceedings of the first APPSEM-II workshop}, 2003, pp. 152--165.

\bibitem{Jif}
A.~Banerjee and D.~A. Naumann, ``Secure information flow and pointer
  confinement in a java-like language.'' in \emph{CSFW}, vol.~2, 2002, p. 253.

\bibitem{DCC}
M.~Abadi, A.~Banerjee, N.~Heintze, and J.~G. Riecke, ``A core calculus of
  dependency,'' in \emph{Proceedings of the 26th ACM SIGPLAN-SIGACT symposium
  on Principles of programming languages}.\hskip 1em plus 0.5em minus
  0.4em\relax ACM, 1999, pp. 147--160.

\bibitem{LIO}
D.~Stefan, A.~Russo, J.~C. Mitchell, and D.~Mazi{\`e}res, ``Flexible dynamic
  information flow control in haskell,'' in \emph{ACM Sigplan Notices},
  vol.~46, no.~12.\hskip 1em plus 0.5em minus 0.4em\relax ACM, 2011, pp.
  95--106.

\bibitem{MF}
T.~H. Austin and C.~Flanagan, ``Multiple facets for dynamic information flow,''
  in \emph{ACM Sigplan Notices}, vol.~47, no.~1.\hskip 1em plus 0.5em minus
  0.4em\relax ACM, 2012, pp. 165--178.

\bibitem{goguen1982security}
J.~A. Goguen and J.~Meseguer, ``Security policies and security models,'' in
  \emph{1982 IEEE Symposium on Security and Privacy}.\hskip 1em plus 0.5em
  minus 0.4em\relax IEEE, 1982, pp. 11--11.

\bibitem{mclean1992proving}
J.~McLean, ``Proving noninterference and functional correctness using traces,''
  \emph{Journal of Computer security}, vol.~1, no.~1, pp. 37--57, 1992.

\bibitem{zdancewic2003observational}
S.~Zdancewic and A.~C. Myers, ``Observational determinism for concurrent
  program security,'' in \emph{16th IEEE Computer Security Foundations
  Workshop, 2003. Proceedings.}\hskip 1em plus 0.5em minus 0.4em\relax IEEE,
  2003, pp. 29--43.

\bibitem{King}
D.~King, B.~Hicks, M.~Hicks, and T.~Jaeger, ``Implicit flows: Can't live with
  'em, can't live without 'em,'' in \emph{International Conference on
  Information Systems Security}.\hskip 1em plus 0.5em minus 0.4em\relax
  Springer, 2008, pp. 56--70.

\bibitem{staicu2019empirical}
C.-A. Staicu, D.~Schoepe, M.~Balliu, M.~Pradel, and A.~Sabelfeld, ``An
  empirical study of information flows in real-world javascript,'' \emph{arXiv
  preprint arXiv:1906.11507}, 2019.

\bibitem{SME}
D.~Devriese and F.~Piessens, ``Noninterference through secure
  multi-execution,'' in \emph{2010 IEEE Symposium on Security and
  Privacy}.\hskip 1em plus 0.5em minus 0.4em\relax IEEE, 2010, pp. 109--124.

\bibitem{FSME}
T.~Schmitz, M.~Algehed, C.~Flanagan, and A.~Russo, ``Faceted secure multi
  execution,'' in \emph{Proceedings of the 2018 ACM SIGSAC Conference on
  Computer and Communications Security}.\hskip 1em plus 0.5em minus 0.4em\relax
  ACM, 2018, pp. 1617--1634.

\bibitem{OGMF}
M.~Ngo, N.~Bielova, C.~Flanagan, T.~Rezk, A.~Russo, and T.~Schmitz, ``A better
  facet of dynamic information flow control,'' in \emph{WWW'18 Companion: The
  2018 Web Conference Companion}, 2018, pp. 1--9.

\bibitem{OptimisingFSME}
M.~Algehed, A.~Russo, and C.~Flanagan, ``{Optimising Faceted Secure
  Multi-Execution},'' in \emph{{Proc. of the 2019 32nd IEEE Computer Security
  Foundations Symp.}}, ser. {CSF '19}.\hskip 1em plus 0.5em minus 0.4em\relax
  IEEE Computer Society, 2019.

\bibitem{zanarini2013precise}
D.~Zanarini, M.~Jaskelioff, and A.~Russo, ``Precise enforcement of
  confidentiality for reactive systems,'' in \emph{2013 IEEE 26th Computer
  Security Foundations Symposium}.\hskip 1em plus 0.5em minus 0.4em\relax IEEE,
  2013, pp. 18--32.

\bibitem{yang2016precise}
J.~Yang, T.~Hance, T.~H. Austin, A.~Solar-Lezama, C.~Flanagan, and S.~Chong,
  ``Precise, dynamic information flow for database-backed applications,'' in
  \emph{ACM SIGPLAN Notices}, vol.~51, no.~6.\hskip 1em plus 0.5em minus
  0.4em\relax ACM, 2016, pp. 631--647.

\bibitem{pfeffer2019efficient}
T.~Pfeffer, T.~G{\"o}thel, and S.~Glesner, ``Efficient and precise information
  flow control for machine code through demand-driven secure multi-execution,''
  in \emph{Proceedings of the Ninth ACM Conference on Data and Application
  Security and Privacy}.\hskip 1em plus 0.5em minus 0.4em\relax ACM, 2019, pp.
  197--208.

\bibitem{de2012flowfox}
W.~De~Groef, D.~Devriese, N.~Nikiforakis, and F.~Piessens, ``Flowfox: a web
  browser with flexible and precise information flow control,'' in
  \emph{Proceedings of the 2012 ACM conference on Computer and communications
  security}.\hskip 1em plus 0.5em minus 0.4em\relax ACM, 2012, pp. 748--759.

\bibitem{rafnsson2016secure}
W.~Rafnsson and A.~Sabelfeld, ``Secure multi-execution: Fine-grained,
  declassification-aware, and transparent,'' \emph{Journal of Computer
  Security}, vol.~24, no.~1, pp. 39--90, 2016.

\bibitem{wong2018faster}
C.-P. Wong, J.~Meinicke, L.~Lazarek, and C.~K{\"a}stner, ``Faster variational
  execution with transparent bytecode transformation,'' \emph{Proceedings of
  the ACM on Programming Languages}, vol.~2, no. OOPSLA, p. 117, 2018.

\bibitem{hedin2012perspective}
D.~Hedin and A.~Sabelfeld, ``A perspective on information-flow control.''
  \emph{Software Safety and Security}, vol.~33, pp. 319--347, 2012.

\bibitem{bielova2016spot}
N.~Bielova and T.~Rezk, ``Spot the difference: Secure multi-execution and
  multiple facets,'' in \emph{European Symposium on Research in Computer
  Security}.\hskip 1em plus 0.5em minus 0.4em\relax Springer, 2016, pp.
  501--519.

\bibitem{ngo2018impossibility}
M.~Ngo, F.~Piessens, and T.~Rezk, ``Impossibility of precise and sound
  termination-sensitive security enforcements,'' in \emph{2018 IEEE Symposium
  on Security and Privacy (SP)}.\hskip 1em plus 0.5em minus 0.4em\relax IEEE,
  2018, pp. 496--513.

\bibitem{jaskelioff2011secure}
M.~Jaskelioff and A.~Russo, ``Secure multi-execution in haskell,'' in
  \emph{International Andrei Ershov Memorial Conference on Perspectives of
  System Informatics}.\hskip 1em plus 0.5em minus 0.4em\relax Springer, 2011,
  pp. 170--178.

\bibitem{hamlen2003computability}
K.~W. Hamlen, G.~Morrisett, and F.~B. Schneider, ``Computability classes for
  enforcement mechanisms,'' Cornell University, Tech. Rep., 2003.

\bibitem{Zdancewic:2002:PLI:935787}
S.~A. Zdancewic, ``Programming languages for information security,'' Ph.D.
  dissertation, Ithaca, NY, USA, 2002, aAI3063751.

\bibitem{PU}
\BIBentryALTinterwordspacing
T.~H. Austin and C.~Flanagan, ``Permissive dynamic information flow analysis,''
  in \emph{Proceedings of the 5th ACM SIGPLAN Workshop on Programming Languages
  and Analysis for Security}, ser. PLAS '10.\hskip 1em plus 0.5em minus
  0.4em\relax New York, NY, USA: ACM, 2010, pp. 3:1--3:12. [Online]. Available:
  \url{http://doi.acm.org/10.1145/1814217.1814220}
\BIBentrySTDinterwordspacing

\bibitem{bielova2016taxonomy}
N.~Bielova and T.~Rezk, ``A taxonomy of information flow monitors,'' in
  \emph{International Conference on Principles of Security and Trust}.\hskip
  1em plus 0.5em minus 0.4em\relax Springer, 2016, pp. 46--67.

\bibitem{DCLabels}
D.~Stefan, A.~Russo, D.~Mazi{\`e}res, and J.~C. Mitchell, ``Disjunction
  category labels,'' in \emph{Nordic conference on secure IT systems}.\hskip
  1em plus 0.5em minus 0.4em\relax Springer, 2011, pp. 223--239.

\bibitem{rogers1967theory}
H.~Rogers, \emph{Theory of recursive functions and effective
  computability}.\hskip 1em plus 0.5em minus 0.4em\relax McGraw-Hill New York,
  1967, vol.~5.

\bibitem{bolocsteanu2016asymmetric}
I.~Bolo{\c{s}}teanu and D.~Garg, ``Asymmetric secure multi-execution with
  declassification,'' in \emph{International Conference on Principles of
  Security and Trust}.\hskip 1em plus 0.5em minus 0.4em\relax Springer, 2016,
  pp. 24--45.

\bibitem{ngo2015runtime}
M.~Ngo, F.~Massacci, D.~Milushev, and F.~Piessens, ``Runtime enforcement of
  security policies on black box reactive programs,'' in \emph{Proceedings of
  the 42nd Annual ACM SIGPLAN-SIGACT Symposium on Principles of Programming
  Languages}, 2015, pp. 43--54.

\bibitem{wong2018beyond}
C.-P. Wong, J.~Meinicke, and C.~K{\"a}stner, ``Beyond testing configurable
  systems: applying variational execution to automatic program repair and
  higher order mutation testing,'' in \emph{Proceedings of the 2018 26th ACM
  Joint Meeting on European Software Engineering Conference and Symposium on
  the Foundations of Software Engineering}.\hskip 1em plus 0.5em minus
  0.4em\relax ACM, 2018, pp. 749--753.

\end{thebibliography}

\appendix

\subsection{\label{app:proofs} Proofs}
\subsection*{Proofs showing that $C(S)$ partitions $\L$}
\begin{lemma}
  \label{lemma:disjoint:uparrow}
  Given any finite $S \subseteq \L$ and $\ell, \jmath \in S$ such that $\ell \not= \jmath$,
  $\ell \uparrow S$ is disjoint from $\jmath \uparrow S$.
\end{lemma}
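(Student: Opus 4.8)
The plan is to argue by contradiction: suppose some label $k \in \L$ lies in both $\ell \uparrow S$ and $\jmath \uparrow S$, and derive that $\ell = \jmath$, contradicting the hypothesis $\ell \neq \jmath$. Unfolding the definition of $\uparrow$, membership $k \in \ell \uparrow S$ gives us $\ell \canFlowTo k$ together with the ``maximality'' clause: for every $m \in S$, if $m \canFlowTo k$ then $m \canFlowTo \ell$. Symmetrically, $k \in \jmath \uparrow S$ gives $\jmath \canFlowTo k$ and the clause that $m \canFlowTo k$ implies $m \canFlowTo \jmath$ for all $m \in S$.

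The key move is to feed each maximality clause the \emph{other} label as its witness. Since $\jmath \in S$ (this is where we use that the upward neighbourhoods are indexed by elements of $S$) and $\jmath \canFlowTo k$, the maximality clause for $\ell \uparrow S$ yields $\jmath \canFlowTo \ell$. Likewise, since $\ell \in S$ and $\ell \canFlowTo k$, the maximality clause for $\jmath \uparrow S$ yields $\ell \canFlowTo \jmath$. Putting these together we have $\ell \canFlowTo \jmath$ and $\jmath \canFlowTo \ell$, so by antisymmetry of the lattice order $\canFlowTo$ we conclude $\ell = \jmath$, the desired contradiction. Hence no such $k$ exists and $\ell \uparrow S$ and $\jmath \uparrow S$ are disjoint.

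I do not expect any real obstacle here; the only things to be careful about are (i) that the definition of $\ell \uparrow S$ really does require $\ell \in S$, so that $\ell$ and $\jmath$ are themselves legitimate witnesses for each other's maximality clauses, and (ii) that $\canFlowTo$ is a genuine partial order (antisymmetric), which is part of the standing assumption that $\langle \L, \canFlowTo, \bot, \lub \rangle$ is a join semi-lattice. It is worth noting in passing that finiteness of $S$ plays no role in this particular lemma — it will matter elsewhere (e.g. for $C(S)$ to be well-defined and finite), but the disjointness argument goes through for arbitrary $S$.
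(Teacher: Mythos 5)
Your proof is correct and is essentially the paper's own argument: assume a common element, instantiate each maximality clause with the other label to get $\jmath \canFlowTo \ell$ and $\ell \canFlowTo \jmath$, and conclude $\ell = \jmath$ by antisymmetry, contradicting $\ell \neq \jmath$. Your side remarks (that $\ell, \jmath \in S$ is what licenses the instantiation, and that finiteness of $S$ is not needed here) are accurate but do not change the argument.
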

\begin{proof}
  Assume $\hat\ell \in \ell \uparrow S$ and $\hat\ell \in \jmath \uparrow S$.
  Clearly $\ell \canFlowTo \hat\ell$ and $\jmath \canFlowTo \hat\ell$, and so
  $\jmath \canFlowTo \ell$ and $\ell \canFlowTo \jmath$, therefore $\ell = \jmath$,
  a contradiction.
\end{proof}
\begin{lemma}
  \label{lemma:closure:uparrow:L}
  Given any finite $S \subseteq \L$ we have that $\bigcup\{\ \ell \uparrow C(S)\ |\ \ell \in C(S)\ \} = \L$.
\end{lemma}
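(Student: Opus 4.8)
The plan is to prove the two inclusions separately. One direction, $\bigcup\{\ \ell \uparrow C(S)\ |\ \ell \in C(S)\ \} \subseteq \L$, is immediate, since by definition $\ell \uparrow C(S) \subseteq \L$ for every $\ell$. All the work is in the reverse inclusion: given an arbitrary $\jmath \in \L$, I need to exhibit some $\ell \in C(S)$ with $\jmath \in \ell \uparrow C(S)$. The natural candidate is the join of exactly those generators of $S$ that already flow to $\jmath$, namely
$$\ell \defAs \bigsqcup\{\ s \in S\ |\ s \canFlowTo \jmath\ \}.$$
Since this is the join of a subset of $S$ (and $S$ is finite, so the join exists), it lies in $C(S)$ by definition of the closure-set; in the degenerate case where no $s \in S$ flows to $\jmath$ it is $\bot = \bigsqcup \emptyset$, which is still in $C(S)$.

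It then remains to verify the two conjuncts in the definition of $\ell \uparrow C(S)$. The first, $\ell \canFlowTo \jmath$, holds because $\ell$ is the least upper bound of a set all of whose members flow to $\jmath$. For the second, I must show that every $\jmath' \in C(S)$ with $\jmath' \canFlowTo \jmath$ satisfies $\jmath' \canFlowTo \ell$. Writing $\jmath' = \bigsqcup S'$ for some $S' \subseteq S$, the hypothesis $\bigsqcup S' \canFlowTo \jmath$ forces $s \canFlowTo \jmath$ for every $s \in S'$ (via $s \canFlowTo \bigsqcup S'$ and transitivity), so $S' \subseteq \{\ s \in S\ |\ s \canFlowTo \jmath\ \}$, and hence $\jmath' = \bigsqcup S' \canFlowTo \ell$ by monotonicity of $\bigsqcup$. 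This gives $\jmath \in \ell \uparrow C(S)$ and closes the reverse inclusion.

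The step I expect to be the only real subtlety — and the one worth stating carefully in the write-up — is that the upward neighbourhood is taken over the closure-set $C(S)$ and not over $S$ itself, so the second conjunct has to be checked against \emph{every} element of $C(S)$ below $\jmath$, not merely every element of $S$ below $\jmath$. This is exactly why the witness must be $\bigsqcup\{\ s \in S\ |\ s \canFlowTo \jmath\ \}$, i.e. the greatest element of $C(S)$ lying below $\jmath$ (using that $C(S)$ is closed under $\lub$ and finite): a lazier choice such as a single generator $s \canFlowTo \jmath$ would fail the second conjunct as soon as $\jmath$ sits above a join of two or more incomparable generators. Finally, combining this lemma with Lemma~\ref{lemma:disjoint:uparrow} (disjointness of the sets $\ell \uparrow C(S)$ for distinct $\ell$) yields that $\{\ \ell \uparrow C(S)\ |\ \ell \in C(S)\ \}$ is a genuine partition of $\L$.
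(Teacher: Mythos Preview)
Your proof is correct and follows essentially the same approach as the paper: both arguments handle the trivial inclusion first and then, for an arbitrary label, pick as witness the greatest element of $C(S)$ lying below it. The paper simply asserts that such a greatest element exists (appealing to $\bot \in C(S)$), whereas you explicitly construct it as $\bigsqcup\{\, s \in S \mid s \canFlowTo \jmath \,\}$ and verify both conjuncts of the definition of $\ell \uparrow C(S)$; this is the same idea spelled out in more detail.
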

\begin{proof}
  Left-to-right inclusion is trivial.
  For all $\ell \in \L$ it is the case that there is a greatest element $\jmath \in C(S)$ such that
  $\jmath \canFlowTo \ell$ (as $\bot \in C(S)$) and $\ell \in \jmath \uparrow C(S)$.
\end{proof}
\proppartition*
\begin{proof}
  Immediate from lemmas \ref{lemma:disjoint:uparrow} and \ref{lemma:closure:uparrow:L}.
\end{proof}
\subsection*{Properties of $\MEF$}
\thmMEFME*
\begin{proof}
  We first prove that $\MEF[p](x) \subseteq \ME[p](x)$.
  If $a^\ell \in \sem{\MEF[p]}(x)$, then there exists some $\jmath$
  such that $a^\ell \in \sem{p}(x\project\jmath)$ and $x\project\ell = x\project\jmath$
  and so $a^\ell \in \sem{p}(x\project\ell)@\ell$ and so $a^\ell \in \sem{\ME[p]}(x)$.
  
  Next we show that $\ME[p](x) \subseteq \MEF[p](x)$.
  If $a^\ell \in \sem{\ME[p]}(x)$, then clearly $a^\ell \in \sem{p}(x\project\ell)$ and
  there is an $\jmath \in C(\L(x))$ such that $x\project\ell = x\project\jmath$
  and $\ell \in \jmath \uparrow C(\L(x))$ and, consequently, $a^\ell \in \sem{p}(x\project\jmath)@(\jmath\uparrow C(\L(x)))$
  and so $a^\ell \in \sem{\MEF[p]}(x)$.

  We have that $\MEF[p](x) \subseteq \ME[p](x)$ and $\ME[p](x) \subseteq \MEF[p](x)$ when
  both are defined, and so $\MEF[p](x) = \ME[p](x)$ provided both are defined.

  Next we show that when $\L$ is finite, $\defined{\MEF[p]}{x} \iff \defined{\ME[p]}{x}$.
  If $\defined{\MEF[p]}{x}$ then $\defined{p}{x\project\ell}$ for all $\ell \in C(\L(x))$.
  But, for all $\jmath \in \L$ it is the case that $x\project\jmath = x\project\ell$ for some
  $\ell \in C(\L(x))$ and so $\defined{p}{x\project\ell}$, and consequently $\defined{\ME[p]}{x}$.
  If $\defined{\ME[p]}{x}$ then clearly $\defined{p}{x\project\ell}$ for all
  $\ell \in C(\L(x))$ and so $\defined{\MEF[p]}{x}$.
\end{proof}
\begin{theorem}
  $\MEF$ is $\MT$-transparent.
\end{theorem}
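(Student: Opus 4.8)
The statement to prove is that $\MEF[p](x) = p(x)$ for every \MT-secure program $p$ and every input $x$ (equality of partial values, so in particular the two sides are defined on exactly the same inputs). The plan is to reduce this to the already-established \MT-transparency of $\ME$ (Theorem~\ref{thm:ME-MT-transparent}) by showing that $\MEF[p](x)$ always agrees with $\ME[p](x)$, where --- since $\L$ may now be infinite --- $\ME[p](x)$ is read as a mathematical specification rather than as a runnable program.

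First I would settle termination, proving $\defined{\MEF[p]}{x} \iff \defined{p}{x}$. Right to left: if $\defined{p}{x}$ then \MT-security of $p$ gives $\defined{p}{x\project\ell}$ for every $\ell$, in particular for the finitely many $\ell \in C(\L(x))$, so the finite union defining $\MEF[p](x)$ converges. Left to right: if $\defined{\MEF[p]}{x}$ then $\defined{p}{x\project\ell}$ for all $\ell \in C(\L(x))$, and since $\bigsqcup\L(x) \in C(\L(x))$ with $x\project\bigsqcup\L(x) = x$, we get $\defined{p}{x}$. The same two observations show $\defined{p}{x} \iff \defined{\ME[p]}{x}$ in the specification sense, so the two mechanisms agree on whether they are defined; finiteness of $C(\L(x))$ is the only point at which the potential infinitude of $\L$ enters.

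Next, for inputs where both are defined, I would reuse the set-inclusion argument from the proof of Theorem~\ref{thm:MEF-ME}: inspecting that proof, finiteness of $\L$ is used \emph{only} in its termination clause, so the equality $\MEF[p](x) = \ME[p](x)$ holds for arbitrary $\L$. The combinatorial content one needs here is that, for each $\ell \in \L$, the label $\jmath \defAs \bigsqcup\{\ell' \in \L(x) \mid \ell' \canFlowTo \ell\}$ lies in $C(\L(x))$, satisfies $x\project\jmath = x\project\ell$, and is the greatest element of $C(\L(x))$ below $\ell$, whence $\ell \in \jmath\uparrow C(\L(x))$; this lines up the $\ell$-labelled output contributed by $p(x\project\ell)@\ell$ in $\ME[p](x)$ with the one contributed by the summand $p(x\project\jmath)@(\jmath\uparrow C(\L(x)))$ in $\MEF[p](x)$, and vice versa, with disjointness of the neighbourhoods (Lemma~\ref{lemma:disjoint:uparrow}) preventing any loss or double-counting. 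Combining $\MEF[p](x) = \ME[p](x)$ with the identity $\ME[p](x) = p(x)$ supplied by Theorem~\ref{thm:ME-MT-transparent} finishes the argument.

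The hard part is not any individual computation but making the ``$\ME$ as a non-runnable specification'' move rigorous: one must check that Theorem~\ref{thm:MEF-ME} and the \MT-transparency proof of $\ME$ both invoke finiteness of $\L$ purely to guarantee that the relevant unions are finite computations, never in their equational content, so that they remain valid reasoning in the infinite-lattice setting. The small supporting fact that $\bigsqcup\{\ell' \in \L(x) \mid \ell' \canFlowTo \ell\}$ is the greatest element of $C(\L(x))$ below $\ell$ is routine once $C(-)$ and $\_\uparrow\_$ are unfolded, but it is precisely the lemma that makes the matching of summands go through.
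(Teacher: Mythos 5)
Your proposal is correct and follows essentially the same route as the paper: establish $\defined{\MEF[p]}{x} \iff \defined{p}{x}$ via $x = x\project\bigsqcup\L(x)$ in one direction and \MT-security in the other, then observe that the equality argument of Theorem~\ref{thm:MEF-ME} uses finiteness of $\L$ only for termination and hence shows $\MEF$ implements the $\ME$ specification, and finally invoke Theorem~\ref{thm:ME-MT-transparent}. Your explicit identification of $\jmath = \bigsqcup\{\ell' \in \L(x) \mid \ell' \canFlowTo \ell\}$ as the matching summand is a welcome elaboration of a step the paper leaves implicit, but it is not a different argument.
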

\begin{proof}
  By Theorem \ref{thm:MEF-ME} we have that for all $p$ and $x$ it is the case
  that $\MEF[p](x) = \ME[p](x)$ if $\ME[p](x)$ and $\MEF[p](x)$ are both
  defined.
  Furthermore, by Theorem \ref{thm:ME-MT-transparent} we have that $\ME[p](x) = p(x)$ if
  $p$ is $\MT$-secure.
  If the lattice $\L$ is non-finite $\ME$ can only be read as a specification, and not an algorithm.
  However, the argument in Theorem \ref{thm:MEF-ME} suffices to show that $\MEF$ implements the
  $\ME$ specification when it terminates.
  Therefore, it remains only to show that given an $\MT$-secure $p$, it is the case that
  $\defined{\MEF[p]}{x} \iff \defined{p}{x}$.
  If $\defined{\MEF[p]}{x}$ then $\defined{p}{x\project\bigsqcup\L(x)}$ and so
  because $x = x\project\bigsqcup\L(x)$ we have that $\defined{p}{x}$.
  Conversely, if $\defined{p}{x}$ then by $p$ $\MT$-secure we know that $\defined{p}{x\project\ell}$
  for all $\ell$ and so $\defined{\MEF[p]}{x}$.
\end{proof}
\subsection*{Level Assignments}
\propLAC*
\begin{proof}
  If $x \sim_\ell y$ then $C(\L(x)) \sim_\ell C(\L(y))$ and so $\LA(p, x) \sim_\ell \LA(p, y)$.
  Consequently, $\LA(p,\_)$ is \TI-secure.
  By Lemma \ref{lemma:L-secure} we have that $\LA$ is \TI-secure and
  so $\ME_\LA$ is noninterfering.
  To see that $\ME_\LA[p]$ is \MT-terminating, observe that $\LA(p, x\project\ell) \subseteq \LA(p, x)$
  and so if $\defined{\ME_\LA[p]}{x}$ then $\defined{p}{x\project\ell}$ for $\ell \in \LA(p, x)$
  and so $\defined{\ME_\LA[p]}{x\project\ell}$.
\end{proof}
\subsection*{Lemmas for Theorems \ref{thm:E-secure} and \ref{thm:E-transparent}}
Recall the extended interpretation $\ell$-equivalence from Section \ref{sec:level-assignments} to
subsets of $\L$ based on the following definition of $S\project\ell$:
$$
S \project \ell = \{\ \ell' \in L\ |\ \ell' \canFlowTo \ell\ \}
$$
\lemmaLsecure*
\begin{proof}
  Consider any $\ell$, $p$, $x \sim_\ell y$ such that
  $\defined{\LA}{p, x}$ and $\defined{\LA}{p, y}$.
  Clearly, for all $\jmath \canFlowTo \ell$, $\jmath \in \LA(p, x)$ if and only
  if $\jmath \in \LA(p, y)$ and so if this is the case then:
  \begin{alignat*}{2}
       &\sem{\ME_\LA[p]}(x)&&@\jmath\\
    =\ &\sem{p}(x\project\jmath)&&@\jmath\\
    =\ &\sem{p}(y\project\jmath)&&@\jmath\\
    =\ &\sem{\ME_\LA[p]}(y)&&@\jmath
  \end{alignat*}
  And otherwise $\sem{\ME_\LA[p]}(x)@\jmath = \emptyset = \sem{\ME_\LA[p]}(y)@\jmath$.
  As $\ME_\LA[p](x)@\jmath = \ME_\LA[p](y)@\jmath$ for all $\jmath \canFlowTo \ell$ we
  have that $\ME_\LA[p](x) \sim_\ell \ME_\LA[p](y)$ and so $\ME_\LA[p]$ is
  noninterfering.
  Consequently, $\ME_\LA[p]$ is \TI-secure for all $p$ and then so is $\LA$.
\end{proof}
\lemmaLtransparent*
\begin{proof}
  Assume $\LA$ satisfies the stated condition and $p$ is an
  \MT-secure program.
  Clearly, $\defined{\ME_\LA[p]}{x}$ if and only if $\defined{p}{x}$.
  Furthermore $a^\ell \in \sem{p}(x)$ if and only if $a^\ell \in \sem{p}(x\project\ell)$
  which means that $a^\ell \in \sem{\ME_\LA[p]}(x)$.
  For the other direction, if $a^\ell \in \sem{\ME_\LA[p]}(x)$ then
  $a^\ell \in \sem{p}(x\project\ell)$ and so $a^\ell \in \sem{p}(x)$.
\end{proof}


\end{document}